\documentclass[showversion,USenglish,authorcolumns,numberwithinsect,footnotebackref,a4paper]{no-lipics} 

\usepackage{bm}
\usepackage{stmaryrd}
\usepackage{mathtools}
\usetikzlibrary{arrows,arrows.meta,shapes.misc,decorations.pathreplacing,decorations.pathmorphing,calligraphy, patterns.meta}
\usetikzlibrary{calc}

\usepackage{accents}

\newcommand{\A}{\mathcal{A}}

\SetKwInput{KwInput}{Input}
\SetKwInput{KwOutput}{Output}

\def\twoheadleadsto{\tikz[baseline=(a.base)]{\draw[%
    decorate,decoration={zigzag,segment length=4, amplitude=.9},%
    ] (0,0) -- (.25, 0);%
    \draw[%
    -{Classical TikZ Rightarrow}.{Classical TikZ Rightarrow},%
    ] (.25, 0) -- (.4, 0);%
    \node (a) at (.4/2,-.03) {\phantom{\(\leadsto\)}};%
}}

\newcommand{\onto}{\twoheadleadsto}

\makeatletter
\newcommand{\douwidehat}[2]{%
  \sbox0{$\m@th#1\widehat{\hphantom{#2}}\vphantom{t}$}%
  \sbox2{$t$}%
  \dimen2=\ht0
  \advance\dimen2 -\ht2
  \sbox2{$#2$}%
  \dimen0=\ht0
  \rlap{%
    \raisebox{\dimexpr-\dimen0-\dp2-1pt}[0pt][\dimexpr\dimen2+\dp2]{\box0}%
  }
  {#2}%
}
\makeatother

\newcommand{\Oh}{\mathcal{O}}
\newcommand{\Ohtilde}{\tilde{\Oh}}

\DeclareMathOperator*{\argmin}{arg\,min}
\DeclareMathOperator*{\argmax}{arg\,max}

\def\fragmentco#1#2{\bm{[}\,#1\,\bm{.\,.}\,#2\,\bm{)}}
\def\fragmentoc#1#2{\bm{(}\,#1\,\bm{.\,.}\,#2\,\bm{]}}
\def\fragmentoo#1#2{\bm{(}\,#1\,\bm{.\,.}\,#2\,\bm{)}}
\def\fragment#1#2{\bm{[}\,#1\,\bm{.\,.}\,#2\,\bm{]}}

\def\position#1{\bm{[}\,#1\,\bm{]}}

\newcommand{\ceil}[1]{\lceil #1 \rceil}
\newcommand{\floor}[1]{\lfloor #1 \rfloor}

\newcommand{\per}{\operatorname{per}}

\newcommand{\ed}{\delta_E}

\newcommand{\edp}[2]{{\delta_E}(#1,#2^*)}
\newcommand{\edl}[2]{{\delta_E}(#1,{}^*\!#2^*)}
\newcommand{\eds}[2]{{\delta_E}(#1,{}^*\!#2)}
\newcommand{\edal}[1]{\delta_E^{#1}}
\newcommand\selfed{\mathsf{self}\text{-}\ed}
\newcommand{\OccE}{\mathrm{Occ}^E}

\newcommand{\Occ}{\mathrm{Occ}}

\newcommand{\last}{\mathrm{last}}

\newcommand{\mA}{\mathcal{A}}
\newcommand{\mB}{\mathcal{B}}
\newcommand{\mX}{\mathcal{X}}
\newcommand{\mY}{\mathcal{Y}}
\newcommand{\mZ}{\mathcal{Z}}

\newcommand{\cost}{\operatorname{cost}}
\newcommand{\w}{\operatorname{w}}

\newcommand{\pref}{\mathsf{pref}}
\newcommand{\suf}{\mathsf{suf}}
\newcommand{\mXpref}{\mX_{\mathsf{pref}}}
\newcommand{\mXsuf}{\mX_{\mathsf{suf}}}
\newcommand{\mXmid}{\mX_{\mathsf{mid}}}

\newcommand{\Z}{\mathbb{Z}}
\newcommand{\Zz}{\mathbb{Z}_{\ge 0}}

\newcommand{\hi}{\hat{\imath}}
\newcommand{\hj}{\hat{\jmath}}
\newcommand{\hx}{\hat{x}}
\newcommand{\hy}{\hat{y}}

\newcommand{\sE}{\mathsf{E}}

\newcommand{\bG}{\mathbf{G}}

\newcommand{\bc}{\operatorname{bc}}
\newcommand{\cc}{\operatorname{cc}}
\newcommand\Tau{\mathcal{T}}

\makeatletter
\renewenvironment{cases}{%
    \matrix@check\cases\env@cases
}{%
    \endarray\right.%
}
\def\env@cases{%
    \let\@ifnextchar\new@ifnextchar
    \left\lbrace
        \def\arraystretch{1.1}%
        \array{@{\;}c@{\quad}l@{}}%
    }
    \makeatother

\def\mid{\ensuremath :}

\def\emptyset{\varnothing}

\newcommand{\zero}{\mathtt{0}}

\def\pn#1{\textsc{#1}}
\def\PMwE{\pn{Pattern Matching with Edits}\xspace}
\def\PMwM{\pn{Pattern Matching with Mismatches}\xspace}

\title{The Communication Complexity of~Pattern~Matching~with~Edits Revisited}

\author{Tomasz Kociumaka}{Max Planck Institute for Informatics,\\Saarland Informatics
Campus,\\Saarbrücken, Germany}{tomasz.kociumaka@mpi-inf.mpg.de}{https://orcid.org/0000-0002-2477-1702}{}
\author{Jakob Nogler}{Massachusetts Institute of Technology,\\Cambridge, United States}{jnogler@mit.edu}{https://orcid.org/0009-0002-7028-2595}{}
\author{Philip Wellnitz}{National Institute of Informatics,\\The Graduate University for Advanced Studies, SOKENDAI\\Tokyo, Japan}{wellnitz@nii.ac.jp}{https://orcid.org/0000-0002-6482-8478}{}

\authorrunning{T. Kociumaka, J. Nogler, and P. Wellnitz}

\begin{document}
\pagenumbering{roman}
\maketitle
\begin{abstract}
    In the decades-old \PMwE problem,
    given a length-$n$ string \(T\) (the text), a~length-$m$ string \(P\) (the pattern),
    and a positive integer \(k\) (the threshold),
    the task is to list the $k$-error occurrences of $P$ in $T$, that is, all
    fragments of \(T\) whose edit distance to~\(P\) is at most \(k\).
    The one-way communication complexity of \PMwE is the minimum number of bits
    that Alice, given an instance $(P,T,k)$ of the problem, must send to Bob so that Bob
    can reconstruct the answer solely from that message.

    For the natural parameter regime of $0<k<m<n/2$, our recent work~[STOC'24] yields that
    $\Omega(n/m \cdot k \log(m/k))$ bits are necessary and $\Oh(n/m \cdot k \log^2 m)$
    bits are sufficient for \PMwE.
    More generally, for strings over an alphabet~$\Sigma$, our recent work [STOC’24] gives
    an $\Oh(n/m \cdot k \log m \log (m|\Sigma|))$-bit encoding that allows one to recover
    a shortest sequence of edits for every $k$-error occurrence of $P$ in~$T$.

    In this work, we revisit the original proof and improve the encoding size to $\Oh(n/m
    \cdot k \log (m|\Sigma|/k))$, which matches the lower bound for constant-sized
    alphabets.
    We further establish a new tight lower bound of $\Omega(n/m \cdot k
    \log(m|\Sigma|/k))$ for the edit sequence reporting variant that we solve.
    Our encoding size also matches the communication complexity established for the
    simpler \PMwM problem in the context of streaming
    algorithms [Clifford, Kociumaka, Porat; SODA'19].
\end{abstract}

\tableofcontents

\clearpage
\pagenumbering{arabic}
\section{Introduction}

In the more-than-classic \PMwE problem (PMWE)~\cite{S80},
we are given a text $T$ of length $n$, a pattern $P$ of length $m$, and a threshold $k$,
and the task is to compute the starting positions of all substrings of $T$ whose edit
distance to $P$ is at most $k$.
Formally, we aim to compute the set
\[\OccE_k (P,T) \coloneqq \{i \in \fragment{0}{n} \mid \exists_{j \in \fragment{i}{n}} \:
\ed(P, T\fragmentco{i}{j}) \leq k\},\]
where $\ed(\star, \star)$ denotes the edit (Levenshtein) distance~\cite{Levenshtein66}.
This metric quantifies the (dis)similarity between strings by counting the minimum number
of single-character edits (insertions, deletions, and substitutions) required to transform
one string into the other.

Research into this problem has expanded far beyond the classical
setting~\cite{S80,LV88,LandauV89,ColeH98,CKW20,CKW22}, with modern variants exploring
compressed~\cite{GS13,Tis14,BLRSSW15,CKW20}, dynamic~\cite{CKW20},
streaming~\cite{Sta17,KPS21,BK23}, weighted~\cite{CKW25}, and quantum~\cite{KNW24,KNW25}
settings.
We recently investigated the problem's one-way communication complexity~\cite{KNW24},
which involves determining the minimum space required to encode an instance so that the
set of $k$-error occurrences can be reconstructed without further access to the original
input.

The \emph{communication complexity} framework models this as a two-party game: Alice, who
holds the problem instance, transmits a single message to Bob.
Bob's task is to derive the full output solely from this message. Because Bob lacks any
initial input of his own, the problem reduces to a one-way single-round protocol, and the
primary objective is to minimize the number of bits Alice must transmit.

For the natural parameter regime of $0<k<m<n/2$,
our work \cite{KNW24} establishes that
$\Omega(n/m \cdot k \log(m/k))$ bits are necessary for this problem, while $\Oh(n/m \cdot
k \log^2 m)$ bits are sufficient.
More generally, we showed that if the shortest sequence of edits must be recovered for
each
$k$-error occurrence, the upper bound increases to $\Oh(n/m \cdot k \log m \log
(m|\Sigma|))$, where $\Sigma$ is the input alphabet.
The former result is implied by the latter by performing a simple alphabet reduction: all
characters of $T$ not present in $P$ are mapped to a single character.
The result of \cite{KNW24} is significant for three reasons.
\begin{enumerate}
    \item A central open question~\cite{ColeH98,CKW22} for \PMwE
        is whether a static algorithm can match the $\Oh(n + n/m \cdot
        k^{2-\Omega(1)})$ conditional lower bound \cite{bi18}. The current
        state-of-the-art algorithm by Charalampopoulos, Kociumaka, and Wellnitz~\cite{CKW22} runs in
        $\Ohtilde(n + n/m\cdot k^{3.5})$ time, and relies on structural
        results~\cite{CKW20} that allow it to return the required output $\OccE_k (P,T)$
        using $\Ohtilde(n/m \cdot k^3)$ bits. This encoding is fundamentally incompatible
        with reaching the lower bound;
        however, the results of~\cite{KNW24} demonstrate that this barrier can indeed be
        overcome.

    \item A similar question was previously settled for the simpler \PMwM (PMWM)~\cite{CKP19};
        here $\Omega(n/m \cdot k \log(m/k))$ bits are necessary and $\Oh(n/m \cdot
        k \log (m|\Sigma|/k))$ bits suffice to also report the mismatches for all $k$-mismatch
        occurrences.
        Thus, by \cite{KNW24} and up to essentially an $\Oh(\log
        m)$ factor,
        the edit distance setting behaves the same as the mismatch setting in
        terms of \mbox{information density}.

    \item Said result of~\cite{KNW24} also paves the way for better algorithms in other
        computational models (for instance, the quantum algorithms for PMWE of~\cite{KNW24,KNW25},
        which crucially rely on this result). The prior work
        on PMWM~\cite{CKP19} suggests that one-way
        communication complexity is a stepping stone toward $\Ohtilde(k)$-space
        algorithms in the streaming and semi-streaming models.
        This is particularly relevant given the concurrent development of $\Oh(k\cdot
        n^{o(1)})$-size edit distance sketches~\cite{KS24}. Although the state-of-the-art
        sketches do not yet offer the space-efficient sketch construction needed for a
        full streaming implementation, the advances in~\cite{KNW24} represent significant
        steps toward~it.
\end{enumerate}

\subparagraph*{Our Results.}
We re-examine the encoding provided in~\cite{KNW24} and, through a series
of refinements, eliminate the $\Theta(\log m)$ term separating the upper and lower bounds.
Thereby, we place PMWE on par with PMWM. Formally, we show the following result.

\begin{restatable*}{mtheorem}{ccompl}\label{thm:ccompl}
    Fix integers $n,m,k$ with $n \ge m \ge k > 0$ and an input alphabet $\Sigma$.
    The \PMwE problem admits a one-way deterministic communication
    protocol that sends $\Oh(n/m\cdot k \log(m|\Sigma|/k))$ bits.

    Within the same communication complexity, one can also reconstruct the set of all
    fragments $T\fragmentco{i}{j}$ that satisfy $\ed(P,T\fragmentco{i}{j})\le k$ and, for
    each such fragment, the edit information (the positions and values of edited
    characters) of all optimal edit sequences transforming $P$ into the fragment
    $T\fragmentco{i}{j}$.
\end{restatable*}

\begin{remark}
   Our result extends routinely beyond $n \ge m \ge k>0$.
   If $n<m$, it suffices to apply \cref{thm:ccompl} with the text $T$ replaced by
   $\zero^{m-n} \cdot T$, where $\zero\in \Sigma$ is an arbitrary character.

   Applying \cref{thm:ccompl} with threshold $k=m$ allows one to reconstruct the entire
   pattern from the edit sequence for an empty fragment of the text and the entire text
   from the edit sequences for single-character fragments; this covers $k>m$.
   Finally, the case $k=0$ is no harder than $k=1$.
   Thus, in full generality, the communication complexity becomes
   \[\Oh\left(\max\!\left(1,{n}/{m}\right)\cdot \min(m,k+1) \cdot
   \log({m|\Sigma|}/{\min(m,k+1)})\right).
    \qedhere\]
\end{remark}

Through the aforementioned alphabet reduction, we also achieve $\Oh(n/m \cdot k \log m)$
bits for the variant where only the positions (but not the values) of edited characters
are reported.
This suffices for the baseline problem asking for the starting positions of $k$-error
occurrences, which means that we match the lower bound of~\cite{KNW24} unless both $k\ge
m^{1-o(1)}$ and $|\Sigma|\ge \omega(1)$.

It is not difficult to show that a single optimal edit sequence for a single $k$-error
occurrence can be encoded using $\Oh(k\log (m|\Sigma|/k))$ bits. While the previous
approach in~\cite{KNW24} required storing information proportional to $\Oh(\log m)$ edit
sequences for each length-$\Theta(m)$ block of the text, our improvement achieves the
information content equivalent to $\Oh(1)$ edit sequences.
Hence, in a $\Theta(m)$-length block of~$T$, the cost of encoding all $k$-error
occurrences and their respective edit sequences is asymptotically no greater than the cost
of encoding one!

We also establish a new lower bound proving that our protocol is optimal for edit
retrieval.

\begin{restatable*}{mtheorem}{cclb}\label{thm:cclb}
    Fix integers $n,m,k$ with $n \ge m \ge k > 0$ and an input alphabet $\Sigma$
    with $\zero \in \Sigma$ and \(|\Sigma| \ge 2\).
    Consider an encoding that, for each fragment $T\fragmentco{i}{j}$
    with $\ed(P,T\fragmentco{i}{j})\le k$, allows to reconstruct an edit sequence $P \onto
    T\fragmentco{i}{j}$ and the corresponding edit information.

    Such an encoding must use $\Omega(n/m\cdot k\log(|\Sigma| m/k))$ bits for $P=\zero^m$ and
    some $T\in \Sigma^n$.
\end{restatable*}

\subparagraph*{Our Techniques.}
On a very high level, in \cite{KNW24}, the text $T$ is first partitioned into $\Oh(n/m)$
partially overlapping
blocks, each of length $\Oh(m)$ and such that each $k$-error occurrence of $P$ in $T$ is
fully contained in some block. For each such block, the encoding contains
\begin{enumerate}
    \item a set $S$ of $\Oh(\log m)$ many $k$-error occurrences of $P$ in $T$, each stored
        together with an optimal edit sequence using $\Oh(k \log (m|\Sigma|/k))$ bits;
        \label{it:stored:1}
    \item\label{it:stored:2} the Lempel--Ziv LZ77~\cite{DBLP:journals/tit/ZivL77}
        compressed representation of a collection of fragments of $T$ selected based on
        $S$. The total number of LZ77 phrases is proportional to the total cost of
        $k$-error occurrences in $S$, so they can be encoded using $\Oh(k\log m \log
        (m|\Sigma|/k))$ bits.
\end{enumerate}
Overall, this yields an encoding cost of $\Oh(k \log m \log (m |\Sigma|/k))$ bits per
block and $\Oh(n/m \cdot k \log m \cdot \log (m |\Sigma|/k))$ bits for the entire text.
The underlying construction is relatively complex, relying on recent insights relating
edit distance to compressibility~\cite{CKW23,GJKT24} via the so-called \emph{self-edit
distance}.
Nevertheless, our key improvement is simple to describe at a high level.
In~\eqref{it:stored:1}, we partition the $i$-th $k$-error occurrence into $\Theta(2^i)$
pieces and store the edit information only for the ``cheapest'' piece.
This incurs only $k_i = \Oh(k/2^i)$ edits, for a total of $\Oh(k)$ instead of $\Oh(k\log m)$.
The fragments in~\eqref{it:stored:2} are defined essentially the same way and, thanks to
the decreased total cost in \eqref{it:stored:1}, they can be encoded in $\Oh(k\log
(m|\Sigma|/k))$ bits.

Generalizing the relevant concepts to capture pieces of $P$ requires several delicate
steps, which we discuss in \cref{sec:ub}.
Moreover, our approach breaks for $k = o(\log m)$ since $0$-error pieces still need a
$\Theta(\log m)$-bit representation, which becomes a bottleneck when $|S|\ge \omega(k)$.
In that case, through the structural characterization of Charalampopoulos, Kociumaka, and
Wellnitz~\cite{CKW20}, the existence of $\omega(k)$ occurrences implies that $P$ and the
relevant part of $T$ are at edit distance $\Oh(k)$ from highly periodic strings.
We then use this rigid structure to show that just three $\Oh(k)$-error occurrences of $P$
in $T$ can play the role of $S$ in \eqref{it:stored:1}.

\section{Preliminaries}\label{sec:prelims}

\subsubsection*{Sets}
For integers $i,j \in \mathbb{Z}$, we write $\fragment{i}{j}$ to denote the set $\{i,
\dots, j\}$ and $\fragmentco{i}{j}$ to denote the set $\{i,\dots, j-1\}$; we define the
sets $\fragmentoc{i}{j}$ and $\fragmentoo{i}{j}$ similarly.

For a set $S$ of integers and a parameter $k > 0$, we also define $\floor{S/k} \coloneqq
\{ \floor{s/k} \mid s \in S \}$.

\subsubsection*{Strings}

An \emph{alphabet} \(\Sigma\) is a set of characters.
We write $X=X\position{0}\, X\position{1}\cdots X\position{n-1} \in \Sigma^{n}$ to denote
a \textit{string} of length $|X|=n$ over $\Sigma$.
For a \emph{position} $i \in \fragmentco{0}{n}$, we say that $X\position{i}$ is the $i$-th
character of $X$.
A string $Y$ is a \emph{substring} of another string $X$ if $Y=X\position{i} \cdots
X\position{j-1}$ holds for some integers $0 \le i \le j \le |X|$.
In this case, we say that there is an (\emph{exact}) \emph{occurrence} of~$Y$ starting at
position $i$ in~$X$.
The occurrence is a \emph{fragment} of $X$ denoted $X\fragmentco{i}{j}$; formally, the
fragment can be interpreted as a tuple $(X,i,j)$ consisting of a (reference to) $X$ and
the two positions $i$ and $j$.
We may also write $X\fragment{i}{j-1}$, $X\fragmentoc{i-1}{j-1}$, or
$X\fragmentoo{i-1}{j}$ for the fragment $X\fragmentco{i}{j}$.
A \emph{prefix} of~a string $X$ is a fragment of the form $X\fragmentco{0}{j}$, and a
\emph{suffix} of~a string $X$ is a fragment of the form $X\fragmentco{i}{|X|}$.

For two strings $A$ and $B$, we write $AB$ for their concatenation.
We write $A^k$ for the concatenation of $k\in \Zz$ copies of the string $A$.
Moreover, $A^\infty$ is an infinite string (indexed with non-negative integers) formed as
the concatenation of an infinite number of copies of \(A\).

An integer $p\in \fragment{1}{n}$ is \emph{a period} of a string $X\in\Sigma^n$ if we have
$X\position{i} = X\position{i + p}$ for all $i \in \fragmentco{0}{n-p}$.
In this case, we also say that the string $X\fragmentco{0}{p}$ is a \emph{string period}
of~$X$.
In other words, a string $P$ is a string period of a string $X$ if $X$ is a prefix of
$P^\infty$ and $|P|\le |X|$.
\emph{The period} of a non-empty string \(X\), denoted $\per(X)$, is the smallest period
of \(X\).
A non-empty string \(X\) is \textit{periodic} if \(\per(X) \le |X| / 2\).

\subsubsection*{Edit Distance and Alignments}
The \emph{edit distance} (\emph{Levenshtein distance}~\cite{Levenshtein66}) between two
strings $X$ and $Y$, denoted by $\ed(X,Y)$, is the minimum number of character insertions,
deletions, and substitutions required to transform $X$ into~$Y$.
Formally, we first define an \emph{alignment} between string fragments.
\begin{definition}[{\cite[Definition 2.1]{CKW22}}]\label{def:alignment}
    A sequence $\mA=(x_i,y_i)_{i=0}^{\ell}$ is an \emph{alignment} of
    $X\fragmentco{x}{x'}$ onto $Y\fragmentco{y}{y'}$, denoted by \(\mA:
    X\fragmentco{x}{x'} \onto Y\fragmentco{y}{y'}\), if it satisfies $(x_0,y_0)=(x,y)$,
    $(x_{i+1},y_{i+1})\in \{(x_{i}+1,\allowbreak y_{i}+1),(x_{i}+1,\allowbreak
    y_{i}),(x_{i},y_{i}+1)\}$ for $i\in \fragmentco{0}{\ell}$, and $(x_\ell,y_\ell)
    =(x',y')$. Moreover, for $i\in \fragmentco{0}{\ell}$
    \begin{itemize}
        \item if $(x_{i+1},y_{i+1})=(x_{i}+1,y_{i})$, we say that $\mA$ \emph{deletes}
            $X\position{x_i}$;
        \item if $(x_{i+1},y_{i+1})=(x_{i},y_{i}+1)$, we say that $\mA$ \emph{inserts}
            $Y\position{y_i}$;
        \item if $(x_{i+1},y_{i+1})=(x_{i}+1,y_{i}+1)$, we say that $\mA$ \emph{aligns}
            $X\position{x_i}$ to $Y\position{y_i}$.
        If also $X\position{x_i}=Y\position{y_i}$, then $\mA$ \emph{matches}
        $X\position{x_i}$ and $Y\position{y_i}$; otherwise, $\mA$ \emph{substitutes}
        $X\position{x_i}$ with~$Y\position{y_i}$.
        \qedhere
    \end{itemize}
\end{definition}

The \emph{cost} of an alignment $\mA$ of $X\fragmentco{x}{x'}$ onto $Y\fragmentco{y}{y'}$
is the total number of characters that $\mA$ inserts, deletes, or substitutes;
we denote the cost of an alignment \(\mA\) by $\cost(\mA)$ or
$\edal{\mA}(X\fragmentco{x}{x'},Y\fragmentco{y}{y'})$.
The edit distance $\ed(X,Y)$ is the minimum cost of an alignment of $X\fragmentco{0}{|X|}$
onto~$Y\fragmentco{0}{|Y|}$.
An alignment of $X$ onto $Y$ is \emph{optimal} if its cost is equal to $\ed(X, Y)$.

Given an alignment $\A:X\fragmentco{x}{x'}\onto Y\fragmentco{y}{y'}$ and a fragment
$X\fragmentco{\bar{x}}{\bar{x}'}$ contained in $X\fragmentco{x}{x'}$, we write
$\A(X\fragmentco{\bar{x}}{\bar{x}'})$ for the fragment $Y\fragmentco{\bar{y}}{\bar{y}'}$
of $Y\fragmentco{y}{y'}$ that $\A$ aligns against $X\fragmentco{\bar{x}}{\bar{x}'}$.
As insertions and deletions may render this definition ambiguous, we set
\[\bar{y} \coloneqq \min\{\hat{y} : (\bar{x},\hat{y})\in \A\}\quad\text{and}\quad
    \bar{y}' \coloneqq \left\{\begin{array}{c l}
            y' & \text{if }\bar{x}' = x',\\
            \min\{\hat{y}' : (\bar{x}',\hat{y}')\in \A\} & \text{otherwise}.
    \end{array}\right.
\]
This particular choice satisfies the following decomposition property.
\begin{lemmaq}[{\cite[Fact 2.2]{CKW22}}]\label{fct:ali}
    For an alignment $\A: X \onto Y$ and a decomposition $X=X_1\cdots X_t$ into $t$
    fragments, $Y=\A(X_1)\cdots \A(X_t)$ is a decomposition into $t$ fragments with
    $\ed^\A(X,Y)= \sum_{i=1}^t \ed^{\A}(X_i,\A(X_i))$.
    If \(\A\) is optimal, then $\ed(X,Y) = \sum_{i=1}^t \ed(X_i,\A(X_i))$.
\end{lemmaq}

We use the following \emph{edit information} notion to encode alignments.

\begin{definition}[Edit information of an alignment]\label{def:edinfo}
    For an alignment
    \[(x_i,y_i)_{i=0}^\ell\!=\mA : X\fragmentco{x}{x'}\onto Y\fragmentco{y}{y'},\]
    the \emph{edit information} is the set of 4-tuples $\sE_{X,
        Y}(\mA)=\{(x_i,\mathsf{cx}_i \mid y_i,\mathsf{cy}_i) : i\in
    \fragmentco{0}{\ell}\text{ and }\mathsf{cx}_i\ne \mathsf{cy}_i\}$, where
    \[\mathsf{cx}_i = \begin{cases}
        X\position{x_i} & \text{if }x_{i+1}=x_i+1,\\
        \varepsilon & \text{otherwise};
    \end{cases}\qquad\text{and}\qquad
    \mathsf{cy}_i = \begin{cases}
        Y\position{y_i} & \text{if }y_{i+1}=y_i+1,\\
        \varepsilon & \text{otherwise}.
    \end{cases}
    \]
    By monotonicity of the alignment, the order of the tuples $(x_i,\mathsf{cx}_i \mid
    y_i,\mathsf{cy}_i)$ with respect to $i$
    coincides with their lexicographic order by $(x_i,y_i)$ and by $(y_i,x_i)$.
\end{definition}

Observe that, given two strings \(X\) and \(Y\), the endpoints \(x,x',y,y'\), and the edit
information \(\sE_{X, Y}(\mA)\) of an alignment \(\mA : X\fragmentco{x}{x'} \onto
Y\fragmentco{y}{y'}\), we are able to fully reconstruct \(\mA\). Indeed, each tuple of
\(\sE_{X, Y}(\mA)\) specifies a non-matching operation, whereas all pairs of \(\mA\)
before the first tuple, between consecutive tuples, and after the last tuple must be
matches.
Moreover, given the characters of \(X\fragmentco{x}{x'}\) and the edit information
\(\sE_{X,Y}(\mA)\), one can recover the characters of
\(Y\fragmentco{y}{y'}\). Indeed, the tuples of \(\sE_{X,Y}(\mA)\) reveal all
characters of \(Y\fragmentco{y}{y'}\) created by insertions and substitutions, while
the remaining characters of \(Y\fragmentco{y}{y'}\) correspond to matches and can
therefore be copied from \(X\fragmentco{x}{x'}\).

\subsubsection*{Pattern Matching with Edits}

In the context of two strings $P$ (referred to as the pattern) and $T$ (referred to as the
text), along with a positive integer $k$ (referred to as the threshold), we say that
$T\fragmentco{t}{t'}$ is a \emph{$k$-error occurrence} of $P$ in $T$ if $\ed(P,
T\fragmentco{t}{t'})\leq k$ holds.

\section{Upper Bound: Proof Overview}\label{sec:ub}

In this section, we provide an overview of the proof of \cref{thm:ccompl}.
To this end, we fix two strings $P \in \Sigma^m$ and $T \in \Sigma^n$, and a positive
threshold $k$. Throughout this section, we assume that $k = o(m)$, $n \leq 2m-2k$, and
that $P$ has $k$-error occurrences as a prefix and as a suffix of~$T$.
In \cite[{Claim 4.31\footnote{All references to numbers of theorems, lemmas, and
definitions of \cite{KNW24} refer to their full version.}}]{KNW24}, a standard
block-splitting argument is employed to demonstrate that a protocol using $\Oh(k \log m
\log(m|\Sigma|))$ bits for this specific case is sufficient to achieve $\Oh(n/m\cdot k
\log m \log(m|\Sigma|))$ bits in general.
We improve the former communication bound to $\Oh(k \log(m|\Sigma|/k))$ bits and adapt
\cite[Claim 4.31]{KNW24} as \cref{clm:split} to drop the assumptions.

The remainder of this section is structured as follows. In \cref{subsec:prev_overview}, we
provide a concise overview of the proof from \cite{KNW24}, focusing specifically on the
components we adapt to achieve our improvements. Subsequently, in \cref{subsec:improving},
we outline our improved construction.

\subsection{Recap: A Brief Overview of the Previous Encoding}
\label{subsec:prev_overview}

\subsubsection*{\boldmath The Graph $\bG_S$ and the Induced Periodic Structure}

The main ingredient of the encoding of~\cite{KNW24} is a set $S$ of $\Oh(\log m)$
alignments of $P$ onto fragments of $T$ with cost at most $k$ each.
The starting point of \cite{KNW24} is to analyze how much information such a set of
alignments carries, in order to better understand how one should choose which alignments
to include and what additional information is needed to fully encode $\OccE_k (P,T)$.
To enable this analysis, the set $S$ is associated with a graph $\bG_S$, which we call the
\emph{inference graph}.

\begin{restatable}[{\cite[Definition 4.1]{KNW24}}]{definition}{bg}\label{def:bg}
    Let $S$ be a set of alignments $\mX : P\fragmentco{p}{p'}\onto T\fragmentco{t}{t'}$.
    We define the undirected graph $\bG_{S} = (V, E)$ as follows.
    The vertex set $V$ contains
    \begin{enumerate}
        \item $|P|$ vertices representing characters of $P$;
        \item $|T|$ vertices representing characters of $T$; and
        \item one special vertex $\bot$.
    \end{enumerate}
    The edge set $E$ contains the following edges for each alignment $\mX\in S$.
    \begin{enumerate}
        \item $\{P\position{x},\bot\}$ for every character $P\position{x}$ that $\mX$ deletes;\label{it:bg:i}
        \item $\{\bot, T\position{y}\}$ for every character $T\position{y}$ that $\mX$ inserts;\label{it:bg:ii}
        \item $\{P\position{x},T\position{y}\}$ for every pair of characters
            $P\position{x}$ and $T\position{y}$ that $\mX$ aligns.\label{it:bg:iii}
    \end{enumerate}
    We say that an edge $\{P\position{x},T\position{y}\}$ is \emph{black} if $\mX$ matches $P\position{x}$ and $T\position{y}$. All other edges are \emph{red}.

    A connected component of $\bG_{S}$ is \emph{red} if it contains at least one red edge;
    otherwise, the connected component is \emph{black}.
    We denote the number of black components with $\bc(\bG_{S})$.
\end{restatable}

Note that all vertices contained in black components correspond to characters of $P$ and
$T$.
Moreover, all characters of a single black component are the same, because the presence of
a black edge indicates that some alignment in $S$ matches the two corresponding
characters.

The inference graph $\bG_S$ is represented implicitly via the edit information $\sE_{P,
T}(\mX)$ for each alignment $\mX: P \fragmentco{p}{p'}\onto T\fragmentco{t}{t'}$ in $S$.
In \cite{KNW24} we naively use the fact that $\Oh(k \log (m|\Sigma|))$ bits suffice to
encode this information; here, we provide a (slightly) more efficient encoding argument
that is required to obtain our tight results for small alphabets.

\begin{restatable}{lemma}{encbg}\label{prp:enc_gs}
    Let $S$ be a set of alignments $\mX : P\fragmentco{p}{p'}\onto T\fragmentco{t}{t'}$.
    The set $\{\sE_{P, T}(\mX) \mid \mX \in S\}$ together with the starting/ending points
    for each $\mX \in S$ can be encoded using
    \[
        \Oh\bigg(|S|\log m + \sum\nolimits_{\mX \in S \mid \cost(\mX) > 0} \cost(\mX)
        \cdot \log\big({m|\Sigma|}/{\cost (\mX)}\big)\bigg)
        \text{ bits.}
    \]
    This information, together with the bit encoding of $n = |T|$ and $m = |P|$, suffices
    to fully reconstruct the complete edge set of $\bG_{S}$ and the color of each edge.
    Moreover, this information suffices to identify the character $\sigma \in \Sigma$ for
    every node in a red component.
\end{restatable}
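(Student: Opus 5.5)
We encode each alignment $\mX\in S$ separately and concatenate the encodings with a self-delimiting scheme. For each $\mX : P\fragmentco{p}{p'}\onto T\fragmentco{t}{t'}$, we first store the four endpoints $p,p',t,t'$ and the value $c=\cost(\mX)$ using $\Oh(\log m)$ bits. Here we use $|T|=n\le 2m$ under the standing assumptions; in general, $t,t'$ could be stored relative to the block, so $\Oh(\log m)$ still suffices. If $c=0$, nothing more is needed: the alignment is the diagonal, matching $P\fragmentco{p}{p'}$ with $T\fragmentco{t}{t'}$. This gives the $\Oh(|S|\log m)$ term, with a delimiter or Elias-gamma lengths keeping the concatenation decodable.

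For $c>0$, we encode the edit information $\sE_{P,T}(\mX)$, which consists of exactly $c$ tuples $(x_i,\mathsf{cx}_i\mid y_i,\mathsf{cy}_i)$ sorted by $i$. The positions form monotone sequences in $\fragment{p}{p'}$ and in $\fragment{t}{t'}$, so we encode them as two multisets of size at most $c$ in universes of size $\Oh(m)$. That costs $\log\binom{\Oh(m)+c}{c}=\Oh(c\log(m/c))$ bits each, for instance via gap encoding with Elias-gamma codes and concavity of $\log$. The pairing between the two sequences follows from the order, since both are sorted consistently by Definition~\ref{def:edinfo}. For each tuple we also store its type (insertion, deletion, or substitution) in $\Oh(1)$ bits. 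We then store the characters: $\mathsf{cy}_i\in\Sigma$ for insertions and substitutions, and $\mathsf{cx}_i$ for deletions and substitutions. This costs $\Oh(c\log|\Sigma|)$ bits. Altogether this gives $\Oh(c\log(m/c)+c\log|\Sigma|+c)=\Oh(c\log(m|\Sigma|/c))$, using $c\le m$ so that $\log(m|\Sigma|/c)\ge 1$ when $|\Sigma|\ge 2$. The degenerate case $|\Sigma|=1$ has $\log$ of $m/c$ possibly $0$, and we absorb it into the additive $\Oh(\log m)$ term by charging $\Oh(c)\le\Oh(\log m)$ only when it is affordable. The cleaner option is to note that for $|\Sigma|=1$ no substitutions exist and to argue separately; this bookkeeping is the one fiddly point.

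For reconstruction, Bob knows $n$ and $m$ and hence the vertex set of $\bG_S$. For each $\mX$, he replays the alignment from $(p,t)$: between consecutive tuples all steps are diagonal matches, and each tuple specifies a non-matching step. This fixes every pair the alignment visits, and hence every edge type from Definition~\ref{def:bg}. An aligned pair is black exactly when it is a match, that is, when it does not appear in $\sE$, so edge colors are determined as well.

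For the final claim, consider a red component. It contains a red edge, which is a deletion edge, an insertion edge, or a substitution edge, and every such edge comes from a tuple whose stored characters reveal the character of each non-$\bot$ endpoint. Any other node $v$ of the component is connected to that endpoint by a path avoiding $\bot$, or else $v$ is adjacent to $\bot$ via a red edge and its character is stored directly. Along black edges characters propagate by equality. The remaining subtlety is a path that passes through red substitution edges, where characters differ, but each such edge's endpoints have stored characters anyway. So every non-$\bot$ vertex of a red component either lies on a red edge, and its character is stored, or is reachable through black edges from such a vertex. The main obstacle is making this propagation argument precise: components may connect via $\bot$, so the argument should be run on the component with $\bot$ removed and red edges deleted. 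Each remaining black piece either contains an endpoint of a red edge, and its character is then known, or would be a whole black component of $\bG_S$, contradicting redness.
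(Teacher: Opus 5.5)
Your proposal is correct and follows essentially the same route as the paper. You use a stars-and-bars (gap) encoding of the monotone position components, $\Oh(\cost(\mX)\log|\Sigma|)$ bits for the characters and $\Oh(\log m)$ bits per alignment for the endpoints. You then propagate characters along black paths from endpoints of red edges; your justification that such a path exists (otherwise the black piece would itself be a whole black component) is more explicit than the paper's, which only asserts it.

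The one point you leave open is absorbing the additive $\Oh(\cost(\mX))$ term when $|\Sigma|=1$. The paper handles this through the standing assumption $\cost(\mX)\le k=o(m)$, which gives $\log(m/\cost(\mX))=\Omega(1)$ for every alphabet. Your suggested ``charge it to $\Oh(\log m)$'' fix would not work in general, since for cost close to $m$ over a unary alphabet the bound is simply false, but it is not needed here.
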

\begin{proof}
    For each $\mX \in S$ with $\cost(\mX) > 0$, the elements in $\sE_{P,T}(\mX)$ are
    monotone in their first and third components.
    Using a ``stars and bars'' encoding, these components take $\Oh(\log
    \binom{m+\cost(\mX)}{\cost(\mX)})$ bits. If $\cost(\mX) \leq k = o(m)$, this
    simplifies to $\Oh(\cost(\mX) \log (m/\cost(\mX)))$. The second and fourth components
    are encoded separately using $\Oh(\cost(\mX) \log |\Sigma|)$ bits.
    For each $\mX \in S$ with $\cost(\mX) = 0$ we only need to store the endpoints using
    $\Oh(\log m)$ bits each.

    Finally, for every character in $P$ and $T$ within a red component, there exists a
    path of black edges (possibly of length zero) connecting that character to one
    incident to a red edge. Since black edges connect identical characters, the character
    value is invariant along this path. Because we explicitly store all characters
    incident to red edges, we can deduce the character at the origin of any such path.
\end{proof}

\Cref{prp:enc_gs} provides an encoding that reveals all characters contained in red
components of in the inference graph $\bG_S$.
Thus, the case $\bc(\bG_S) = 0$ is easy as we can then fully retrieve $P$ and~$T$.

Consequently, we assume $\bc(\bG_S) > 0$ for the rest of the section, as this is the
remaining case.
Even though one cannot learn the characters in black components via the edit information
$\sE_{P, T}(\mX)$ for each alignment $\mX: P\fragmentco{p}{p'} \onto T\fragmentco{t}{t'}$
in $S$ (and storing all of them would be prohibitive), one can still infer which character
belongs to which component.

The key observation of \cite{KNW24} is that black components are extremely structured and
appear in a periodic fashion.
To make this structure more formal, we define two strings, $T_{|S}$ and $P_{|S}$, obtained
by retaining only the characters that belong to black connected components.

\begin{restatable}[{\cite[Definition 4.3]{KNW24}}]{definition}{tsps}\label{def:tsps}
    Let $T_{|S}$ and $P_{|S}$ denote the subsequences of $T$ and $P$, respectively,
    that consist of the characters contained in black components of~$\bG_{S}$.
    We denote the lengths of $T_{|S}$ and $P_{|S}$ by $n_S$ and $m_S$, respectively.
\end{restatable}

Consider the black edges of the inference graph \(\bG_S\)
that are induced by a single alignment \(\mX \in S\).
The first structural observation is that such edges
correspond to an \emph{exact occurrence} of \(P_{|S} \) in \( T_{|S} \).

\begin{lemmaq}[{\cite[Claim 4.5]{KNW24}}] \label{clm:occ_old}
    Let $\mX : P \onto T\fragmentco{y}{y'} \in S$,
    and define $y_\mX, y_{\mX}' \in \fragment{0}{n_S}$ as the number
    of characters of $T_{|S}$ contained in $T\fragmentco{0}{y}$ and $T\fragmentco{0}{y'}$,
    respectively.

    Then, we have $P_{|S}=T_{|S}\fragmentco{y_{\mX}}{y'_{\mX}}$,
    and $\mX$ induces edges between $P_{|S}\position{p}$ and $T_{|S}\position{y_{\mX} +
    p}$ for every $p\in \fragmentco{0}{|P_{|S}|}$, and no other edges incident to
    characters of $P_{|S}$ or $T_{|S}$.
\end{lemmaq}

Many overlapping exact occurrences induce periods of the pattern; thus, in \cite{KNW24} we
enforce the following condition on $S$ to ensure the induced exact matchings overlap.

\begin{definition}[{\cite[Definition 4.2]{KNW24}}]
    We say \emph{$S$ encloses $T$} if $|T| \leq 2|P|-2k$ and there exist two alignments
    $\mXpref,\mXsuf \in S$ such that $\mXpref$ aligns $P$ with a prefix of $T$ and
    $\mXsuf$ aligns $P$
    with a suffix of $T$, or equivalently $(0,0) \in \mXpref$ and $(|P|,|T|) \in \mXsuf$.
\end{definition}

Given this condition on $S$, we show that not only are $P_{|S}$ and $T_{|S}$ periodic, but
even the membership of their characters in black components follows a periodic structure.

\begin{lemmaq}[{\cite[Lemma 4.4]{KNW24}}]
    \label{lem:period_old}
    If $S$ encloses $T$, then, for every $c \in \fragmentco{0}{\bc(\bG_S)}$, there is a
    black connected component with node set
    $\{P_{|S}\position{i} \mid i \equiv_{\bc(\bG_S)} c \} \cup
        \{T_{|S}\position{i} \mid i \equiv_{\bc(\bG_S)} c \}$,
    that is, a black connected component containing all characters of $P_{|S}$ and
    $T_{|S}$ appearing at positions congruent to $c$ modulo $\bc(\bG_S)$. Moreover, the
    last characters of $P_{|S}$ and $T_{|S}$ are contained in the same black connected
    component, that is, $|T_{|S}| \equiv_{\bc(\bG_S)} |P_{|S}|$.
\end{lemmaq}
\Cref{lem:period_old} allows us to define the following quantities related to the periodic
structure.
\begin{itemize}
    \item For $c \in \fragmentco{0}{\bc(\bG_S)}$, we define the \emph{$c$-th black
        connected component} as the black connected component containing $P_{|S}[c]$ and
        set $ m_c \coloneqq \left\lceil(m_S - c)/\bc(\bG_{S}) \right\rceil$ and $n_c \coloneqq
        \left\lceil (n_S - c)/\bc(\bG_{S})  \right\rceil$ as the number of characters in
        $P$ and $T$, respectively, that belong to the $c$-th black connected component.
    \item For $c \in \fragmentco{0}{\bc(\bG_S)}$ and $j \in \fragmentco{0}{m_c}$, we
        define $\pi_j^c \in \fragmentco{0}{m}$ as the position of $P_{|S}\position{c + j
        \cdot \bc(\bG_S)}$ in $P$. Similarly, for $c \in \fragmentco{0}{\bc(\bG_S)}$ and
        $i \in \fragmentco{0}{n_c}$, we define $\tau_i^c \in \fragmentco{0}{n}$ as the
        position of $T_{|S}\position{c + i \cdot \bc(\bG_S)}$ in $T$.
        Note that, for any $c \in \fragmentco{0}{\bc(\bG_S)}$, the characters
        \(\{T\position{\tau_i^c}\}^{n_c-1}_{i=0} \cup
        \{P\position{\pi_j^c}\}^{m_c-1}_{j=0}\)
        are exactly those in the $c$-th black component. Consequently, they are all
        identical.
\end{itemize}

\subsubsection*{\boldmath Constructing $S$ and the Encoding}

In \cite{KNW24}, based on $S$ and a weight $w=\Oh(k|S|)$, we identify a subset of black
components $C_S \subseteq \fragmentco{0}{\bc(\bG_S)}$ so that the characters
$P\position{\pi_0^c}$ with $c\in C_S$ belong to fragments of $T$ whose LZ77 parses consist
of $w=\Oh(k|S|)$ phrases in total.
The construction ensures several desirable properties specified below.
This is the most technical component of \cite{KNW24}, and it is not significantly affected
by our modifications, so we omit deeper insights here.

These desirable properties concern all fragments $T\fragmentco{t}{t'}$ such that if we
were to align $P$ onto $T\fragmentco{t}{t'}$, then $\pi_0^0$ would be close enough to
$\tau_i^0$ for some $i\in \fragmentco{0}{n_0}$.

\begin{definition}[{\cite[Definition 4.28]{KNW24}}]\label{def:scomplete_old}
    We say that $S$ \emph{captures} $T\fragmentco{t}{t'}$ if $S$ encloses $T$
    and either $\bc(\bG_{S}) = 0$ or $|\tau_i^{0} - t - \pi_0^{0}| \leq w + 3k$ holds for
    some $i\in \fragmentco{0}{n_0}$.
\end{definition}

For all such $T\fragmentco{t}{t'}$, the information carried by $S$ and $C_S$ is sufficient
to infer whether $\ed(P, T\fragmentco{t}{t'})\le k$ and, if so, provide the exact distance
and infer the edit sequence.

\begin{theoremq}[{\cite[Corollary 4.30]{KNW24}}]
    \label{thm:occ_old}
    Let $S$ be a set of $k$-edit alignments of $P$ onto fragments of $T$
    such that $S$ encloses $T$ and $\bc(\bG_{S}) > 0$.
    Construct $P^\#$ and $T^\#$ by replacing, for every $c \notin C_S$, every character in
    the $c$-th black component with a unique character $\#_c$.

    If $S$ captures all $k$-error occurrences, then
    \begin{enumerate}
        \item $\ed(P, T\fragmentco{t}{t'}) = \ed(P^\#, T^\#\fragmentco{t}{t'})$ and
            $\sE_{P, T}(\mX) = \sE_{P^\#, T^\#}(\mX)$ for all optimal alignments $\mX \mid
            P \onto T\fragmentco{t}{t'}$ of cost at most $k$; and
        \item $\ed(P, T\fragmentco{t}{t'}) \leq \ed(P^\#, T^\#\fragmentco{t}{t'})$ for all
            integers $0 \le t \le t' \le n$.
        \qedhere
    \end{enumerate}
\end{theoremq}

\Cref{thm:occ_old} makes sure that one can encode the desired information for all captured
$k$-error occurrences. For those that are not captured, the following result is shown in
\cite{KNW24}.

\begin{lemmaq}[{\cite[Lemma 4.27]{KNW24}}]\label{lem:halve_old}
    Let $\mY : P \onto T\fragmentco{t}{t'}$ be an alignment of cost at most $k$.
    If $|\tau_{i}^{0} - t - \pi_0^{0}|>w+2k$ holds for every $i\in \fragment{0}{n_0-m_0}$,
    then there is no $c \in \fragmentco{0}{\bc(\bG_{S})}$ such that  $\mY$ aligns
    $P\position{\pi_0^{c}}$ with
    $T\position{\tau_i^c}$ for some $i\in \fragmentco{0}{n_c}$.
\end{lemmaq}

An alignment $\mY$ as in \cref{lem:halve_old} satisfies $\bc(\bG_{S \cup \{\mY\}}) \leq
\bc(\bG_{S})/2$, since each black component becomes red or is merged with another black
component.
Thus, adding $\mY$ to $S$ allows for significant progress towards $\bc(\bG_S) = 0$.
Altogether, the construction~proceeds~as~follows.
\begin{itemize}
    \item In the beginning, we set $S = \{\mXpref, \mXsuf\}$.
    \item While $S$ does not capture all $k$-error occurrences, select an uncaptured
        $k$-error occurrence $T\fragmentco{t}{t'}$ and add to $S$ an optimal alignment
        $\mY : P \onto T\fragmentco{t}{t'}$.
    \item Return the edit information $\sE_{P, T}(\mX)$ for alignments $\mX\in S$, encoded
        using \cref{prp:enc_gs}, and the set $\{(c, P\position{\pi_0^c}) \mid c \in
        C_S\}$, encoded using the LZ77-compressed fragments of $P$.
\end{itemize}
Since the second step can be executed at most $\Oh(\log m)$ times before $S$ captures all
$k$-error occurrences, the entire encoding requires $\Oh(k \log m \log (m|\Sigma|/k))$
bits.
To decode, it suffices to retrieve the inference graph $\bG_S$ from the edit information
and, using the encoding of $C_S$ as described in \cref{thm:occ_old}, construct $P^\#$ and
$T^\#$, which preserve the solution to the original problem.

\subsection{Toward an Improved Encoding}
\label{subsec:improving}

To tighten our upper bound on the communication complexity, we revisit the step where an
uncaptured alignment $\mY$ is added to $S$.
To this end, we strengthen the conclusion of \cref{lem:halve_old}, which applies only to
the initial character $P\position{\pi_0^c}$ of each black component $c \in
\fragmentco{0}{\bc(\bG_S)}$.
In \cref{sec:halves}, we prove that $\mY$ cannot match any characters within the same
black component.

\begin{restatable*}{lemma}{periodhalves}\label{lem:periodhalves}
    Let $\mY : P \onto T\fragmentco{t}{t'}$ be an optimal alignment with $\cost(\mY) \leq
    k$.
    If $C_S \ne \fragmentco{0}{\bc(\bG_S)}$ and $|\tau_{i}^{0} - t - \pi_0^{0}|>w+2k$
    holds for every $i\in \fragmentco{0}{n_0}$, then there is no $c \in
    \fragmentco{0}{\bc(\bG_{S})}$ such that  $\mY$ aligns $P\position{\pi_j^{c}}$ with
    $T\position{\tau_i^c}$ for some $i\in \fragmentco{0}{n_c}$ and $j \in
    \fragmentco{0}{m_c}$.
\end{restatable*}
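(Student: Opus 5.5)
Assume, for a contradiction, that $\mY$ aligns $P\position{\pi_j^c}$ with $T\position{\tau_i^c}$ for some $c\in\fragmentco{0}{\bc(\bG_S)}$, $i\in\fragmentco{0}{n_c}$, and $j\in\fragmentco{0}{m_c}$. The plan is to move this aligned pair to one involving the initial character $P\position{\pi_0^{c'}}$ of some black component, and then invoke the conclusion of \cref{lem:halve_old}. Its hypothesis (for $i\in\fragment{0}{n_0-m_0}$) is implied by ours (for all $i\in\fragmentco{0}{n_0}$). The structure of $\bG_S$ given by \cref{clm:occ_old} and \cref{lem:period_old} is what makes this shift possible. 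In $P_{|S}$ and $T_{|S}$, component membership is periodic with period $b\coloneqq\bc(\bG_S)$. Moreover, every alignment $\mX\in S$ maps $P_{|S}\position{q}$ to $T_{|S}\position{y_\mX+q}$ with $y_\mX\equiv_b 0$. This congruence holds because $\mX$ joins the $q$-th and $(y_\mX+q)$-th characters into the same component.

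\textbf{Step 1 (offset of $\mY$).} Let $q=c+jb$ and $r=c+ib$ be the positions of the two characters in $P_{|S}$ and $T_{|S}$, so $r-q\equiv_b 0$. I would show that $\mY$ behaves on black characters almost like an exact shift by $r-q$. Since $\cost(\mY)\le k$, the alignment $\mY$ has at most $k$ edits, which means that between consecutive edits it is a diagonal run. I would look at the maximal matched run of $\mY$ containing the pair $(\pi_j^c,\tau_i^c)$. Within this run, $P\fragmentco{x}{x'}=T\fragmentco{y}{y'}$ with constant shift $d=\tau_i^c-\pi_j^c$.

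\textbf{Step 2 (walking back to $j=0$).} The aim is to prove that $\mY$ also aligns $P\position{\pi_0^{c'}}$ with some $T\position{\tau_{i'}^{c'}}$, possibly for a different $c'$. To do this I would compare $\mY$ against $\mXpref$ and $\mXsuf$. Both are alignments of cost at most $k$ of $P$ onto $T$ with offset $0$ and $|T|-|P|$. Together with $\mY$, they give three alignments whose offsets differ by $O(m)$. The constraint $|\tau_{i}^{0}-t-\pi_0^{0}|>w+2k$ says that the shift of $\mY$ near the start of $P$ is far, by more than $w+2k$, from every shift that maps the $0$-th component to itself. On the other hand, the shift of an alignment changes by at most $\cost(\mY)\le k$ along the alignment. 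So the shift $d$ at position $\pi_j^c$ is within $k$ of the shift of $\mY$ at $\pi_0^0$, which is roughly $t$.

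The assumption $C_S\ne\fragmentco{0}{b}$ matters here: it is exactly the case where the construction of $C_S$ (with weight $w$) guarantees that black characters are sparse enough. Concretely, consecutive positions $\tau^c_{i}$ lie at distances governed by the LZ77 structure, which bounds by $w$ the drift of black positions relative to an exact period. Using this, I would argue that matching $\pi_j^c$ to $\tau_i^c$ forces $|\tau^0_{i''}-t-\pi^0_0|\le w+2k$ for $i''=i-j$ (or a neighbouring index). The reasoning is that the black positions of component $c$ and of component $0$ are interleaved with bounded drift in both $P$ and $T$, and that passing from index $j$ to index $0$ changes the relative offset by at most $w+k$. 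This contradicts the hypothesis.

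\textbf{Main obstacle.} The hard step is this drift bound: controlling $|(\tau^c_{i}-\pi^c_j)-(\tau^0_{i-j}-\pi^0_0)|$ by $w+k$. I would try first to derive it from the fact that $\mXpref$ (or $\mXsuf$, or any $\mX\in S$) realizes the exact occurrence of $P_{|S}$ at offset $y_\mX$. Fix such an $\mX$ whose offset is the one aligning $\pi_j^c$ with $\tau^c_{y_\mX/b+j}$. Then $\tau^c_{\cdot+j}-\pi^c_j$ equals the $\mX$-shift at $\pi^c_j$, which is within $\cost(\mX)\le k$ of its shift at $\pi^0_0$. Combining this with the $\le k$ drift of $\mY$ and a telescoping over the difference $i-j-y_\mX/b$ of component indices, using periodicity of $T_{|S}$ positions with the gap bounded via $w$, should yield the claimed bound up to $w+2k$.
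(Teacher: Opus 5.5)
Your proposal has a genuine gap, and it sits in the step you flag yourself. The bound $|(\tau^c_i-\pi^c_j)-(\tau^0_{i-j}-\pi^0_0)|\le w+k$ cannot be obtained by the telescoping you sketch. The covering weight function controls only a single block: it bounds the edit distance between one pattern block and one text block by $w$. An alignment $\mX\in S$ controls only text windows that start at its own block offset. Moving from that offset to block $i-j$ therefore costs an $\Oh(w)$ error per block traversed. The resulting estimate grows with the number of blocks between the two offsets and does not give $w+k$. Your argument also never uses the optimality of $\mY$. It treats $C_S\ne\fragmentco{0}{\bc(\bG_S)}$ as a sparsity or LZ77 drift guarantee, which is not what the period cover provides.

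What is missing is a synchronization statement that lets $\mY$ itself serve as the ruler. Then you only pay the drift of one cost-$\le k$ alignment plus the weight of one block. The paper proceeds in four steps.
\begin{enumerate}
\item From $(\pi^c_j,\tau^c_i)\in\mY$ it derives $|\tau^0_i-t-\pi^0_j|\le w+k$. This uses the drift of $\mY$ together with the weight-function alignments for components $0,\ldots,c-1$ inside a single block.
\item It fixes some $c'\notin C_S$ and applies \cref{prp:seq_match}. That lemma rests on \cref{prp:recperioded} and \cref{prp:coverededit}, i.e., on the optimality of $\mY$ and on the large self-edit distance certified by $c'\notin C_S$. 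It shows that $\mY$ aligns $P\position{\pi^{c'}_{\hj}}$ with $T\position{\tau^{c'}_{i+\hj-j}}$ for every admissible $\hj$. Because it re-anchors block by block, no error accumulates.
\item If $i\ge j$, taking $\hj=0$ gives $|\tau^0_{i-j}-t-\pi^0_0|\le w+k$, which contradicts the hypothesis.
\item If $i<j$, the index $i-j$ is invalid, and your ``neighbouring index'' remark does not cover this case. The synchronization now gives $(\pi^{c'}_{j-i},\tau^{c'}_0)\in\mY$, which is not a pair at an initial pattern character. So there is nothing to feed into \cref{lem:halve_old}, which is in any case stated for the old setting without partial alignments. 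The paper closes this case separately. When $\pi^0_0>\tau^0_0-t$, it uses the monotonicity $\pi^0_0\le\pi^0_{j-i}$. Otherwise, it uses the bound $|\tau^0_0-\pi^0_0|\le k$ that comes from $\mXpref$.
\end{enumerate}
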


\Cref{lem:periodhalves} brings us to the main idea of our improvement: rather than adding
an entire uncaptured alignment $\mY$ to $S$, it suffices to add only the subset of edges
required to merge every black component with another component. More specifically, if we
decompose $P$ into
\[
    P\fragmentco{0}{\pi_0^0} \circ \left( \bigcirc_{j=0}^{m_0-2}
    P\fragmentco{\pi_j^0}{\pi_{j+1}^0} \right) \circ P\fragmentco{\pi_{m_0-1}^0}{|P|},
\]
then at least one of the middle $m_0-1$ fragments of $P$ in this decomposition carries at
most $\Oh(k/m_0)$ edits in $\mY$.
So ideally, we would like to add to $S$ and the corresponding edges $\mY$ restricted to
this fragment.
If we managed to show that the periodic structure is preserved already by adding such a
subset, then not only $\bc(\bG_S)$ would halve, but also the parameter $m_0$ would at
least double in each iteration.
Hence, with this new strategy, we could hope that the total cost of alignment or partial
alignment in $S$ is bounded by  $\sum_{i=1}^{\infty} k / 2^i = \Oh(k)$.

On a technical level, the proof is delicate: The selection of the fragment
$P\fragmentco{\pi_j^0}{\pi_{j+1}^0}$, as described so far, depends on $\mY$ and the black
components defined relative to $S$. However, the structure of the black components changes
when going from $S$ to $S \cup \mY$, and the fragment is not guaranteed to contain the
same black components as in $S \cup \mY$.
Thus, for our proof of correctness, we need to be careful to get rid of any such undesired
dependence on the future.

Finally, we note that \cref{lem:periodhalves} is not a strict strengthening of
\cref{lem:halve_old}, as it introduces three additional conditions: now $i \in
\fragmentco{0}{n_0}$ instead of $i \in \fragment{0}{n_0 -m_0}$, $\mY$ is optimal, and that
$C_S \ne \fragmentco{0}{\bc(\bG_S)}$. We observe that all conditions are relatively
harmless. Since we are only concerned with capturing all optimal alignments, the first two
conditions are perfectly acceptable. Moreover, whenever $C_S =
\fragmentco{0}{\bc(\bG_S)}$, we have $P = P^{\#}$ and $T = T^{\#}$, where $P^{\#}$ and
$T^{\#}$ are defined as in \cref{thm:occ_old}. At this point, we know that $P$ and $T$
already have a sufficiently cheap encoding to be sent directly.

\subsubsection*{\boldmath Redefining the Set $S$}

Let us start making our improvement formal.
To this end, we relax the definition of $S$: we let $S$ be a set of alignments of
\emph{fragments of $P$} onto fragments of $T$ with cost at most $k$, and we denote by
$\cost(S) \coloneqq \sum_{\mX \in S} \cost(\mX)$ the sum of all costs of the alignments
contained in $S$.

Based on such $S$, we can define the inference graph $\bG_S$ as before and argue that all
characters in red components follow from the edit information for all $\mX \in S$.
We conclude again that $\bc(\bG_S)=0$ is easy, and we assume that $\bc(\bG_S)>0$ in this
(sub)section.
We also define $P_{|S}$ and $T_{|S}$ as before.
\Cref{clm:occ_old} can be extended quite easily to fragments of $P$; see
\cref{sec:sec_gs}.

\begin{restatable*}{lemma}{occ}\label{clm:occ}
    Let $\mX : P\fragmentco{x}{x'} \onto T\fragmentco{y}{y'} \in S$
    and define $y_\mX, y_{\mX}' \in \fragment{0}{n_S}$ as the number
    of characters of $T_{|S}$ contained in $T\fragmentco{0}{y}$ and $T\fragmentco{0}{y'}$.
    Similarly, define $x_\mX, x_{\mX}' \in \fragment{0}{m_S}$ as the number
    of characters of $P_{|S}$ contained in $P\fragmentco{0}{x}$ and $P\fragmentco{0}{x'}$.

    Then, $P_{|S}\fragmentco{x_\mX}{x'_\mX}=T_{|S}\fragmentco{y_{\mX}}{y'_{\mX}}$,
    and $\mX$ induces edges between $P_{|S}\position{x_{\mX} + p}$ and
    $T_{|S}\position{y_{\mX} + p}$ for $p\in \fragmentco{0}{x_\mX' - x_\mX}$, and no other
    edges incident to characters of $P_{|S}$ or $T_{|S}$.
\end{restatable*}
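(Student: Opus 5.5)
The plan is to argue directly from the structure of the alignment $\mX$ and the definition of black components, much as in the original proof of \cref{clm:occ_old}.

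\textbf{Step 1: which edges can touch black characters.} Every edge of $\bG_S$ induced by $\mX$ is one of three kinds: a deletion edge to $\bot$, an insertion edge to $\bot$, or an aligning edge $\{P\position{x''},T\position{y''}\}$ with $x''\in\fragmentco{x}{x'}$ and $y''\in\fragmentco{y}{y'}$. A character of $P_{|S}$ or $T_{|S}$ lies in a black component, so no red edge is incident to it. Hence neither of the following can be a character of $P_{|S}$ or $T_{|S}$:
\begin{itemize}
\item a character of $P\fragmentco{x}{x'}$ that $\mX$ deletes or substitutes;
\item a character of $T\fragmentco{y}{y'}$ that $\mX$ inserts or substitutes.
\end{itemize}
Consequently, each character of $P_{|S}$ lying in $P\fragmentco{x}{x'}$ is matched by $\mX$, via a black edge, to some character of $T\fragmentco{y}{y'}$. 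Since that edge is black, the partner lies in the same black component, so the partner is a character of $T_{|S}$. The symmetric statement holds for characters of $T_{|S}$ lying in $T\fragmentco{y}{y'}$.

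\textbf{Step 2: $\mX$ gives a bijection.} Step 1 shows that $\mX$ restricts to a bijection $\phi$ between
\begin{itemize}
\item the characters of $P_{|S}$ inside $P\fragmentco{x}{x'}$, which are exactly those at indices $\fragmentco{x_\mX}{x'_\mX}$ of $P_{|S}$, and
\item the characters of $T_{|S}$ inside $T\fragmentco{y}{y'}$, which are exactly those at indices $\fragmentco{y_\mX}{y'_\mX}$ of $T_{|S}$.
\end{itemize}
Characters of $P_{|S}$ or $T_{|S}$ outside these fragments are not touched by $\mX$ at all, because every edge induced by $\mX$ involves a position in $\fragmentco{x}{x'}$ or in $\fragmentco{y}{y'}$. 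Moreover, each character of $P_{|S}$ or $T_{|S}$ inside the fragments is incident to exactly one edge of $\mX$, namely its matching edge. This yields the ``no other edges'' part of the claim.

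\textbf{Step 3: the bijection is a shift.} By monotonicity of alignments, aligned pairs appear with both coordinates strictly increasing, so $\phi$ is order-preserving. An order-preserving bijection between the index ranges $\fragmentco{x_\mX}{x'_\mX}$ and $\fragmentco{y_\mX}{y'_\mX}$ forces the two ranges to have equal length. It also forces $\phi$ to send $P_{|S}\position{x_\mX+p}$ to $T_{|S}\position{y_\mX+p}$.

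\textbf{Step 4: equality of the strings.} Matched characters are equal, so $P_{|S}\fragmentco{x_\mX}{x'_\mX}=T_{|S}\fragmentco{y_\mX}{y'_\mX}$.

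\textbf{Main obstacle.} The one point needing care is the boundary convention. The edges of $\mX$ involve exactly the pairs $(x_i,y_i)$ with $i\in\fragmentco{0}{\ell}$, which use $P$-positions in $\fragmentco{x}{x'}$ and $T$-positions in $\fragmentco{y}{y'}$. This guarantees that the counts $x_\mX,x'_\mX,y_\mX,y'_\mX$ delimit precisely the characters that $\mX$ covers. Apart from this check, the argument is routine.
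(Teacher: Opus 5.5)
Your proposal is correct and follows essentially the same route as the paper. The absence of red edges at black vertices forces each $P_{|S}$-character in $P\fragmentco{x}{x'}$ to be matched by $\mX$ to a character of the same black component, hence of $T_{|S}$. Symmetry, together with the fact that $\mX$ aligns each character at most once, then gives a perfect matching, and non-crossing monotonicity makes it the shift by $y_\mX - x_\mX$; the boundary check you flag is routine and is used implicitly in the paper as well.
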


We next give the new condition on $S$ needed to observe the periodic structure.

\begin{restatable*}{definition}{enclose}\label{def:enclose}
    We say $S$ \emph{encloses $T$} if $|T| \le 2 \cdot |P| - 2k$
    and $S$ can be written as
    \[
        S = \{\mXpref, \mXsuf, \mA_1, \ldots, \mA_{a}, \mB_1, \ldots, \mB_b\}
    \]
    such that the following conditions hold:
    \begin{enumerate}
        \item $\mXpref$ aligns the whole $P$ with a prefix of $T$;
        \label{def:enclose:a}
        \item $\mXsuf$ aligns the whole $P$ with a suffix of $T$;
        \label{def:enclose:b}
        \item $\mA_1, \ldots, \mA_{a}$ align the whole $P$ with fragments of $T$.
        \label{def:enclose:c}
    \end{enumerate}

    We say $S$ is \emph{degenerate} if $y_{\mXpref} = y_{\mXsuf} = y_{\mA_1} = \ldots =
    y_{\mA_a} = 0$. Moreover, we define
    \[
        g_0 \coloneqq
        \begin{cases}
            m_S & \text{if $S$ is degenerate,} \\
            \gcd\{y_{\mXpref}, y_{\mXsuf}, y_{\mA_1}, \ldots, y_{\mA_a}\} & \text{otherwise}.
        \end{cases}
    \]
    For each \( i \in \fragment{1}{b} \), we define
    \( g_i \coloneqq \gcd(g_{i-1}, y_{\mB_i} - x_{\mB_i}). \)
    We further say that \( S \) \emph{succinctly encloses} \( T \) if the following
    condition holds for every \( i \in \fragment{1}{b} \).
    \begin{enumerate}[resume]
        \item\label{def:enclose:d} The alignment $\mB_i : P\fragmentco{x_i}{x_i'} \onto
            T\fragmentco{y_i}{y_i'}$ satisfies
        $x_{\mB_i}' - x_{\mB_i} \geq g_{i-1}+1$.
        \qedhere
    \end{enumerate}
\end{restatable*}

If \(S\) encloses \(T\), then each of the alignments \(\mXpref, \mXsuf, \mA_1, \dots,
\mA_a\) aligns the entire string \(P\) with a fragment of \(T\), and the edges they induce
connect to all vertices of \(P_{|S}\). Consequently, we have \(x_{\mXpref} = x_{\mXsuf} =
x_{\mA_1} = \dots = x_{\mA_a} = 0\) and \(x_{\mXpref}' = x_{\mXsuf}' = x_{\mA_1}' = \dots
= x_{\mA_a}' = m_S\).
Moreover, note that whenever \(S\) succinctly encloses \(T\) and $S$ is degenerate, then
\(b = 0\), since the condition \eqref{def:enclose:d} cannot hold for $i = 1$ if $b > 0$.

We can also prove something more about the exact matching induced by \(\mXpref\) and
\(\mXsuf\).

\begin{restatable*}{lemma}{nosingle}\label{clm:nosingle}
    If $S$ encloses $T$, then $y_{\mXpref}=0$, $y_{\mXsuf}=n_S-m_S$, and $n_S\le 2m_S$.
\end{restatable*}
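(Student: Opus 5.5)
We argue directly from \cref{clm:occ} applied to the two full-pattern alignments $\mXpref$ and $\mXsuf$; since $S$ encloses $T$, both satisfy $x_{\mX}=0$ and $x'_{\mX}=m_S$.

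\textbf{The prefix alignment.} $\mXpref$ aligns $P$ onto $T\fragmentco{0}{y'}$ for some $y'$. Its starting position in $T$ is $y=0$, and $T\fragmentco{0}{0}$ contains no characters of $T_{|S}$, so $y_{\mXpref}=0$ by definition.

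\textbf{The suffix alignment.} $\mXsuf$ aligns $P$ onto $T\fragmentco{y}{n}$, so $y'_{\mXsuf}=n_S$, because $T\fragmentco{0}{n}$ contains all of $T_{|S}$. \Cref{clm:occ} gives $P_{|S}=T_{|S}\fragmentco{y_{\mXsuf}}{y'_{\mXsuf}}$. Comparing lengths, $m_S=n_S-y_{\mXsuf}$, that is, $y_{\mXsuf}=n_S-m_S$.

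\textbf{The bound $n_S\le 2m_S$.} The idea is that the two exact occurrences of $P_{|S}$ must overlap, or at least touch. First, suppose $y_{\mXsuf}>m_S$, and consider a black character $T_{|S}\position{q}$ with $q\in\fragmentco{m_S}{y_{\mXsuf}}$. By \cref{clm:occ}, neither $\mXpref$ nor $\mXsuf$ induces any edge at this vertex: $\mXpref$ touches only $T_{|S}\fragmentco{0}{m_S}$, and $\mXsuf$ touches only $T_{|S}\fragmentco{y_{\mXsuf}}{n_S}$.

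We now use the length condition $|T|\le 2|P|-2k$.
\begin{itemize}
\item Since $\cost(\mXpref)\le k$, the image of $\mXpref$ has length at least $m-k$.
\item Likewise, $\mXsuf$ starts at a position $y\le n-(m-k)\le m-k$.
\end{itemize}
Hence the images of $\mXpref$ and $\mXsuf$ overlap or abut, and together they cover every position of $T$. The position of $T_{|S}\position{q}$ in $T$ therefore lies in the image of $\mXpref$ or of $\mXsuf$. Each position in an alignment's image is either inserted (which gives a red edge to $\bot$) or aligned (which gives an edge to $P$). So some alignment induces an edge at $T_{|S}\position{q}$, a contradiction.

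A subtle point is whether an inserted character could still count as black. By definition an insertion creates a red edge, so such a vertex lies in a red component and is excluded from $T_{|S}$. The contradiction therefore stands.

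We conclude that $y_{\mXsuf}\le m_S$. Combining with the previous step, $n_S=y_{\mXsuf}+m_S\le 2m_S$.

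The main care needed is the covering argument: showing that every position of $T$ receives an edge from $\mXpref$ or $\mXsuf$. This reduces to the inequality $m-k\ge n-(m-k)$, which is exactly the assumption $n\le 2m-2k$.
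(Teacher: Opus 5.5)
Your proof is correct and follows essentially the same route as the paper. You read off $y_{\mXpref}=0$ and $y_{\mXsuf}=n_S-m_S$ from \cref{clm:occ}, then use $\cost\le k$ together with $|T|\le 2|P|-2k$ to show that the images of $\mXpref$ and $\mXsuf$ cover all of $T$, so the two induced matchings jointly cover $T_{|S}$. Your contradiction framing via a hypothetical $q\in\fragmentco{m_S}{y_{\mXsuf}}$ is simply a more explicit version of the paper's direct covering argument.
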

The proof, provided for completeness in \cref{sec:sec_gs}, is the same as \cite[Claim
4.6]{KNW24}.
This is because the argument relies exclusively on the alignments $\mXpref$ and $\mXsuf$
(and not any $\mB_i$).

\Cref{clm:nosingle} also implies that, when $S$ is degenerate, then $m_S = n_S$ and every
alignment $\mX \in \{\mXpref, \mXsuf, \mA_1, \ldots, \mA_a\}$ induces in $\bG_S$ an edge
$\{P_{|S}[p], T_{|S}[t]\}$ iff $p = t$.
In this case, for each $p \in \fragmentco{0}{m_S}$, there is a black component consisting
solely of $P_{|S}[p]$ and $T_{|S}[p]$.

The following generalization of the inference graph $\bG_S$ lets us work with graphs of
more predictable structure.

\begin{restatable*}{definition}{gsi}\label{def:gsi}
    Suppose $S$ encloses $T$.
    We define the sets \( S^0 \coloneqq \{\mXpref, \mXsuf, \mA_1, \ldots, \mA_{a}\}\) and
    \( S^i \coloneqq \{\mXpref, \mXsuf, \mA_1, \ldots, \mA_{a}, \mB_{1}, \ldots,
    \mB_{i}\}\) for \( i \in \fragment{1}{b} \).

    Moreover, for \( i \in \fragment{0}{b} \), we define the graph \( \bG^i_S \) obtained
    from \( \bG_S \) as follows.
    We remove vertices (along with their incident edges) that are not in black components
    in \( \bG_S \) (so only vertices in \( P_{|S}\) and \( T_{|S}\) remain), and we
    additionally remove any edge that is not induced by alignments in the set \( S^i \).
\end{restatable*}

We now proceed to argue this definition successfully preserves the periodic structure.

\begin{restatable}{lemma}{periodicity}\label{lem:periodicity}
    Suppose $S$ encloses $T$. Then, the following hold for each $i \in \fragmentco{0}{b}$.
    \begin{enumerate}
        \item\label{lem:periodicity:i}
            If condition \eqref{def:enclose:d} in \cref{def:enclose} holds for all indices
            in $\fragment{1}{i}$, then, for each $c \in \fragmentco{0}{g_i}$, the graph
            $\bG^i_S$ has a connected component with node set
            $C^i_c \coloneqq \{P_{|S}[j] : j \equiv_{g_i} c\} \cup \{T_{|S}[j] : j
            \equiv_{g_i} c\}$,
            where $j$ ranges over $\fragmentco{0}{m_S}$ and $\fragmentco{0}{n_S}$,
            respectively.

            In particular, in $\bG_S = \bG^b_S$,
            for every $c \in \fragmentco{0}{\bc(\bG_S)}$, there exists a black connected
            component with node set
            $\{P_{|S}[i] : i \equiv_{\bc(\bG_S)} c\} \cup \{T_{|S}[i] : i \equiv_{\bc(\bG_S)} c\}$.
        \item\label{lem:periodicity:ii}
            The last characters of $P_{|S}$ and $T_{|S}$ are in the same component of
            $\bG^i_S$.
            \qedhere
    \end{enumerate}
\end{restatable}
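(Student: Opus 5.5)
We argue by induction on $i$. Throughout we use \cref{clm:occ}: every alignment $\mX\in S$ induces in $\bG_S$, restricted to $P_{|S}$ and $T_{|S}$, exactly the edges $\{P_{|S}[x_\mX+p],T_{|S}[y_\mX+p]\}$ for $p\in\fragmentco{0}{x'_\mX-x_\mX}$. For the full alignments $\mXpref,\mXsuf,\mA_1,\dots,\mA_a$ these are the ``shift by $y_\mX$'' matchings $P_{|S}[p]\leftrightarrow T_{|S}[y_\mX+p]$ for all $p\in\fragmentco{0}{m_S}$. By \cref{clm:nosingle}, $y_{\mXpref}=0$, $y_{\mXsuf}=n_S-m_S$, and $n_S\le 2m_S$.

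\textbf{Base case $i=0$.} In the degenerate case every full alignment matches $P_{|S}[p]$ with $T_{|S}[p]$, and $m_S=n_S=g_0$. Hence the components are exactly $\{P_{|S}[p],T_{|S}[p]\}$, which are the sets $C^0_c$, and both parts hold.

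Otherwise, we contract each $P_{|S}[p]$ with $T_{|S}[p]$ along the $\mXpref$ edges; this is possible because $n_S\ge m_S$. Each $T_{|S}[j]$ with $j\ge m_S$ is joined via $\mXsuf$ to $P_{|S}[j-(n_S-m_S)]$. The $\mA$ edges connect $T_{|S}[p]$ to $T_{|S}[p+y_{\mA}]$, and composing with $\mXpref$ gives $P_{|S}[p]\sim P_{|S}[p+y_\mA]$ whenever $p+y_\mA<m_S$. So on the index range $\fragmentco{0}{n_S}$ we have connections by the shifts $d\in D=\{y_{\mXsuf},y_{\mA_1},\dots\}$, valid between $j$ and $j+d$ within the range. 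This is the classical Fine--Wilf situation, as in the proof of \cite[Lemma 4.4]{KNW24}: with $n_S\le 2m_S$ and all $d\le n_S-m_S\le m_S$, an induction on $\sum d$ in the Euclidean style ($d_1<d_2$ replaced by $d_1,d_2-d_1$) shows that positions congruent mod $\gcd D=g_0$ are connected. The edges never join incongruent residues, since each shift is a multiple of $g_0$. I would reuse that argument verbatim rather than redo it.

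\textbf{Inductive step $i-1\to i$.} Adding $\mB_i$ adds edges $P_{|S}[x+p]\leftrightarrow T_{|S}[y+p]$ for $p\in\fragmentco{0}{\ell}$, where $\ell=x'_{\mB_i}-x_{\mB_i}\ge g_{i-1}+1$. By induction, $T_{|S}[y+p]$ lies in $C^{i-1}_{(y+p)\bmod g_{i-1}}$ and $P_{|S}[x+p]$ lies in $C^{i-1}_{(x+p)\bmod g_{i-1}}$. So $\mB_i$ merges class $r$ with class $r+\delta$, where $\delta=y-x$, and since $\ell\ge g_{i-1}$ this happens for every residue $r$. The merge graph on $\Z_{g_{i-1}}$ generated by $r\mapsto r+\delta$ has components equal to the cosets of $\gcd(g_{i-1},\delta)=g_i$. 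Thus the components are exactly the $C^i_c$. No extra merges occur, since all edges respect residues mod $g_i$.

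The bound $\ell\ge g_{i-1}$ is what we actually need; $g_{i-1}+1$ is more than enough. The main subtlety is precisely that every residue is hit; in the degenerate case $b=0$ is forced anyway. The ``in particular'' claim follows because the components of $\bG_S^b$ restricted to black vertices coincide with the black components of $\bG_S$, so $\bc(\bG_S)=g_b$.

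\textbf{Part (ii).} Already in $\bG^0_S$, $\mXsuf$ joins $P_{|S}[m_S-1]$ with $T_{|S}[n_S-1]$, and adding edges preserves this. Finally, I expect the base-case Fine--Wilf connectivity to be the main obstacle; I would cite \cite[Lemma 4.4]{KNW24}, whose hypotheses involve only full alignments.
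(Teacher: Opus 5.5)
Your proof is correct. The base case and part (ii) follow the paper, but the inductive step takes a different route.

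The paper does not induct over the graphs $\bG^0_S\subseteq\bG^1_S\subseteq\cdots$. It fixes the target graph $\bG^i_S$ and encodes its components as label strings $P_{\cc},T_{\cc}$. It obtains the period $g_0$ via \cite[Lemma 3.2]{KNW24}. It then propagates the periods $g_1,\dots,g_i$ of these strings using \cref{fct:periodicity_extended}: a common string period $Q$ plus a matching fragment of length at least $|Q|$ at offset $y-x$ gives the common period $\gcd(|Q|,y-x)$, proved by the Periodicity Lemma on $Q^\infty$.

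You instead use the component structure of $\bG^{i-1}_S$ from the induction hypothesis and a coset argument in $\Z_{g_{i-1}}$. The at least $g_{i-1}$ consecutive edges of $\mB_i$ join class $r$ to class $r+\delta$ for every $r$. So the merged classes are the cosets of $\langle\delta\rangle$, i.e., residues modulo $\gcd(g_{i-1},\delta)=g_i$. This avoids the Periodicity Lemma in the inductive step and makes it transparent that only $x'_{\mB_i}-x_{\mB_i}\ge g_{i-1}$ is used. The paper's string formulation, in turn, keeps the base case and the inductive steps in one framework, which it reuses almost verbatim for the truncated graph in \cref{lem:periodicity_gbar}.

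Two small points.

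First, rely on the argument of \cite[Lemma 4.4]{KNW24}, not its statement. The statement describes the black components of $\bG_{S^0}$. But $\bG^0_S$ is $\bG_{S^0}$ with the components that turn red after adding the $\mB_i$ deleted and the survivors reindexed. So applying the lemma to $S^0$ does not directly identify the number of classes with $g_0$ as computed in $T_{|S}$-indices. You need either an extra reindexing step or, better, the Fine--Wilf argument run directly on $\bG^0_S$, as your sketch does. Within that sketch, the $\mA$-shifts are initially available only for $j<m_S$. Extending them to all $j$ with $j+d<n_S$ uses the shift $n_S-m_S$ coming from $\mXsuf$ together with $n_S\le 2m_S$.

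Second, in the degenerate case $b=0$ is forced only under succinct enclosure. Under mere enclosure, condition \eqref{def:enclose:d} fails at index $1$, so part (i) is vacuous for $i\ge 1$.
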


\begin{proof}
    We first prove \eqref{lem:periodicity:i}.
    If $S$ is degenerate (and hence $b = 0$), we have already argued that $\bG_S =
    \bG_{S^0} = \bG_S^0$ has the claimed structure. Therefore, we assume that $S$ is not
    degenerate.

    Let us assign a unique label $\$_C$ to each component $C$ of $\bG_S^i$ and define
    strings $P_{\cc}$ and $T_{\cc}$ of
    length $m_S$ and $n_S$, respectively, as follows.
    For $i\in \fragmentco{0}{m_S}$, set $P_{\cc}[i] = \$_C$, where $C$ is the component of
    $\bG_S^i$ containing $P_{|S}[i]$.
    Similarly, for $i\in \fragmentco{0}{n_S}$, set $T_{\cc}[i] = \$_C$, where $C$ is the
    black connected component of $\bG_S^i$ containing $T_{|S}[i]$.

    By \cref{clm:occ,clm:nosingle}, we have
    \[\{0, |T_{\cc}| - |P_{\cc}|\} \subseteq \{y_{\mXpref}, y_{\mXsuf}, y_{\mA_1}, \ldots,
    y_{\mA_a}\} \subseteq \Occ(P_{\cc}, T_{\cc}).\]
    Since \(|T_{\cc}| \leq 2|P_{\cc}|\), we can use the following claim already proved in \cite{KNW24}.

    \begin{claimq}[{\cite[Lemma 3.2]{KNW24}}]\label{fct:periodicity}
        Consider a non-empty pattern $P'$ and a text $T'$ with $|T'|\le 2|P'|+1$.
        If $\{0,|T'|-|P'|\}\subseteq \Occ(P',T')$, that is, $P'$ occurs both as a prefix
        and as a suffix of $T'$, then $\gcd(\Occ(P',T'))$ is a period of $T'$.
    \end{claimq}

    It follows that \(\gcd(\Occ(P_{\cc}, T_{\cc}))\) is a period of \(T_{\cc}\). Hence,
    \(\gcd\{y_{\mXpref}, y_{\mXsuf}, y_{\mA_1}, \ldots, y_{\mA_a}\}\) is a period of both
    \(T_{\cc}\) and its prefix \(P_{\cc}\), and thus \(g_0\) is a period of both
    \(T_{\cc}\) and \(P_{\cc}\).

    We now prove by induction that \( T_{\cc} \) and \( P_{\cc} \) have period \( g_i \)
    for all \( i \in \fragment{0}{b} \). The base case \( i = 0 \) follows from the
    argument above. For \( i > 0 \), assuming \( T_{\cc} \) and \( P_{\cc} \) have period
    \( g_{i-1} \), we observe that the condition on \( \mB_i \) ensures $x_{\mB_i}' -
    x_{\mB_i} \geq g_{i-1} + 1 \ge g_{i-1}$.
    The following new claim allows us to conclude that \( T_{\cc} \) and \( P_{\cc} \)
    have period $g_i = \gcd(g_{i-1}, y_{\mB_i} - x_{\mB_{i}})$.

    \begin{claim}\label{fct:periodicity_extended}
        Consider strings $P'$ and $T'$ with a common string period $Q$.
        If there are matching fragments $P'\fragmentco{x}{x'} = T'\fragmentco{y}{y'}$ of
        length $x' - x \geq |Q|$,
        then $P'$ and $T'$ also have a common period $\gcd(|Q|, y-x)$.
    \end{claim}
    \begin{claimproof}
        The claim is trivial if $y = x$, so we henceforth assume otherwise.
        By symmetry between the fragments, we can also assume without loss of generality that $y>x$.
        Since $P'$ and $T'$ are prefixes of $Q^\infty$, we have
        $Q^\infty\fragmentco{x}{x+|Q|}=Q^\infty\fragmentco{y}{y+|Q|}$.
        Observe that $y-x$ is a period of $Q^\infty$: for each $i\in \Zz$, we indeed have
        $Q^\infty\position{i}=Q^\infty\position{x+(i-x)\bmod |Q|} =
        Q^\infty\position{y+(i-x)\bmod |Q|} = Q^\infty\position{i+y-x}$.
        From the Periodicity Lemma (\cref{lem:perlemma}) applied for
        $Q^\infty\fragmentco{0}{|Q|+y-x}$,
        we conclude that $\gcd(|Q|,y-x)$ is a period of $Q^\infty\fragmentco{0}{|Q|+y-x}$
        and hence of $Q$.
        As a divisor of $|Q|$, it is also a common period of $Q^\infty$, $P'$, and $T'$.
    \end{claimproof}

    Consequently, for each $c\in \fragmentco{0}{g_i}$, the set $C^i_c$ belongs to a single
    connected component of $\bG_S^i$.
    It remains to prove $g_i = \bc(\bG_S^i)$. We do this by demonstrating that no edge of
    $\bG_S^i$ leaves $C^i_c$.
    Note that \cref{clm:occ} further implies that every edge incident to $P_{|S}$ or
    $T_{|S}$ connects $P_{|S}\position{p}$ with $T_{|S}\position{t}$ such that
    $t=p+(y_{\mX} - x_{\mX})$ for some $\mX \in S$ and $p \in
    \fragmentco{x_{\mX}}{x_{\mX}'}$.
    In particular, $t \equiv_{g_i} p$, so $P_{|S}\position{p}\in C^i_c$ holds if and only
    if $T_{|S}\position{t}\in C^i_c$.

    As for \eqref{lem:periodicity:ii}, since $P_{\cc}$ is a suffix of $T_{\cc}$, the last
    characters of $P_{|S}$ and $T_{|S}$ are connected.
\end{proof}

\subsubsection*{\boldmath A New Iterative Construction of $S$}

For the new construction of $S$, we use the following key lemma.

\begin{restatable}{lemma}{additer}\label{lem:add_iter}
    Suppose $S$ encloses $T$ and condition \eqref{def:enclose:d}
    of \cref{def:enclose} holds up to $i$ (inclusive) for some $i \in \fragmentco{0}{b-1}$
    (if $i = 0$, then this statement is void).
    Further, suppose that there exists an integer $z$ such that all characters of
    $P_{|S^{i}}\fragmentco{z \cdot \bc(\bG_{S^{i}})}{(z + 2) \cdot \bc(\bG_{S^{i}})}$ are
    in the pre-image of $\mB_{i+1}$.

    Then, either $\bc(\bG_S) = 0$ or condition \eqref{def:enclose:d} also holds for $i+1$.
\end{restatable}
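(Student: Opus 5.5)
The plan is to compare the black components of $\bG_{S^{i}}$ with those of $\bG_S$, and then to count how many characters of $P_{|S}$ fall into the window from the hypothesis.

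\textbf{Step 1: black components of $\bG_S$ are unions of black components of $\bG_{S^i}$.} Since $S^{i}\subseteq S$, the graph $\bG_S$ is $\bG_{S^{i}}$ with extra edges added. Hence every connected component of $\bG_{S^i}$ lies inside a single component of $\bG_S$. If a component of $\bG_S$ is black, it contains no red edge. So every $\bG_{S^i}$-component inside it is also black.

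If $\bc(\bG_S)=0$ we are done. Otherwise, the vertex set of $P_{|S}$ and $T_{|S}$ is a nonempty union of black components of $\bG_{S^{i}}$.

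\textbf{Step 2: use the periodic structure of $\bG_{S^i}$.} The set $S^i$ itself encloses $T$, and condition \eqref{def:enclose:d} holds for its first $i$ alignments. So the ``in particular'' part of \cref{lem:periodicity}\eqref{lem:periodicity:i}, applied to $S^i$, shows the following. With $g'\coloneqq\bc(\bG_{S^{i}})$, the black components of $\bG_{S^i}$ are exactly the residue classes modulo $g'$ of positions in $P_{|S^i}$ and $T_{|S^i}$.

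Let $K\subseteq\fragmentco{0}{g'}$ be the set of residues whose classes survive as vertices of $\bG_S$. By Step 1, $K\neq\emptyset$.

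\textbf{Step 3: identify $g_i$ (defined relative to $S$) with $|K|$.} Apply \cref{lem:periodicity}\eqref{lem:periodicity:i} to $S$ at index $i$. Its hypothesis is that \eqref{def:enclose:d} holds on $\fragment{1}{i}$, which we are given. It yields that $\bG^i_S$ has exactly $g_i$ connected components.

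Now $\bG^i_S$ is the subgraph of $\bG_{S^i}$ induced on the vertices of $P_{|S}$ and $T_{|S}$. These vertices form a union of entire components of $\bG_{S^i}$, namely the $|K|$ classes. Hence the components of $\bG^i_S$ are exactly these classes, and $g_i=|K|$.

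\textbf{Step 4: counting.} The window $P_{|S^{i}}\fragmentco{zg'}{(z+2)g'}$ contains every residue class modulo $g'$ exactly twice. So it contains exactly $2|K|$ characters of $P_{|S}$.

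This window corresponds to a contiguous stretch of $P$. Therefore these $2|K|$ characters are consecutive in $P_{|S}$. They also lie in the pre-image $P\fragmentco{x}{x'}$ of $\mB_{i+1}$, so they are counted between $x_{\mB_{i+1}}$ and $x'_{\mB_{i+1}}$. This gives
\[x'_{\mB_{i+1}}-x_{\mB_{i+1}}\ \ge\ 2|K| = 2g_i \ \ge\ g_i+1,\]
using $g_i=|K|\ge 1$. This is condition \eqref{def:enclose:d} for $i+1$.

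\textbf{Main obstacle.} The delicate point is Step 3. The quantity $g_i$ is defined through the offsets $y_\mX$ and $x_\mX$ measured in $P_{|S}$ and $T_{|S}$, which depend on the final set $S$ rather than on $S^i$. So we must avoid comparing gcds directly. Instead we use \cref{lem:periodicity} twice: once for $S^i$ to get the residue-class picture, and once for $S$ at index $i$ to read off $g_i$ as a component count. Both applications only need the hypotheses we have; in particular \eqref{def:enclose:d} up to $i$ is given, and $S^i$ encloses $T$ trivially.
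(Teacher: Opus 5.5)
Your proof is correct and follows essentially the same route as the paper: apply \cref{lem:periodicity} to $S^{i}$ to see that the black components of $\bG_{S^{i}}$ are residue classes modulo $\bc(\bG_{S^{i}})$, note that each class either survives intact in $\bG_S$ or turns red, and then invoke \cref{lem:periodicity} for $S$ at index $i$ to tie the surviving classes to $g_i$. The only difference is the last step: the paper takes the two copies of a single surviving class inside the window and uses that distinct characters of one component of $\bG^i_S$ are at least $g_i$ apart in $P_{|S}$, whereas you identify $g_i$ with the number $|K|$ of surviving classes and count $2|K|=2g_i$ characters of $P_{|S}$ in the pre-image, which additionally needs every class $C^i_c$ to be nonempty (true, since $g_i\le g_0\le m_S$).
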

\begin{proof}
    By \cref{lem:periodicity}\eqref{lem:periodicity:i}, in the inference graph
    \(\bG_{S^{i}}\), for each \(c \in \fragmentco{0}{\bc(\bG_{S^{i}})}\), there is a black
    component with node set
    \[\{P_{|{S^{i}}}[j] : j \equiv_{\bc(\bG_{S^{i}})} c\} \cup \{T_{|S^{i}}[j] : j
    \equiv_{\bc(\bG_{S^{i}})} c\}.\]
    Consider the graph \(\bG_S^{i}\). This graph is identical to \(\bG_{S^{i}}\) except
    that the components that become red in \(\bG_S\) are removed. Let \(R \subseteq
    \fragmentco{0}{\bc(\bG_{S^{i}})}\) index these removed components.

    If \(R = \fragmentco{0}{\bc(\bG_{S^{i}})}\), then \(\bc(\bG_S) = 0\), and the proof is
    complete. Otherwise, we proceed as follows.
    Choose an arbitrary \(r \in \fragmentco{0}{\bc(\bG_{S^{i}})} \setminus R\), and
    consider the fragment
    \[
        P_{|{S^{i}}}\fragment{z \cdot \bc(\bG_{S^{i}}) + r}{(z+1) \cdot
        \bc(\bG_{S^{i}})+r} \subseteq  P_{|S^{i}}\fragmentco{z \cdot \bc(\bG_{S^{i}})}{(z
        + 2) \cdot \bc(\bG_{S^{i}})},
    \]
    whose characters, by our assumption, are also contained in the pre-image of
    $\mB_{i+1}$.
    The characters $P_{|{S^{i}}}\position{z \cdot \bc(\bG_{S^{i}}) + r}$ and
    $P_{|{S^{i}}}\position{(z+1) \cdot \bc(\bG_{S^{i}}) + r}$ belong to the same black
    component of \(\bG_{S^{i}}\).
    Since \(r \notin R\), these characters survive in \(\bG_S^{i}\)
    and correspond to some characters \(P_{|S}\position{p}\) and \(P_{|S}\position{p'}\).

    Note that we must have \( |p' - p| \geq g_{i} \) as, by
    \cref{lem:periodicity}\eqref{lem:periodicity:i}, characters in the same component of
    \(\bG_S^{i}\) lie at least $g_{i}$ positions apart.
    Thus, $|P_{|S}\fragment{p}{p'}| \geq g_{i}+1$.
    Since $P_{|S}\fragment{p}{p'}$ is contained in
    $P_{|S}\fragmentco{x_{\mB_{i+1}}}{x_{\mB_{i+1}}'}$, we get $x_{\mB_{i+1}}' -
    x_{\mB_{i+1}} \geq g_{i}+1$.
\end{proof}

\begin{algorithm}[t]
\DontPrintSemicolon
\LinesNumbered
\KwInput{Strings $P$ and $T$, and a threshold $k$ such that $P$ has $k$-error occurrences
as both a prefix and a suffix of $T$.}
\KwOutput{A set $S$ that succinctly encloses $T$ such that $\cost(S) = \Oh(k)$.}
Initialize $S \coloneqq \{\mXpref, \mXsuf\}$, where $\mXpref, \mXsuf$ are alignments of
cost at most $k$ aligning $P$ with a prefix and a suffix of $T$, which exist by the input
assumptions\;
\For{$\ell \in \fragment{1}{2}$}{ \label{loop:add_A}
    Construct $\bG_S$, $P_{|S}$, $T_{|S}$ and $C_S$ for $S$, and introduce the corresponding notation\;
    \lIf{$S$ captures all $k$-error occurrences \KwSty{or} $C_S = \fragmentco{0}{\bc(\bG_S)}$}{\Return{$S$}}
    Let $\mY \mid P \onto T\fragmentco{t}{t'}$ be an optimal $k$-edit alignment that $S$ does not capture\; \label{alg:sels}
    Add the full alignment $\mY$ to $S$ as the alignment $\mA_{\ell}$\;
}
\For{$\ell \in \mathbb{Z}_{\ge 1}$}{\label{loop:add_B}
    Construct $\bG_S$, $P_{|S}$, $T_{|S}$ and $C_S$ for $S$, and introduce the corresponding notation\;
    \lIf{$S$ captures all $k$-error occurrences \KwSty{or} $C_S = \fragmentco{0}{\bc(\bG_S)}$}{\Return{$S$}}
    Let $\mY \mid P \onto T\fragmentco{t}{t'}$ be an optimal $k$-edit alignment that $S$
    does not capture\; \label{alg:selstwo}
    Select $j \in \fragmentco{0}{m_0 - 2}$ minimizing the cost of the
    partial alignment $\mY_j \subseteq \mY$, where $\mY_j :
    P\fragmentco{\pi^0_{j}}{\pi^0_{j+2}} \onto
    \mY(P\fragmentco{\pi^0_{j}}{\pi^0_{j+2}})$\;
    \label{alg:construction_S:select_i}
    Add $\mY_j$ to $S$ as the alignment $\mB_{\ell}$\;
}
\caption{Construction of $S$ that succinctly encloses $T$.}\label{alg:construction_S}
\end{algorithm}

Finally, we can provide the new construction of $S$. As in \cite{KNW24}, we construct,
from $S$, a set $C_S$ of black components and a parameter $w$ based on a weight function
covering $S$. We use the following facts (proved in \cref{sec:full}).
\begin{enumerate*}
    \item \cref{lem:periodhalves} still holds, and after replacing ``enclosure'' with
        ``succinct enclosure'' in \cref{def:scomplete_old}, \cref{thm:occ_old} still
        holds.
    \item We can improve the bound for encoding $\{(c, T\position{\tau_0^c}) \mid c \in
        C_S\}$ from $\Oh(w \log m)$ additional bits (on top of the edit information of
        $S$) to
        $\Oh\left((w+k) \log (1+m|\Sigma|/(w+k))\right)$ bits.
    \item All of this is possible with $w = \Oh(\cost(S))$ instead of $w = \Oh(k|S|)$.
\end{enumerate*}

\begin{restatable*}{lemma}{construction}\label{lem:construction_S:loop}
    Suppose that $n \leq 2m-2k$ and that there are $k$-error occurrences of $P$ appearing
    as both a prefix and a suffix of $T$. Then, \cref{alg:construction_S} computes a set
    $S$ of size $|S| = \Oh(\min(\log m,|\floor{\OccE_k(P,T)/k}|))$ and $\cost(S) = \Oh(k)$
    such that the following hold:
    \begin{enumerate}
        \item $S$ succinctly encloses $T$,
        \item $C_S = \fragmentco{0}{\bc(\bG_S)}$ or $S$ captures all $k$-error occurrences
            of $P$ in $T$,
        \item the set $\{\sE_{P, T}(\mX) \mid S \ni \mX \mid P\fragmentco{p}{p'}\onto
            T\fragmentco{t}{t'}\}$ together with all starting/ending points of the
            alignments in $S$ can be encoded in $\Oh(k\log(m|\Sigma|/k) + |S|\log m)$
            bits.
            \qedhere
    \end{enumerate}
\end{restatable*}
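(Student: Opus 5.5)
We prove the lemma by analyzing \cref{alg:construction_S} loop by loop, tracking two potentials: $\bc(\bG_S)$, which should at least halve in every iteration that adds an alignment, and $m_0$ (the number of $P$-characters in the $0$-th black component), which controls how cheap the selected partial alignment $\mY_j$ is.

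\textbf{Correctness (items 1 and 2).} Item 2 is immediate: the algorithm only returns after checking exactly that condition, and termination follows from the size bound below. For item 1, the initial set $\{\mXpref,\mXsuf\}$ and the $\mA_\ell$ added in the first loop make $S$ enclose $T$, since $n\le 2m-2k$. It remains to verify condition \eqref{def:enclose:d} for each $\mB_\ell$ at the moment it is added, and we do this with \cref{lem:add_iter}. The partial alignment $\mY_j$ covers $P\fragmentco{\pi^0_j}{\pi^0_{j+2}}$. Since the black components of $S^{\ell-1}$ have period $\bc(\bG_{S^{\ell-1}})$ in $P_{|S^{\ell-1}}$ (\cref{lem:periodicity}), this fragment contains a full window $P_{|S^{\ell-1}}\fragmentco{z\cdot\bc}{(z+2)\cdot\bc}$ with $z=j$ (shifted appropriately if $\pi^0_0$ is not at offset $0$; we would check this offset carefully). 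Hence \cref{lem:add_iter} applies. We also need that the components defined later, with respect to the final $S$, do not invalidate earlier conditions. The lemma is stated with respect to the final $\bG_S$ and takes $R$ into account, so that case is handled; in the remaining case $\bc(\bG_S)=0$, the enclosure is vacuous, or we truncate the list of $\mB$'s.

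\textbf{Size and cost.} When $\mY$ is not captured, \cref{lem:periodhalves} (applicable because $C_S\neq\fragmentco{0}{\bc(\bG_S)}$ and $\mY$ is optimal) shows that $\mY$ matches no two characters of the same black component. Consequently, in $\bG_{S\cup\{\mY_j\}}$ every component lying inside the window either turns red or merges with a different component. Because the window spans two full periods, each component is affected, so $\bc$ at least halves. Correspondingly, the new $m_0$ at least doubles, as merged components collect the members of both. Thus $|S|=\Oh(\log m)$. The bound $|S|=\Oh(|\floor{\OccE_k(P,T)/k}|)$ comes from a separate observation: distinct uncaptured alignments $\mY$ selected in different iterations must start more than roughly $k$ positions apart, since $S$ captures all occurrences near previously added ones (capture has slack $w+3k\ge k$). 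We would argue this by charging each iteration to a distinct bucket $\floor{t/k}$.

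For the cost, the $\Oh(1)$ full alignments contribute $\Oh(k)$. In the $\ell$-th iteration of the second loop, the fragments $P\fragmentco{\pi^0_j}{\pi^0_{j+2}}$ for $j\in\fragmentco{0}{m_0-2}$ cover each edit of $\mY$ at most twice. The minimizing $j$ therefore satisfies $\cost(\mY_j)\le 2k/(m_0-2)$. Because $m_0$ at least doubles per iteration, starting from $m_0\ge 3$ (ensured once the window exists; otherwise we stop early), these costs form a geometric series summing to $\Oh(k)$.

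\textbf{Encoding (item 3).} We apply \cref{prp:enc_gs} to $S$. The term $|S|\log m$ appears directly. For the remaining term, $x\mapsto x\log(m|\Sigma|/x)$ is concave and increasing for $x\le m|\Sigma|/e$, so by Jensen's inequality $\sum_{\mX} \cost(\mX)\log(m|\Sigma|/\cost(\mX))\le C\log(m|\Sigma|\cdot|S'|/C)$, where $C=\cost(S)=\Oh(k)$ and $S'$ is the set of alignments of positive cost. This crude bound costs an extra $\log|S'|$ factor. To avoid it, we use the geometric decay of the costs instead: the term for $\mB_\ell$ is $\Oh\big((k/2^\ell)(\log(m|\Sigma|/k)+\ell)\big)$, and $\sum_\ell k\ell/2^\ell=\Oh(k)$.

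\textbf{Main obstacle.} The step we expect to be hardest is the halving/doubling argument. It requires showing that the window chosen relative to $S^{\ell-1}$ still works after the components change, and that $m_0$ genuinely doubles even when some components turn red and $\pi^0$ is re-indexed. We would handle this by relating $\bc(\bG_{S^{\ell}})$ to $g_\ell=\gcd(g_{\ell-1},y-x)$ via \cref{lem:periodicity}, and showing $g_\ell\le g_{\ell-1}/2$ since $y-x\not\equiv 0 \pmod{g_{\ell-1}}$ by \cref{lem:periodhalves}.
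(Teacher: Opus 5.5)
\textbf{Verdict.} The plan is largely the paper's own, but the bound $|S|=\Oh(|\floor{\OccE_k(P,T)/k}|)$ rests on an unproved claim, and that claim is the only nontrivial part of that bound.

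\textbf{What matches the paper.} Most of your plan follows the paper's proof. Succinct enclosure comes from iterating Lemma~\ref{lem:add_iter} with window index $z=j$; no offset adjustment is needed, since $\pi^0_j=\pi(j\cdot\bc(\bG_S))$ by definition. Lemma~\ref{lem:periodhalves} gives the red-or-merge argument. The averaging $\cost(\mY_j)\le 2k/(m_0-2)$ over the two disjoint families $j\equiv_2 r$ and the geometric $\sum_\ell k\ell/2^\ell$ estimate for the encoding are also the paper's.

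\textbf{The gap.} You justify $|t_\ell-t_{\ell'}|>k$ by saying that $S$ already captures all occurrences near previously added ones. That is exactly what needs proof.
\begin{itemize}
\item For $\ell'$ in the second loop, $S$ contains only the partial alignment $\mB_{\ell'}=\mY_j\subsetneq\mY_{\ell'}$. This piece starts at $\mY_{\ell'}(\pi^0_j)$, not at $t_{\ell'}$.
\item Capturing at iteration $\ell$ is measured against $\tau^0_i$, $\pi^0_0$ and $w$ of the set $S_{\ell-1}$ present at the start of iteration $\ell$. These are not the quantities from the time $\mY_{\ell'}$ was selected.
\end{itemize}

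\textbf{How to close it.} Apply the contrapositive of Lemma~\ref{lem:periodhalves} to the full optimal alignment $\mY_{\ell'}$, with respect to $S_{\ell-1}$. This is legitimate because $C_{S_{\ell-1}}\ne\fragmentco{0}{\bc(\bG_{S_{\ell-1}})}$ whenever iteration $\ell$ runs.
\begin{enumerate}
\item First show that $\mY_{\ell'}$ aligns some $P\position{\pi^c_j}$ with some $T\position{\tau^c_i}$ in the same black component of $\bG_{S_{\ell-1}}$.
\item The paper gets this as follows. Every black component of $\bG_{S_{\ell-1}}$ is a union of black components of $\bG_{S_{\ell'}}$. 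A component $D$ inside it cannot have turned red after $\mB_{\ell'}$ was added. So a black edge of $\mB_{\ell'}\subseteq\mY_{\ell'}$ merged $D$ with some $D'$, and $D'$ lies in the same component of $\bG_{S_{\ell-1}}$.
\item The lemma then gives $|\tau^0_{\hi}-t_{\ell'}-\pi^0_0|\le w+2k$ for some $\hi$.
\item Only now does the slack $w+3k$ show that $S_{\ell-1}$ captures every $k$-error occurrence starting within $k$ of $t_{\ell'}$. Hence $|t_\ell-t_{\ell'}|>k$, and the bucket counting goes through.
\end{enumerate}

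\textbf{A smaller inaccuracy.} The algorithm never ``stops early'' when $m_0<3$. The first loop exists to prevent this. Each full alignment $\mA_\ell$, by the same red-or-merge argument, at least doubles $s_P$, the minimum number of $P$-characters in a black component. Hence $m_0\ge s_P\ge 4$ on entering the second loop, and $m_0-2\ge 2^{\ell+1}-2$ in its $\ell$-th iteration.

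Tracking $s_P$ rather than $m_0$ also removes the re-indexing worry you list as your main obstacle. The value $m_0$ only satisfies $m_0-1\le s_P\le m_0$, and the $0$-th component may be replaced when components merge or turn red. By contrast, $s_P$ is monotone under merging, and turning components red cannot decrease it.
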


\subparagraph*{Proof sketch (full proof in \cref{subsec:together}).}
In \cref{alg:construction_S}, we maintain the invariant that \( S \) succinctly
encloses \( T \) at the start of every iteration of the loops in
Lines~\ref{loop:add_A} and~\ref{loop:add_B}.

This invariant holds at the beginning as we set $S\coloneqq \{\mXpref,\mXsuf\}$.
We proceed by adding at most two alignments $\mA_\ell$ to $S$ corresponding to
uncaptured $k$-error occurrences in the loop of Line~\ref{loop:add_A}.
After the $\ell$-th iteration, one of $\bc(\bG_S) = 0$, $C_S =
\fragmentco{0}{\bc(\bG_S)}$, or $s_P \geq 2^{\ell}$ holds, where $s_P$ is the minimum
number of characters of $P$ in a black component of $\bG_S$.
Indeed, by \cref{lem:periodhalves}, after adding $\mY$ to $S$, each black component of
$\bG_S$ either becomes red or is merged with another component.
Hence, unless $\bc(\bG_{S\cup\{\mY\}})=0$, every black component of
$\bG_{S\cup\{\mY\}}$ contains at least two black components of $\bG_S$, so $s_P$ at
least doubles.

Adding the two alignments of the form $\mA_\ell$ ensures that \( m_0 \geq s_P \geq 2^2
= 4 \) when we enter the loop at Line~\ref{loop:add_B} (so that \( m_0 \geq 3 \) in
Line~\ref{alg:construction_S:select_i}, and the quantifier \( j \in \fragmentco{0}{m_0
- 2} \) is well-defined). There, we start adding alignments of the form $\mB_\ell$ to
$S$. Each time we add such an alignment, we use \cref{lem:add_iter,lem:periodhalves}
to prove that $S$ still succinctly encloses~$T$.
Using a similar argument as before, $\bc(\bG_S) = 0$, $C_S =
\fragmentco{0}{\bc(\bG_S)}$, or $m_0 \geq s_P \geq 2^{2 + \ell}$ at the end of the
$\ell$-th iteration.
Since $s_P \leq m$, this ensures that the loop finishes after $\Oh(\log m)$ iterations
and $|S| = \Oh(\log m)$.
When it does, all $k$-error occurrences are captured.

Next, we argue why the bound on $\cost(S)$ holds.
Since each of the alignments $\mY_j$ can overlap only with at most two others, we
obtain that the sum of the costs of the alignments $\mY_j$ for all $j \in
\fragmentco{0}{m_0 - 2}$ is at most $2k$.
Consequently, in the $\ell$-th iteration the cost of the alignment $\mY_j$ selected at
Line~\ref{alg:construction_S:select_i} is at most
$\cost(\mY_j) \leq 2k / (m_0-2) \leq 2k / (2^{\ell+1} - 2) \leq k / 2^{\ell-1}$. Thus,
$\cost(S) \leq 4k + k \cdot \sum_{\ell=1}^{\Oh(\log m)} {1}/{2^{\ell-1}} \le 6k =
\Oh(k)$.

Lastly, we give the bound on the encoding size.
The endpoints of the alignments $\mX\in S$ can be encoded in $\Oh(|S|\log m)$ bits.
On top of that, by \cref{prp:enc_gs}, the edit information $\{\sE_{P, T}(\mX) \mid
\mX\in S\}$ can be encoded in space asymptotically bounded as follows, using $S^+
\coloneqq \{\mX \in S \mid \cost(\mX) > 0\}$ and the monotonicity of $x \log (k/x)$
for $0 < x \le k/e$:
\begin{align*}
    \sum_{\mX \in S^+} \cost(\mX) \cdot \log \left( {m|\Sigma|}/{ \cost (\mX)}
    \right)
        &= \cost(S) \cdot \log \left( {m|\Sigma|}/{k} \right)
        + \sum_{\mX \in S^+} \cost(\mX) \cdot \log \left( {k}/{ \cost (\mX)} \right) \\
        &\leq k \log \left({m|\Sigma|}/{k}\right) + \Oh(k)
        + k \cdot \sum_{\ell=1}^{\Oh(\log m)} ({\ell-1})/{2^{\ell-1}}\\
        &\leq \Oh\left(k \log \left({m|\Sigma|}/{k}\right)\right)
        \text{ bits.}
\end{align*}
To show $|S| = \Oh(|\floor{\OccE_k(P,T)/k}|)$, we argue in the full proof that the
alignments $\mY$ in Line~\ref{alg:selstwo} start at least $k$ positions away from each
other.
In total, this completes the proof sketch of \cref{lem:construction_S:loop}.

Now, following the approach in \cite{KNW24}, we can encode the set $\{\sE_{P, T}(\mX) \mid
\mX \in S\}$ and the pairs $\{(c, T\position{\tau_0^c}) \mid c \in C_S\}$ using $\Oh(k
\log (m|\Sigma| / k) + |S|\log m)$ bits.
Decoding would then proceed via an appropriately modified version of \cref{thm:occ_old}.
Unfortunately, this falls short of proving \cref{thm:ccompl} because the $|S|\log m$ term
might dominate the encoding size; consequently, we must develop an alternative for cases
where this term is prohibitively large.

\subsubsection*{\boldmath Workaround}
Note that
\(
    |S|\log m = \Oh(|S|\log {m}/{\log m}) \le \Oh(\log m\cdot \log ({m}/{\log m})),\)
so $|S|\log m = \Oh(k \log ({m}/{k}))$ holds as long as $k \geq \log
m$ or $|S| = \Oh(k)$.
Thus, the critical scenario arises when $k < \log m$ and $|S|=\omega(k)$.
Not coincidentally, \cref{lem:construction_S:loop} also yields $|S| = \Oh(|\lfloor
\OccE_k(P,T)/k \rfloor|)$,
which lets us derive $|\lfloor \OccE_k(P,T)/k \rfloor|=\omega(k)$ in this case.
This condition characterizes one of the two fundamental structural results for PMwE from
\cite{CKW20}: in a setting (almost) identical to ours, there is a primitive string $Q$
with $|Q| \leq m/128k$ such that the edit distance between $P$ and a prefix of $Q^\infty$
does not exceed $2k$.
Moreover, \cite[Theorem 5.2]{CKW20} upper-bounds the distance between $T$ and a prefix of
$Q^\infty$ by $6k$.
The subsequent characterization in \cite[Section 5]{CKW20} implies that, for every
$k$-error occurrence $T\fragmentco{t}{t'}$ of $P$, there exists an exact occurrence of $Q$
in $P$ that is matched perfectly both in the alignment from $P$ to the prefix of
$Q^\infty$ and in the alignment from $P$ via $T\fragmentco{t}{t'}$ to a fragment of
$Q^\infty$.

Our workaround is to initialize $S$ so that the periodic structure of $\bG_S$
\emph{essentially coincides} with the structure induced by $Q$.
Namely, every black component corresponds to a position $r\in \fragmentco{0}{|Q|}$ and
consists of all characters of $P$ and $T$ that the alignments with prefixes of $Q^\infty$
match with characters of the form $Q^\infty\position{r+j|Q|}$ for $j\in \Zz$.
By \cref{lem:periodhalves} and the above characterization of $k$-error occurrences, this
guarantees that all $k$-error occurrences are captured by $S$.
To achieve such $\bG_S$, we include in $S$ alignments for the two approximate occurrences
of $P$ as a prefix and a suffix of $T$, and a third $14k$-edit alignment that, for
every~$j$, aligns the $j$th copy of $Q$ in $P$ with the $(j+1)$th copy of $Q$ in $T$. This
yields the following lemma proved in \cref{subsec:together}.

\begin{restatable*}{lemma}{constructSper}\label{lem:constructSper}
    Suppose that $n \leq 3/2 \cdot m - 28k$, that there are $k$-error occurrences of $P$
    appearing as both a prefix and a suffix of $T$, and that there is a primitive string
    $Q$ with $|Q| \leq m/128k$ and $\edp{P}{Q} \leq 2k$.
    Then, we can construct a set $S$ of at most three $14k$-edit alignments of $P$ onto
    fragments of $T$ such that, when the notions of \emph{succinctly enclosing} and
    \emph{capturing} are interpreted with threshold $14k$, all of the following hold.
    \begin{enumerate}
        \item $S$ succinctly encloses $T$,
        \item $S$ captures all $k$-error occurrences of $P$ in $T$,
        \item the set $\{\sE_{P, T}(\mX) \mid S \ni \mX \mid P\fragmentco{p}{p'}\onto
            T\fragmentco{t}{t'}\}$ together with all starting/ending points of the
            alignments in $S$ can be encoded in $\Oh(k\log(m|\Sigma|/k))$ bits.
            \qedhere
    \end{enumerate}
\end{restatable*}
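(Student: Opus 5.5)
We construct $S=\{\mXpref,\mXsuf,\mA_1\}$ explicitly with $b=0$, so that succinct enclosure reduces to plain enclosure. The two outer alignments come from the assumed $k$-error occurrences of $P$ as a prefix and as a suffix of $T$, and they already have cost at most $k\le 14k$. For $\mA_1$, fix an optimal alignment $\mathcal{Q}_P : P\onto Q^\infty\fragmentco{0}{\ell_P}$ of cost at most $2k$. Composing it with $\mXpref$ gives an alignment of $T\fragmentco{0}{t_1}$ onto a prefix of $Q^\infty$ of cost at most $3k$. Following \cite[Theorem 5.2]{CKW20}, we take an alignment $\mathcal{Q}_T : T\onto Q^\infty\fragmentco{0}{\ell_T}$ of cost at most $6k$. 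Because $n\le 3/2\cdot m-28k$, there is slack for an extra copy of $Q$. We define $\mA_1$ as the composition of $\mathcal{Q}_P$, then the shift $Q^\infty\fragmentco{0}{\ell_P}=Q^\infty\fragmentco{|Q|}{|Q|+\ell_P}$, then the inverse of $\mathcal{Q}_T$ restricted to the corresponding fragment of $T$. The cost is at most $2k+6k$ plus boundary adjustments, which we aim to keep within $14k$. Each of the three alignments has cost $\Oh(k)$ and $|S|=3$, so the encoding bound in item~3 follows directly from \cref{prp:enc_gs}. The enclosure condition $n\le 2m-28k$ is implied by the hypothesis.

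The core of the proof is identifying the black components. Any character that $\mathcal{Q}_P$ or $\mathcal{Q}_T$ edits lies within $\Oh(1)$ edit positions of some red edge. We therefore argue that every black component of $\bG_S$ is contained in a class $\{$characters mapped to $Q^\infty\position{r+j|Q|}: j\in\Zz\}$ for a fixed $r\in\fragmentco{0}{|Q|}$. The reason is that black edges of $\mXpref$, $\mXsuf$ and $\mA_1$ preserve residues modulo $|Q|$ whenever both endpoints are matched by $\mathcal{Q}_P$ and $\mathcal{Q}_T$. This is immediate for $\mA_1$ by construction. For $\mXpref$ and $\mXsuf$ we need that the induced offset is a multiple of $|Q|$. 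We get this from primitivity of $Q$ and the fact that $Q$ has exact occurrences in long cheap stretches of $P$ and $T$; an offset that is not a multiple would contradict the synchronization property of primitive strings. In the other direction, \cref{lem:periodicity} shows that black components are residue classes modulo $\bc(\bG_S)=g_0$. Since $\mA_1$ shifts by one period, $g_0$ divides the $T_{|S}$-length of one copy of $Q$. Comparing the two descriptions should give that $\bc(\bG_S)$ equals the number of surviving residues $r$, with $\pi^c_j$ corresponding to $Q$-positions $r+j|Q|$.

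For capturing, take any $k$-error occurrence $T\fragmentco{t}{t'}$ with an optimal alignment $\mY$. If $C_S=\fragmentco{0}{\bc(\bG_S)}$, we are done by the remark preceding the construction. Otherwise, suppose for contradiction that $|\tau^0_i-t-\pi^0_0|>w+2k$ for every $i$ (with threshold $14k$). Then \cref{lem:periodhalves} says that $\mY$ matches no $P\position{\pi^c_j}$ with any $T\position{\tau^c_i}$. The characterization in \cite[Section 5]{CKW20} gives an exact copy of $Q$ in $P$ that is matched perfectly both by $\mathcal{Q}_P$ and by the composition of $\mY$ with $\mathcal{Q}_T$. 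Hence some character of that copy lying in a black component, say at $Q$-position $r$, is matched by $\mY$ to a character of $T$ at the same $Q$-residue $r$. By the previous paragraph, that character belongs to the same black component, which is a contradiction. One caveat is that the copy must contain a black-component character. We plan to ensure this because at most $\Oh(k)$ copies are touched by red edges while $m/|Q|\ge 128k$ copies exist, and we would choose the copy from \cite{CKW20} among the untouched ones, or strengthen that characterization accordingly.

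We expect the main obstacle to be the precise bookkeeping in the second paragraph. It requires showing that black components are exactly the $Q$-residue classes, in particular ruling out merges caused by $\mXpref$ and $\mXsuf$ at non-multiple offsets. It also requires tuning the constants so that $\mA_1$ truly has cost at most $14k$ and fits inside $T$.
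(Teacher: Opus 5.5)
Your plan has genuine gaps. They sit in exactly the ``bookkeeping'' you flag, and they are not a matter of tuning constants.

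\emph{First gap: $\mA_1$ need not exist within cost $14k$.} The hypothesis $n\le 3/2\cdot m-28k$ is an upper bound on $n$, so it gives no slack for an extra copy of $Q$. Nothing excludes $n-m=\Oh(k)$ (e.g.\ $T=P$). In that case every alignment of $P$ onto a fragment of $T$ shifted by one period costs roughly $|Q|$, which can be as large as $m/128k$. Even when $n-m\approx|Q|$, your $\ell_T$ comes from an arbitrary optimal $T\onto Q^\infty\fragmentco{0}{\ell_T}$ and is not tied to $\ell_P$ modulo $|Q|$. The shifted copy can therefore overhang, and the ``boundary adjustments'' are not controlled.

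The missing step is the paper's first claim. Apply \cite[Lemma~5.4]{CKW20} to the prefix and suffix occurrences, each composed with $P\onto Q^\infty\fragmentco{0}{q_P}$ (plus a small patch when the lemma returns $q_T<q_P$). This yields $q_T$ with $\ed(T,Q^\infty\fragmentco{0}{q_T})\le 12k$, $q_T\equiv_{|Q|}q_P$ and $q_T\ge q_P$. That gives a clean dichotomy:
\begin{itemize}
\item either $q_T=q_P$, a degenerate case where the middle alignment collapses to $\mXpref$ and capturing is checked directly, since every occurrence starts within $\Oh(k)$ of $0$;
\item or $q_T\ge q_P+|Q|$, and the shifted copy fits exactly.
\end{itemize}

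\emph{Second gap: the given cost-$k$ alignments break the residue structure.} If you keep them as $\mXpref,\mXsuf$, black edges need not preserve $Q$-residues. Primitivity only rules out unsynchronized exact matches of length at least $|Q|$, and a cheap alignment may contain many shorter ones.

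A concrete counterexample: take $Q=\texttt{aab}$, $P=Q^N$, $T=Q^{N+1}$. Let $\mXpref$ delete $P\position{0}$, match $P\position{1}$ with $T\position{0}$, insert $T\position{1}$, and be the identity afterwards; this has cost $2$. In the resulting $\bG_S$, the black component of residue $1$ also contains $T\position{0}$, which has $Q$-coordinate $0$.

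So your claim that every black component lies in one residue class is false. Your contradiction actually needs the other direction: all black characters of a given residue, in both $P$ and $T$, lie in a single black component, so the partner that $\mY$ assigns is black and in the same component. That direction is not argued, and ``the $T_{|S}$-length of one copy of $Q$'' is not a constant you can use.

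The paper avoids this by building \emph{all three} alignments as compositions through $Q^\infty$, with shifts $0$, $q_T-q_P$ and $|Q|$. These are all multiples of $|Q|$, which again requires the congruence above. The strings $P_Q,T_Q$ in $Q^\infty$-coordinates then satisfy $P_Q\position{z}=T_Q\position{z+\Delta_\mX}$. \cref{fct:periodicity} gives them period $|Q|$, so $\pi^0_j$ and $\tau^0_i$ sit at coordinates $j|Q|+q$ and $i|Q|+q$. Capturing then follows directly from \cite[Theorem~5.2]{CKW20}, as $|\tau^0_i-\ell-\pi^0_0|\le 14k$.

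Once that structure is in place, your route through \cref{lem:periodhalves} could also be completed, with two caveats:
\begin{itemize}
\item it only handles $C_S\ne\fragmentco{0}{\bc(\bG_S)}$, whereas item~2 asserts capturing unconditionally;
\item a copy of $Q$ untouched by red edges need not contain a black character, because redness propagates along black edges.
\end{itemize}
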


\section{Lower Bound}
\label{subsec:lb}

\setcounter{mtheorem}{1}
\cclb
\begin{proof}
    Set $p \coloneqq \lfloor n/m \rfloor$ and $T \coloneqq S_0\cdot S_1 \cdots S_{p-1}
    \cdot \zero^{n-pm}$, where $S_0,\ldots,S_{p-1} \in \Sigma^{m}$ are strings that
    contain at most $k$ characters from $\Sigma \setminus \{\zero\}$ and all remaining
    characters are equal to $\zero$. Clearly, $\ed(P,T\fragmentco{q m}{(q+1)m}) \leq k$
    for every $q \in \fragmentco{0}{p}$.
    Moreover, for every $q \in \fragmentco{0}{p}$, there is a unique optimal alignment
    $\mA_q : P \onto T\fragmentco{q m}{(q+1)m}$ that performs substitutions exactly in the
    positions where the non-zero characters appear (it is not difficult to verify that any
    deletions or insertions would increase the cost). Thus, $\mA_q$ allows us to fully
    recover $S_q$ from the edit information $\sE_{P, T}(\mA_q)$, and consequently $T$ can
    be fully recovered.

    Finally, the number of possibilities for each block $S_q$ is $\sum_{i=0}^{k}
    \binom{m}{i} (|\Sigma| - 1)^i$.
    If $k \leq m/2$, then this quantity is at least $\binom{m}{k} (|\Sigma| - 1)^k$, and
    the standard estimate $\binom{m}{k} = 2^{\Omega(k\log(m/k))}$ yields $\binom{m}{k} =
    2^{\Omega(k\log(|\Sigma| m/k))}$ for $|\Sigma| = 2$ and $\binom{m}{k} (|\Sigma| - 1)^k
    \ge \binom{m}{k}(|\Sigma|/2)^k= 2^{\Omega(k\log(|\Sigma| m/k))}$ for $|\Sigma| > 2$.
    Otherwise, the sum contains the term $\binom{m}{\lceil m/2 \rceil} (|\Sigma| -
    1)^{\lceil m/2 \rceil}$.
    If $|\Sigma| = 2$, then this term is simply $\binom{m}{\lceil m/2 \rceil} \ge 2^m/m
    \ge 2^{\Omega(m)} \ge 2^{\Omega(m\log |\Sigma|)}$.
    If $|\Sigma| > 2$, then the term is at least $(\Sigma-1)^{\lceil m/2 \rceil} \ge
    (|\Sigma|/2)^{ m/2} = 2^{\Omega(m\log |\Sigma|)}$.
    In either case, we have $2^{\Omega(m\log |\Sigma|)} \ge 2^{\Omega(k\log(|\Sigma|
    m/k))}$ when $m/2 < k \le m$.
    Therefore, the number of possibilities for $T$ is $2^{\Omega(n/m\cdot k\log(|\Sigma| m/k))}$.
\end{proof}

\bibliography{main}

\appendix
\clearpage

\section{Additional Preliminaries}

\subsubsection*{Edit Distance Between Strings and Periodic Extensions}
We denote the minimum edit distance between a string $S$ and any prefix of a string
$T^\infty$ by $\edp{S}{T} \coloneqq \min\{\ed(S,T^\infty\fragmentco{0}{j}) \mid j \in
\Zz\}$.
We denote the minimum edit distance between a string $S$ and any substring of $T^\infty$
by $\edl{S}{T} \coloneqq \min\{\ed(S,T^\infty\fragmentco{i}{j}) \mid i, j \in \Zz\text{
and }i \le j\}$.
Lastly, we set $\eds{S}{T} \coloneqq \min \{\ed(S,T^{\infty}\fragmentco{i}{j|T|}) \mid i,j
\in \Zz \text{ and } i\leq j|T|\}$.

\subsubsection*{More on Alignments}
An alignment $\mA'':X\fragmentco{x}{x'}\onto Z\fragmentco{z}{z'}$ is a \emph{product} of
alignments $\mA:X\fragmentco{x}{x'}\onto Y\fragmentco{y}{y'}$ and
$\mA':Y\fragmentco{y}{y'}\onto Z\fragmentco{z}{z'}$ if, for every $(\bar{x},\bar{z})\in
\mA''$, there is $\bar{y}\in \fragment{y}{y'}$ such that $(\bar{x},\bar{y})\in \mA$ and
$(\bar{y},\bar{z})\in \mA'$.
A product alignment always exists, and every product alignment satisfies
\[
    \edal{\mA''}(X\fragmentco{x}{x'},Z\fragmentco{z}{z'})\le
    \edal{\mA}(X\fragmentco{x}{x'},Y\fragmentco{y}{y'})+\edal{\mA'}(Y\fragmentco{y}{y'},\allowbreak
    Z\fragmentco{z}{z'}).
\]

For an alignment $\mA:X\fragmentco{x}{x'}\onto Y\fragmentco{y}{y'}$ with \(\mA =
(x_i,y_i)_{i=0}^\ell\), we define the \emph{inverse alignment} $\mA^{-1} :
Y\fragmentco{y}{y'}\onto X\fragmentco{x}{x'}$ as $\mA^{-1} \coloneqq
(y_i,x_i)_{i=0}^\ell$.
The alignment $\mA^{-1}$ satisfies
\[
    \edal{\mA^{-1}}(Y\fragmentco{y}{y'},X\fragmentco{x}{x'})=\edal{\mA}(X\fragmentco{x}{x'},Y\fragmentco{y}{y'}).
\]

\begin{lemma}\label{obs:drift}
    Every alignment $\mA:X\fragmentco{x}{x'}\onto Y\fragmentco{y}{y'}$ satisfies
    \[
        |(y'-y)-(x'-x)| \le \edal{\mA}(X\fragmentco{x}{x'},Y\fragmentco{y}{y'}).
    \]
\end{lemma}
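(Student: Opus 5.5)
We argue by counting the steps of the alignment. Write $\mA=(x_i,y_i)_{i=0}^{\ell}$ with $(x_0,y_0)=(x,y)$ and $(x_\ell,y_\ell)=(x',y')$. By \cref{def:alignment}, each step $i\in\fragmentco{0}{\ell}$ is of exactly one of three kinds: a diagonal step $(+1,+1)$ (a match or a substitution), a horizontal step $(+1,0)$ (a deletion), or a vertical step $(0,+1)$ (an insertion). Let $d$, $h$, and $v$ denote the number of diagonal, deletion, and insertion steps, respectively.

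Summing the coordinate increments over all steps gives
\[
x'-x = d+h \qquad\text{and}\qquad y'-y = d+v .
\]
Subtracting these equalities yields $(y'-y)-(x'-x) = v-h$. Hence $|(y'-y)-(x'-x)| = |v-h| \le v+h$.

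Finally, the cost of $\mA$ counts every inserted, deleted, or substituted character. In particular, it is at least the number of insertions plus deletions, so $v+h \le \edal{\mA}(X\fragmentco{x}{x'},Y\fragmentco{y}{y'})$. Combining this with the previous bound gives the claim.
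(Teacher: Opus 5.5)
Your proposal is correct and takes essentially the same approach as the paper, whose one-line proof observes that each insertion or deletion changes the length difference by one while matches and substitutions leave it unchanged. Your count $x'-x=d+h$, $y'-y=d+v$, giving $|v-h|\le v+h\le\cost(\mA)$, simply writes that observation out explicitly.
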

\begin{proof}
    Each insertion and deletion changes the difference between the lengths of the aligned
    fragments by one, whereas each match and substitution leaves it unchanged.
\end{proof}

\subsubsection*{Self-Edit Distance of a String}
A measure of compressibility that will be instrumental in this paper is the
\emph{self-edit distance} of a string, recently introduced by Cassis, Kociumaka, and
Wellnitz~\cite{CKW23}.
An alignment $\mA : X \onto X$ is a \emph{self-alignment} if $\mA$ does not align any
character $X\position{x}$ to itself.
The \emph{self-edit distance} of $X$, denoted by $\selfed(X)$, is the minimum cost of a
self-alignment; formally, we set $\selfed(X) \coloneqq \min_{\mA} \edal{\mA} (X, X)$,
where the minimization ranges over all self-alignments $\mA : X \onto X$.
A small self-edit distance implies that we can efficiently encode the string.

\begin{lemma}\label{prp:self_ed_enc}
    Let $X$ be a string over $\Sigma$.
    Further, let $\fragment{a_1}{b_1}, \fragment{a_2}{b_2}, \ldots,
    \fragment{a_\ell}{b_\ell} \subseteq \fragmentco{0}{|X|}$ be pairwise disjoint. Then,
    we can encode $\{(x, X\position{x}) \mid x \in \bigcup_{i=1}^{\ell}
    \fragment{a_i}{b_i}\}$ using $\Oh(d \cdot \log (1+|\Sigma||X|/ d))$ bits, where $d
    \coloneqq \sum_{i=1}^{\ell} \selfed(X\fragment{a_i}{b_i})$.
\end{lemma}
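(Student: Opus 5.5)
We encode each interval $X\fragment{a_i}{b_i}$ separately and then concatenate the pieces. For a single string $Y = X\fragment{a_i}{b_i}$ with $d_i \coloneqq \selfed(Y)$, we fix an optimal self-alignment $\mA : Y \onto Y$. We then store the length $|Y|$, the offset $a_i$, and the edit information $\sE_{Y,Y}(\mA)$.

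Reconstruction rests on the fact that $\mA$ never aligns a character to itself. Every match therefore pairs $Y\position{x}$ with $Y\position{y}$ for some $y \ne x$. By monotonicity, each such pair has either $y<x$ throughout a maximal run or $y>x$ throughout. Consider a diagonal run with $y < x$. Processing $Y$ left to right, we recover $Y\position{x}$ by copying the already known $Y\position{y}$. Runs with $y > x$ lie strictly above the diagonal, and by the symmetric argument we recover them by processing $Y$ right to left. The alignment path crosses the main diagonal only at a character that is inserted or deleted, since a crossing cannot be a self-match. Between such crossings the path stays on one side, so each side-segment is fully determined. Substituted and inserted characters are given explicitly in the edit information.

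It is cleaner, though, to argue that every position is resolved by induction along a suitable order. A position $x$ is either explicitly stored, because it is inserted or substituted as a $Y$-side character, or it is matched to a partner $y \neq x$. Let $y'$ be the position matched against $x$ when $x$ is viewed on the $X$-side. The functional graph $x \mapsto$ partner is acyclic within each side segment because the shift $y - x$ has constant sign there. We therefore resolve positions by following partners until we reach an explicit character or a character recovered in an earlier segment. This leads to the standard LZ-style decoding, and I expect verifying this acyclicity cleanly to be the main obstacle. If it proves awkward, I would instead invoke the known bound from \cite{CKW23} that the LZ77 factorization of $Y$ has $\Oh(\selfed(Y))$ phrases. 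I would then encode these phrases, with character values for the literal phrases.

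For the bit count, the tuples of $\sE_{Y,Y}(\mA)$ are monotone in their position components, exactly as in the proof of \cref{prp:enc_gs}. Writing the positions with stars and bars relative to the global frame $\fragmentco{0}{|X|}$, over all intervals together, costs $\Oh(\log\binom{|X|+d}{d}) = \Oh(d\log(1+|X|/d))$ bits. The same encoding also delimits the interval endpoints $a_i,b_i$. Here we charge each interval at least $1$ to $d$ whenever it is nonempty, since $\selfed(Y) \ge 1$ for $|Y|\ge1$. The character values cost $\Oh(d\log|\Sigma|)$ bits. Summing gives $\Oh(d\log(1+|\Sigma||X|/d))$.
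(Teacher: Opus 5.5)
Your encoding and your bit count match the paper's: you store the edit information $\sE_{Y,Y}(\mA)$ of an optimal self-alignment for each interval, and write all positions together with the interval endpoints as one monotone stars-and-bars sequence in the global frame. The gap is in the decoding argument, which is the real content of the lemma. In the left-to-right pass you copy $Y[x]$ from a partner $y<x$ that is ``already known'', and you justify acyclicity only within a single side segment. Chains that stay inside one segment are trivially fine. The only way decoding could fail is a chain that switches sides of the diagonal. A priori, $y$ could itself be matched, in its first-coordinate role, to a position $>y$, so $Y[y]$ would only become available in the right-to-left pass. The symmetric situation could occur as well. Any cycle of dependencies would have to mix both signs of the shift. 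Your argument does not exclude this, and you flag it yourself as the open point.

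The missing observation takes one line. Let $f(x)$ be the position to which $\mA$ matches the first-coordinate character $Y[x]$, if any. Since $\mA$ is monotone, $f$ is strictly increasing on its domain. Since $\mA$ is a self-alignment, $f$ has no fixed point. Hence $f(x)<x$ implies $f(f(x))<f(x)$ whenever $f(f(x))$ is defined. So every chain $x,f(x),f^2(x),\dots$ is strictly monotone and ends at a position that $\mA$ deletes or substitutes, whose character is stored as $\mathsf{cx}$. With this, your two passes (or memoized chain-following) are correct.

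The paper avoids the issue differently. It replaces each point $(x_t,y_t)$ of the self-alignment by $(\max(x_t,y_t),\min(x_t,y_t))$. This gives a self-alignment of the same cost with $x\ge y$ everywhere, so every match copies from strictly further left and a single left-to-right LZ77-style pass suffices.

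Your LZ77 fallback would not meet the bound as described. LZ77 source positions are arbitrary and cost $\Theta(\log|X|)$ bits each in general, giving only $\Oh(d\log|X|)$. The self-alignment route works precisely because its copy sources move monotonically and compress by stars and bars.

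Finally, absorbing the $\Oh(d\log|\Sigma|)$ term into $\Oh(d\log(1+|\Sigma||X|/d))$ needs $d\le 2|X|$. This follows from $\selfed(Y)\le 2|Y|$ and the disjointness of the intervals; you should state it.
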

\begin{proof}
    Without loss of generality, assume $a_1 \leq b_1 < a_2 \leq b_2 < \cdots < a_{\ell}
    \leq b_{\ell}$.
    For $i \in \fragment{1}{\ell}$, let $d_i \coloneqq \selfed(X\fragment{a_i}{b_i})$ and
    let $\mX_i : X\fragment{a_i}{b_i} \onto X\fragment{a_i}{b_i}$ be an optimal alignment
    of cost $d_i$.
    We may further assume that each point $(x,y)\in \mX_i$ satisfies $x \ge y$.%
    \footnote{If this is not the case, we write $\mX_i = (x_t,y_t)_{t=0}^m$ and observe
        that $\mA' = (\max(x_t,y_t), \min(x_t,y_t))_{t=0}^m$ is a self-alignment of the
        same cost.
        Geometrically, this means that we replace parts of $\mX_i$ below the main diagonal
        with their mirror image.}

    First, we claim that the edit information $\sE_{X,X}(\mX_i)$ can be encoded using two
    sequences of $\Oh(d_i)$ unique and non-decreasing indices in $\fragment{a_i}{b_i}$ and
    two sequences of $\Oh(d_i)$ characters from $\Sigma \cup \{\bot\}$, where $\bot$ is a
    special symbol representing the empty string $\varepsilon$. To see this, it suffices
    to notice that $\sE_{X,X}(\mX_i)$ is monotone in its first and third component.

    Next, we show that through $\sE_{X,X}(\mX_i)$ we can recover the set $\{(x,
    X\position{x}) \mid x \in \fragment{a_i}{b_i}\}$.
    To this end, notice that $\sE_{X,X}(\mX_i)$ allows us to recover a partition of
    $\fragment{a_i}{b_i}$ into individual positions that $\mX_i$ deletes or substitutes,
    and maximal fragments that $\mX_i$ matches perfectly.
    Observe that for any fragment $X\fragmentco{x}{x'}$ that $\mX_i$ matches perfectly to
    $X\fragmentco{y}{y'}$, two conditions hold: the lengths are identical ($y'-y = x'-x$)
    and the perfect match precedes the current position ($y < x$).
    The latter holds because $(x,y) \in \mX_i$, $x \ge y$, and $\mX_i$ does not match
    $X\fragmentco{x}{x'}$ with itself.
    This strictly prior dependency allows us to recover the sequence iteratively from left
    to right.
    Assuming we have recovered the prefix $\{(x, X\position{x}) \mid x \in
    \fragmentco{a_i}{x}\}$, we can extend the recovery to $x'$: if the partition
    corresponds to an explicit character (insertion/deletion), we assign it directly; if
    it corresponds to a perfect match, we copy the values from the previously recovered
    segment $X\fragmentco{y}{y'}$.\footnote{Note that what we are describing is nothing
    else than a \emph{LZ77-like factorization}.}

    By combining the encodings for all $i \in \fragment{1}{\ell}$, we can describe the set
    $\{(x, X\position{x}) \mid x \in \bigcup_{i=1}^{\ell} \fragment{a_i}{b_i}\}$ using a
    sequence of non-decreasing indices from $\fragment{0}{|X|}$ of size
    $\Oh(\sum_{i=1}^{\ell} d_i) \le \Oh(d)$ (here we use the disjointness of the
    $\fragment{a_i}{b_i}$), together with $\Oh(d)$ characters from $\Sigma \cup \{\bot\}$.
    Moreover, $\selfed(Y) \leq 2|Y|$ holds for every string $Y$, so by the disjointness of
    the intervals we have
    \(
        d = \sum_{i=1}^{\ell} \selfed(X\fragment{a_i}{b_i}) \leq 2 \sum_{i=1}^{\ell} |X\fragment{a_i}{b_i}| \leq 2|X|
    \).

    Consequently, $\log \binom{|X|+d}{d} = \Oh(d \log(1+|X|/d)) \le \Oh(d
    \log(1+|\Sigma||X|/d))$. Moreover, since $d \leq 2|X|$, we have $1+|\Sigma||X|/d \geq
    1+|\Sigma|/2$, and therefore $\log |\Sigma| = \Oh(\log(1+|\Sigma||X|/d))$. Thus,
    $\Oh(\log \binom{|X|+d}{d} + d \log |\Sigma|) \le \Oh(d \log(1+|\Sigma||X|/d))$ bits
    suffice.
\end{proof}

We use the following known properties of $\selfed$.

\begin{lemmaq}[Properties of $\selfed$, {\cite[Lemma~4.2]{CKW23}}] \label{prp:prop_selfed}
    For any string  $X$:
    \begin{description}
        \item[Monotonicity.] For any $\ell' \leq \ell \leq r \leq r' \in \fragment{0}{|X|}$, we have
        \[
            \selfed(X\fragmentco{\ell}{r}) \leq \selfed(X\fragmentco{\ell'}{r'}).
        \]
        \item[Sub-additivity.] For any $m \in \fragment{0}{|X|}$, we have
        \[
            \selfed(X) \leq \selfed(X\fragmentco{0}{m}) + \selfed(X\fragmentco{m}{|X|}).
        \]
        \item[Triangle inequality.] For any string $Y$, we have
        \[
            \selfed(Y) \leq \selfed(X) + 2\ed(X,Y).
            \qedhere
        \]
    \end{description}
\end{lemmaq}

Cassis, Kociumaka, and Wellnitz~\cite{CKW23} also proved the following lemma that bounds
the self-edit distance of $Y$ in the presence of two disjoint alignments mapping $Y$ to
nearby fragments of another string~$X$.

\begin{lemmaq}[{\cite[Lemma~4.5]{CKW23}}]\label{prp:edimpliesselfed}
    Consider strings $X, Y$ and alignments $\A : Y \onto X\fragmentco{i}{j}$ and $\A' : Y \onto X\fragmentco{i'}{j'}$.
    If there is no $(y, x) \in \A \cap \A'$  such that both $\A$ and $\A'$ match $Y\position{y}$ with $X\position{x}$, then
    \[
        \selfed(Y) \leq |i - i'| + \edal{\A}(Y, X\fragmentco{i}{j}) + \edal{\A'}(Y, X\fragmentco{i'}{j'}) + |j-j'|.
        \qedhere
    \]
\end{lemmaq}

\subsubsection*{Periodicity Lemma}
For completeness, we restate Fine and Wilf's Periodicity Lemma~\cite{FW65}.

\begin{lemmaq}[Periodicity Lemma,~\cite{FW65}] \label{lem:perlemma}
    If \(p, q\) are periods of a string \(X\) of length \(|X| \geq p + q - \gcd(p, q)\),
    then \(\gcd(p, q)\) is a period of \(X\).
\end{lemmaq}

\section{Full Proof}\label{sec:full}

In this section, we provide a refined and comprehensive version of the proofs
from \cite{KNW24}. For each theorem, we either introduce a new proof if it
is essential for our improvements, or restate the original proof from
\cite{KNW24} with targeted clarifications and corrections where necessary.
If proofs are omitted from this section, then they have already been given in the main body of the paper.

To this end, we fix two strings $P \in \Sigma^m$ and $T \in \Sigma^n$, and a positive
threshold $k$. Furthermore, we let $S$ be a set of alignments of fragments of $P$ onto
fragments of $T$ with cost at most $k$, and we denote by $\cost(S) \coloneqq \sum_{\mX \in
S} \cost(\mX)$ the sum of all costs of the alignments contained in $S$.

\subsection{The Inference Graph $\bG_S$ of a Set of Alignments}
\label{sec:sec_gs}

We begin by defining the \emph{inference graph} $\bG_S$ (refer to \cref{fig:ex_def,fig:2}
for examples).

\bg*

Note that every vertex within a black component corresponds to a character
from $P$ or~$T$.
Furthermore, all characters within a single black
component are identical. This follows from the fact that a black edge
indicates that an alignment in $S$ matches the two specific characters. Because a black component is formed
exclusively by these edges, transitivity ensures that all characters
represented in a given component must be the same.

\begin{figure}[thbp]
    \renewcommand\tabularxcolumn[1]{m{#1}}
    \centering
    \begin{tabularx}{\linewidth}{*{2}{>{\centering\arraybackslash}X}}
        \includegraphics[page=1, scale=.8]{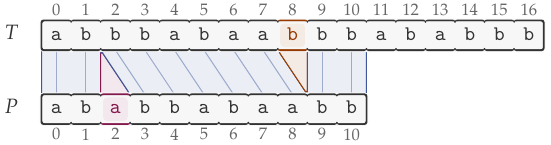}
        &
        \includegraphics[page=4, scale=.74]{figures/g01.pdf}
        \\[-2ex]
        \begin{subfigure}[t]{\linewidth}
            \caption{%
                An alignment $\mA_1 : P \onto T\fragmentco{0}{11}$
                of cost 2.}
        \end{subfigure}
        &
        \begin{subfigure}[t]{\linewidth}
            \caption{%
                The inference graph $\bG_{S_{1}}$ for $S_{1} \coloneqq \{\mA_1\}$ has 16
                black components and one red component.
                Removing from \(P\) and \(T\) the
                character red (that correspond to vertices of~$\bG_{S_{1}}$ of red
                components), one obtains the strings
                $T_{|S_{1}} = \mathbf{\mathtt{abbbabaabbababbb}}$
                and
                $P_{|S_{1}}= \mathbf{\mathtt{abbbabaabb}}$.
            }
        \end{subfigure}\\
        &\\
        \includegraphics[page=2, scale=.8]{figures/g01.pdf}
        &
        \includegraphics[page=22, scale=.74]{figures/g01.pdf}
        \\[-2ex]
        \begin{subfigure}[t]{\linewidth}
            \caption{%
                An alignment $\mA_2 : P \onto T\fragmentco{6}{17}$
                of cost 3.}
        \end{subfigure}
        &
        \begin{subfigure}[t]{\linewidth}
            \caption{%
                The inference graph $\bG_{S_{12}}$ for $S_{12} \coloneqq \{\mA_1, \mA_2\}$
                has five  black components and two red components;
                we highlight vertices that are red due to \(\mA_2\).
                Removing from \(P\) and \(T\) the
                character red (that correspond to vertices of~$\bG_{S_{12}}$ of red
                components), one obtains the strings
                $T_{|S_{12}} = \mathbf{\mathtt{abbababbababb}}$
                and
                $P_{|S_{12}}= \mathbf{\mathtt{abbababb}}$.
                For $k=2$ we have that $S_{12}$ encloses $T$.
            }\label{fig:1d}
        \end{subfigure}\\
        &\\
        \includegraphics[page=3, scale=.8]{figures/g01.pdf}
        &
        \includegraphics[page=30, scale=.74]{figures/g01.pdf}
        \\[-2ex]
        \begin{subfigure}[t]{\linewidth}
            \caption{%
                An alignment $\mA_3 : P\fragmentco{3}{6} \onto T\fragmentco{5}{7}$
                of cost 1.}
        \end{subfigure}
        &
        \begin{subfigure}[t]{\linewidth}
            \caption{%
                The inference graph $\bG_{S_{123}}$ for $S_{123} \coloneqq \{\mA_1, \mA_2,
                \mA_3\}$
                has two  black components and two red components;
                we highlight vertices that are red due to \(\mA_3\).
                Removing from \(P\) and \(T\) the
                character red (that correspond to vertices of~$\bG_{S_{123}}$ of red
                components), one obtains the strings
                $T_{|S_{123}} = \mathbf{\mathtt{ababababab}}$
                and
                $P_{|S_{123}}= \mathbf{\mathtt{ababab}}$.
                For $k=2$ we have that $S_{123}$ encloses $T$, but $S_{123}$ does not
            succinctly enclose $T$.}\label{fig:1f}
        \end{subfigure}
    \end{tabularx}
    \caption{Three alignments of fragments of
        $P = \mathtt{ababbabaabb}$ onto fragments of $T = \mathtt{abbbabaabbbababbb}$ are consecutively
    added to the corresponding inference graph from \cref{def:bg}.}\label{fig:ex_def}
\end{figure}
\begin{figure}[thbp]
    \begin{subfigure}[t]{\linewidth}
        \centering
        \includegraphics[page=23, scale=.74]{figures/g01.pdf}~%
        \includegraphics[page=24, scale=.74]{figures/g01.pdf}\\[2ex]
        \includegraphics[page=25, scale=.74]{figures/g01.pdf}~%
        \includegraphics[page=26, scale=.74]{figures/g01.pdf}\\[2ex]
        \includegraphics[page=27, scale=.74]{figures/g01.pdf}~%
        \includegraphics[page=28, scale=.74]{figures/g01.pdf}\\[2ex]
        \includegraphics[page=29, scale=.74]{figures/g01.pdf}
        \caption{%
            The seven connected components of $\bG_{S_{12}}$ from \cref{fig:1d}.
        }
    \end{subfigure}\\[2ex]
    \begin{subfigure}[t]{\linewidth}
        \centering
        \includegraphics[page=31, scale=.74]{figures/g01.pdf}~%
        \includegraphics[page=32, scale=.74]{figures/g01.pdf}\\[2ex]
        \includegraphics[page=33, scale=.74]{figures/g01.pdf}~%
        \includegraphics[page=34, scale=.74]{figures/g01.pdf}
        \caption{%
            The four connected components of $\bG_{S_{123}}$ from \cref{fig:1f}.
        }
    \end{subfigure}
    \caption{The connected components of the graphs of \cref{fig:ex_def} in
    detail.}\label{fig:2}
\end{figure}

The inference graph $\bG_S$ is encoded implicitly via the edit information $\sE_{P,
T}(\mX)$ for each alignment $\mX: P\fragmentco{p}{p'} \onto T\fragmentco{t}{t'}$ in $S$.

\encbg*

From \cref{prp:enc_gs} we can directly infer that \( \bc(\bG_S) = 0 \) is an easy case to
handle.

\subsubsection*{\boldmath The (easy) case: \( \bc(\bG_S) = 0 \)}
Suppose that \( \bc(\bG_S) = 0 \) and that, in the inference graph $\bG_S$, each character
of $P$ and $T$ in $\bc(\bG_S)$ has at least one outgoing edge (we will elaborate later how
to ensure this).
Then, each character is in a red component.
By \cref{prp:enc_gs}, we can fully recover \( P\) and \(T \) from $\{\sE_{P, T}(\mX) \mid
\mX \mid P\fragmentco{p}{p'} \onto T\fragmentco{t}{t'} \in S\}$.

\subsubsection*{\boldmath The (challenging) case: \( \bc(\bG_S) \neq 0 \)}
For the remainder of this section, unless stated otherwise, we assume that \( \bc(\bG_S) >
0 \), as this is the case that remains to be addressed.
In this case, we carefully select \( S \) so that a periodic structure can be observed in
two additional strings, \( T_{|S} \) and \( P_{|S} \). These strings are obtained by
retaining only the characters that belong to black connected components.

\tsps*

The key observation is that the black edges of \(\bG_S\) that are induced by a single
alignment \(\mX \in S\) correspond to an \emph{exact occurrence of a fragment of \( P_{|S}
\) in \( T_{|S} \)}.

\occ
\begin{proof}
    By \cref{def:bg}, for every character of $P_{|S}$ contained in $P\fragmentco{x}{x'}$,
    the alignment $\mX$ induces a black edge in the inference graph $\bG_S$ whose other
    endpoint belongs to the same black connected component of $\bG_S$.
    The other endpoint is therefore a character of $T_{|S}$ contained in
    $T\fragmentco{y}{y'}$.
    By symmetry and since the black edges induced by $\mX$ form a matching, we conclude
    that $\mX$ induces a perfect matching between the characters of $P_{|S}$ contained in
    $P\fragmentco{x}{x'}$ and the characters of $T_{|S}$ contained in
    $T\fragmentco{y}{y'}$.
    These characters appear consecutively in $P_{|S}$ and $T_{|S}$, respectively, and form
    fragments $P_{|S}\fragmentco{x_\mX}{x'_\mX}$ and
    $T_{|S}\fragmentco{y_{\mX}}{y'_{\mX}}$.
    Moreover, $\mX$ is non-crossing, so the perfect matching induced by $\mX$ consists of
    edges between $P_{|S}\position{x_{\mX} + p}$ and $T_{|S}\position{y_{\mX} + p}$ for
    every $p\in \fragmentco{0}{x_\mX' - x_\mX}$.
\end{proof}

In the following definition, we give the conditions on $S$ needed to observe the periodic
structure.

\enclose

This last property the we assume about \( S \), that is, \( S \) succinctly enclosing \( T \), is
designed to ensure that the number of alignments in \( S \) remains small while still
guaranteeing that the period induced in \( P_{|S} \) and \( T_{|S} \) by the exact matches
decreases exponentially with the size of $S$.

If \(S\) encloses \(T\), then each of the alignments \(\mXpref, \mXsuf, \mA_1, \dots,
\mA_a\) aligns the entire string \(P\) with a fragment of \(T\), and the edges they induce
connect to all vertices of \(P_{|S}\). Consequently, we have \(x_{\mXpref} = x_{\mXsuf} =
x_{\mA_1} = \dots = x_{\mA_a} = 0\) and \(x_{\mXpref}' = x_{\mXsuf}' = x_{\mA_1}' = \dots
= x_{\mA_a}' = m_S\).
Moreover, note that whenever \(S\) succinctly encloses \(T\) and $S$ is degenerate, then
\(b = 0\), since the condition \eqref{def:enclose:d} cannot hold for $i = 1$ if $b > 0$.

We can also prove something more about the exact matching induced by \(\mXpref\) and
\(\mXsuf\).

\nosingle
\begin{proof}
    Every alignment $\mX\in S$ induces a matching between $P_{|S}$ and the characters of
    $T_{|S}$ contained in the image $\mX(P)$, so we must have $y_{\mXpref}=0$ and
    $y_{\mXsuf}=n_S-m_S$.
    The costs of $\mXpref(P)$ and $\mXsuf(P)$ are at most $k$, so $|\mXpref(P)|\ge |P|-k$
    and $|\mXsuf(P)|\ge |P|-k$.
    Due to the assumption $|T|\le 2|P|-2k$, this means that every character of $T$ is
    contained in $\mXpref(P)$ or $\mXsuf(P)$.
    Consequently, the two matchings induced by these alignments jointly cover $T_{|S}$,
    and thus $n_S\le 2m_S$.
\end{proof}

As a side remark, in the case where \( \bc(\bG_S) = 0 \), the presence of \( \mXpref \)
and \( \mXsuf \) in \( S \) ensures that
each character of $P$ and $T$ in $\bG_S$ has at least one outgoing edge.

\Cref{clm:nosingle} also implies that when $S$ is degenerate, we have $m_S = n_S$, and
every alignment $\mX \in {\mXpref, \mXsuf, \mA_1, \ldots, \mA_a}$ induces in the inference
graph $\bG_S$ an edge $(P_{|S}[p], T_{|S}[t])$ if and only if $p = t$.
In this case, for each $p \in \fragmentco{0}{m_S}$, there is a black connected component
consisting solely of $P_{|S}[p]$ and $T_{|S}[p]$.
We now proceed to prove that the black connected components are structured even when $S$
is not degenerate.

\gsi

\periodicity*

Next, we demonstrate that \cref{lem:periodicity} remains valid after removing the
rightmost edge from each alignment. This refinement is required for
technical reasons in the subsequent sections of this proof.

\newcommand{\bbG}{\bar{\bG}}
\newcommand{\bcc}{\bar{\rm{cc}}}

\begin{lemma}\label{lem:periodicity_gbar}
    Suppose $S$ succinctly encloses $T$. Consider the graph $\bbG_S$ obtained from $\bG_S
    = \bG^b_S$ by removing nodes $P_{|S}\position{m_S-1}$ and $T_{|S}\position{n_S-1}$ and
    the rightmost edge induced by $\mX$ for each $\mX \in S$, that is, the edge that
    connects $P_{|S}\position{x_{\mX}'-1}$ with $T_{|S}\position{y_{\mX}' - 1}$. Set
    \[
        \bar{g}_b \coloneqq
        \begin{cases}
            m_S - 1 & \text{if $S$ is degenerate}, \\
            g_b & \text{otherwise}.
        \end{cases}
    \]
    Then, for each $c \in \fragmentco{0}{\bar{g}_b}$, the graph $\bbG_S$ has a connected
    component with node set
    \[
        \bar{C}_c \coloneqq \{P_{|S}\position{j} : j\in \fragmentco{0}{m_S-1} \text{ and } j
        \equiv_{\bar{g}_b} c\} \cup \{T_{|S}\position{j} : j\in \fragmentco{0}{n_S-1} \text{ and }
     j \equiv_{\bar{g}_b} c\}.
    \]
\end{lemma}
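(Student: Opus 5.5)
The plan is to rerun the proof of \cref{lem:periodicity} on the truncated strings. Let $P' \coloneqq P_{|S}\fragmentco{0}{m_S-1}$ and $T' \coloneqq T_{|S}\fragmentco{0}{n_S-1}$. By \cref{clm:occ}, each $\mX\in S$ induces in $\bG_S$ exactly the edges between $P_{|S}\position{x_\mX+p}$ and $T_{|S}\position{y_\mX+p}$ for $p\in\fragmentco{0}{x'_\mX-x_\mX}$. After deleting the rightmost such edge, $\mX$ induces in $\bbG_S$ the edges of an exact match between $P_{|S}\fragmentco{x_\mX}{x'_\mX-1}$ and $T_{|S}\fragmentco{y_\mX}{y'_\mX-1}$. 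I first check that these fragments avoid the two deleted nodes. For $\mXpref,\mXsuf,\mA_1,\dots,\mA_a$ we have $x_\mX=0$ and $x'_\mX=m_S$, and \cref{clm:nosingle} gives $y_\mX+m_S\le n_S$. So the truncated match is $P'$ occurring at $y_\mX$ in $T'$, and it avoids $T_{|S}\position{n_S-1}$. For each $\mB_i$, the edge to $P_{|S}\position{m_S-1}$, if present, is the rightmost one, so it is removed. The remaining edges of $\mB_i$ are confined to $P'\times T'$, because the exact match never reaches $T_{|S}\position{n_S-1}$ without also reaching $P_{|S}\position{m_S-1}$. I need to verify this, since $y_{\mB_i}-x_{\mB_i}$ could be negative.

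Next I label the components of $\bbG_S$, which gives strings $\bar P_{\cc}$ of length $m_S-1$ and $\bar T_{\cc}$ of length $n_S-1$. The truncated full alignments give $\{0,\,|\bar T_{\cc}|-|\bar P_{\cc}|\}\subseteq\{y_{\mXpref},y_{\mXsuf},y_{\mA_1},\dots,y_{\mA_a}\}\subseteq\Occ(\bar P_{\cc},\bar T_{\cc})$. Also $|\bar T_{\cc}| = n_S-1 \le 2m_S-1 \le 2|\bar P_{\cc}|+1$, which is exactly the slack allowed in \cref{fct:periodicity}.

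\textbf{Degenerate case.} Here $\bar g_b=m_S-1$, $n_S=m_S$, and $b=0$. Every full alignment matches position $j$ to position $j$, so each $\bar C_c$ with $c\in\fragmentco{0}{m_S-1}$ is the pair $\{P_{|S}\position{c},T_{|S}\position{c}\}$, joined by an edge that survives the truncation. Nothing else connects, so these pairs are exactly the components.

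\textbf{Non-degenerate case.} If $\bar P_{\cc}$ is empty the statement is vacuous, so assume $m_S\ge 2$. \Cref{fct:periodicity} shows that $\gcd(\Occ)$ is a period of $\bar T_{\cc}$, and hence $g_0$ is a common period of $\bar T_{\cc}$ and its prefix $\bar P_{\cc}$. I then induct on $i$ using \cref{fct:periodicity_extended}. Succinct enclosure gives $x'_{\mB_i}-x_{\mB_i}\ge g_{i-1}+1$, so the truncated match has length at least $g_{i-1}$, which is enough to pass from period $g_{i-1}$ to $g_i=\gcd(g_{i-1},y_{\mB_i}-x_{\mB_i})$. This is precisely why condition~\eqref{def:enclose:d} carries the $+1$. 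Periodicity with period $g_b$ shows that each $\bar C_c$ lies within one component. Conversely, every edge connects positions $p$ and $p+(y_\mX-x_\mX)$, and these are congruent modulo $g_b$ because $g_b$ divides every $y_\mX-x_\mX$ (for full alignments $x_\mX=0$ and $g_0$ divides $y_\mX$). So no edge leaves $\bar C_c$. Finally, each $\bar C_c$ with $c<g_b$ is nonempty, provided $m_S-1\ge g_b$. I would get this from $g_0\le y_{\mX}$ for some nonzero shift, which gives $g_0\le n_S-m_S\le m_S$. The case $g_b=m_S$ needs care and may force a separate argument.

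\textbf{Main obstacle.} The step I expect to be hardest is the bookkeeping: confirming that the truncated $\mB_i$ matches stay inside $P'\times T'$, and handling the boundary case of nonemptiness and periods when $g_b$ is close to $m_S$.
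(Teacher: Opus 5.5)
Your proposal is correct and follows essentially the paper's route. You define component-label strings of lengths $m_S-1$ and $n_S-1$, apply \cref{fct:periodicity} using the slack $n_S-1\le 2(m_S-1)+1$, and induct via \cref{fct:periodicity_extended}, where the $+1$ in condition~\eqref{def:enclose:d} absorbs the truncation; you also spell out the degenerate case, which the paper skips.

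The obstacles you flag dissolve as follows.
\begin{itemize}
\item An edge induced by $\mX$ that touches $P_{|S}[m_S-1]$ or $T_{|S}[n_S-1]$ is automatically its rightmost edge, since $x'_{\mX}\le m_S$ and $y'_{\mX}\le n_S$. Your stated reason, that a match cannot reach $T_{|S}[n_S-1]$ without also reaching $P_{|S}[m_S-1]$, is unnecessary and false in general.
\item Nonemptiness of $\bar C_c$ holds on the $T$-side, because non-degeneracy forces $m_S\ge 1$ and $g_b\le g_0\le n_S-m_S\le n_S-1$. So $g_b=m_S$ needs no separate argument.
\end{itemize}
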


\begin{proof}
    Similarly to \cref{lem:periodicity}, we only focus on the case where $S$ is not
    degenerate.

    Since $\bbG_S$ is a subgraph of $\bG_S = \bG_S^b$, the description of the connected
    components of $\bG_S$ from \cref{lem:periodicity}\eqref{lem:periodicity:i} implies
    that $\bbG_S$ contains no edge leaving the subgraph induced by $\bar{C}_c$.
    Therefore, it suffices to show that the subgraph induced by $\bar{C}_c$ is connected.

    Let us define strings $P_{\bcc}$ and $T_{\bcc}$ analogously to $P_{\cc}$ and $T_{\cc}$
    in the proof of \cref{lem:periodicity}\eqref{lem:periodicity:i}, but with respect to
    $\bbG_S$ rather than $\bG_S$.
    Specifically, $|P_{\bcc}|=m_S-1$ and $|T_{\bcc}|=n_S-1$, and the characters
    $P_{\bcc}[j]$ and $T_{\bcc}[i]$ are equal if and only if the corresponding characters
    $P_{|S}[j]$ and $T_{|S}[i]$ belong to the same connected component of $\bbG_S$.

    Note that, if an alignment $\mX \in S$ induces in $\bG_S$ a perfect matching between
    the fragments $P_{|S}\fragmentco{x_{\mX}}{x'_{\mX}}$ and
    $T_{|S}\fragmentco{y_{\mX}}{y'_{\mX}}$, then in $\bbG_S$ it still induces a perfect
    matching between the shortened fragments $P_{|S}\fragmentco{x_{\mX}}{x'_{\mX}-1}$ and
    $T_{|S}\fragmentco{y_{\mX}}{y'_{\mX}-1}$. Consequently, the corresponding fragments
    $P_{\bcc}\fragmentco{x_{\mX}}{x'_{\mX}-1}$ and
    $T_{\bcc}\fragmentco{y_{\mX}}{y'_{\mX}-1}$ are equal.

    Therefore, using \cref{clm:occ} together with the preceding observation, each of the
    alignments $\mXpref,\mXsuf,\mA_1,\ldots,\mA_a$ still induces an exact occurrence of
    $P_{\bcc}$ in $T_{\bcc}$ at the same shift as before, and each $\mB_i$ still induces a
    matching fragment shortened by one on the right.
    Hence, we next adapt the proof of \cref{lem:periodicity}\eqref{lem:periodicity:i} with minor modifications.

    By \cref{clm:occ,clm:nosingle}, we have
    $\{0,|T_{\bcc}| - |P_{\bcc}|\} \subseteq \{y_{\mXpref}, y_{\mXsuf}, y_{\mA_1}, \ldots,
    y_{\mA_a}\} \subseteq \Occ(P_{\bcc},T_{\bcc})$ and
    $|T_{\bcc}| = n_S - 1 \leq 2m_S - 1 = 2|P_{\bcc}| + 1$.
    If \( |P_{\bcc}| = 0 \), then \( |T_{\bcc}| \leq 1 \), and \( \bbG_S \) is either
    empty or consists of a single vertex constituting the only connected component.
    Otherwise, \cref{fct:periodicity} implies that \( \gcd(\Occ(P_{\bcc},T_{\bcc})) \) is
    a period of \( T_{\bcc} \), and thus also \( g_0 \) is a period of both \( T_{\bcc} \)
    and its prefix \( P_{\bcc} \).
    We can thus proceed to prove inductively that \( T_{\bcc} \) and \( P_{\bcc} \) have
    period \( g_i \). Note that we can still apply \cref{fct:periodicity_extended}, as
    instead of \( P_{\cc}\fragmentco{x_{\mB_i}}{x_{\mB_i}'} \), we apply it to \(
    P_{\bcc}\fragmentco{x_{\mB_i}}{x_{\mB_i}'-1} \).
    Thus, the length \( g_{i-1} + 1 \) decreases by at most one, under which we can still
    apply \cref{fct:periodicity_extended}.
    Consequently, the subgraph induced by \( \bar{C}_c \) is a connected subgraph of \(
    \bbG_S \).
\end{proof}

We have observed that the assumption that \( S \) succinctly encloses \( T \) induces a
well-defined periodic structure in \( P_{|S} \) and \( T_{|S} \). However, the definition
of succinct enclosure, as stated, has a drawback: it does not naturally support an
iterative construction.
Ideally, we would like to maintain an \( S \) that succinctly encloses \( T \) while
adding a new alignment \( \mY \) such that \( S \cup \{ \mY \} \) still succinctly
encloses \( T \). However, verifying this is challenging, as condition
\eqref{def:enclose:d} of \cref{def:enclose} depends on the final form of \( S \). The
following lemma helps to address this difficulty.

\additer*

For the remaining part of this section, up to \cref{sec:recocc},
we assume that $S$ succinctly encloses $T$.
By \cref{lem:periodicity}\eqref{lem:periodicity:i}, this allows us, for the sake of
convenience, to index black connected components with integers in
$\fragmentco{0}{\bc(\bG_S)}$ and to introduce some additional notation.

\begin{definition}\label{def:pitau}
    We define the following quantities.
    \begin{itemize}
        \item For $c \in \fragmentco{0}{\bc(\bG_S)}$, we define the \emph{$c$-th black
            connected component} as the black connected component containing $P_{|S}[c]$
            and set
        \[
        m_c \coloneqq \left\lceil\frac{m_S - c}{\bc(\bG_{S})} \right\rceil
        \quad\text{and}\quad n_c := \left\lceil \frac{n_S - c}{\bc(\bG_{S})} \right\rceil,
        \]
        as the number of characters in $P$ and $T$, respectively, that belong to the
        $c$-th black connected component.
        \item We define $c_{\last} \in \fragmentco{0}{\bc(\bG_{S})}$ as the black
            component containing the last characters of $P_{|S}$ and $T_{|S}$.
        \item We define $\pi(j) \in \fragmentco{0}{|P|}$ as the position of
            $P_{|S}\position{j}$ in $P$.
            Similarly, $\tau(i) \in \fragmentco{0}{|T|}$ is the position of
            $T_{|S}\position{i}$ in $T$.
        \item Starting from \cref{sec:close}, it will be useful to index the positions
            w.r.t. the remainder and quotient of $\bc(\bG_S)$. To this end, we set
            $\pi_{j}^c \coloneqq \pi(c + j \cdot \bc(\bG_S))$ for $c \in
            \fragmentco{0}{\bc(\bG_S)}$ and $j \in \fragmentco{0}{m_c}$. Similarly, we set
            $\tau_{i}^c \coloneqq \tau(c + i \cdot \bc(\bG_S))$ for $c \in
            \fragmentco{0}{\bc(\bG_S)}$ and $i \in \fragmentco{0}{n_c}$.
        \item For the sake of convenience, we extend some notation, namely $m_c$, $n_c$,
            $\pi_j^c$ for $j \in \fragmentco{0}{m_c}$, and $\tau_i^c$ for $i \in
            \fragmentco{0}{n_c}$, to $c = \bc(\bG_S)$ using the same formulas as above.
            \qedhere
    \end{itemize}
\end{definition}

\begin{remark} \label{rmk:perstr}
    We conclude this (sub)section with some observations.
    \begin{enumerate}
        \item Consider $c \in \fragmentco{0}{\bc(\bG_{S})}$.
            Notice that the characters $\{T\position{\tau(i)} \mid i \in
            \fragmentco{0}{n_S}, i \equiv_{\bc(\bG_S)} c\} = \{T\position{\tau_i^c} \mid i
        \in \fragmentco{0}{n_c}\}$ and $\{P\position{\pi(j)} \mid j \in
    \fragmentco{0}{m_S}, j \equiv_{\bc(\bG_S)} c\} = \{P\position{\pi_j^c} \mid j \in
\fragmentco{0}{m_c}\}$ are precisely those within the $c$-th black connected component.
            Consequently, they are all identical.
        \item For all $c \in \fragmentco{0}{\bc(\bG_{S})}$, we have $n_c \in \{n_0, n_0 - 1\}$ and $m_c \in \{m_0, m_0 - 1\}$.
        \item The indices $\tau$ and $\pi$ induce a partition of $T$ and $P$ into
            substrings. More precisely (assuming an empty concatenation is interpreted as
            the empty string),
            \[
                T\fragmentco{0}{\tau(0)} \circ \left( \bigcirc_{i=0}^{n_S-2}\;
                T\fragmentco{\tau(i)}{\tau(i+1)} \right) \circ
                T\fragmentco{\tau(n_S-1)}{|T|}
            \]
            and
            \[
                P\fragmentco{0}{\pi(0)} \circ \left( \bigcirc_{j=0}^{m_S-2}\;
                P\fragmentco{\pi(j)}{\pi(j+1)} \right) \circ
                P\fragmentco{\pi(m_S-1)}{|P|}.
            \]
        \item Recall that the inference graph $\bG_S$ contains an edge between
            $P\position{\pi(j)}$ and $T\position{\tau(i)}$ with $[i]_b = [j]_b$ only if
            there is an alignment $\mX\in S$ such that $(\pi(j),\tau(i))\in \mX$.
            If the edge between $P\position{\pi(j)}$ and $T\position{\tau(i)}$ is also
            contained in $\bbG$ (which happens if and only if $j<m_S-1$ and $i<n_S-1$), we
            also have $(\pi(j+1),\tau(i+1))\in \mX$.
            \qedhere
            \label{rmk:perstr:4}
    \end{enumerate}
\end{remark}

\subsection{Covering Weight Functions}

\begin{definition}[{\cite[Definition 4.9]{KNW24}}]
    \label{def:funcover}
    Abbreviate $\bc(\bG_S)$ with $\bc$.
    We say that a \emph{weight function} $\w_S : \fragmentco{0}{\bc}\to
    \mathbb{Z}_{\ge 0}$ \emph{covers $S$} if all of the following conditions hold:
    \begin{enumerate}
        \item for all $c \in \fragmentco{0}{\bc}$, all $j\in \fragmentco{0}{m_S-1}$ and
            $i\in \fragmentco{0}{n_S-1}$ such that $i \equiv_{\bc} j \equiv_{\bc} c$,
            we have  \label{it:funcover:1}
            \begin{align}
                \label{eq:funcover}
                \w_S(c) \ge \ed(P\fragmentco{\pi(j)}{\pi(j+1)},T\fragmentco{\tau(i)}{\tau(i+1)});
            \end{align}
        \item $\w_S(\bc-1) \geq \ed(P\fragmentco{0}{\pi(0)}, T\fragmentco{0}{\tau(0)})$;
            \label{it:funcover:2}
        \item $\w_S(\bc-1)\ge \ed(P\fragmentco{0}{\pi(0)},T\fragmentco{t}{\tau(i)})$ holds
            for every $i \in \fragmentoo{0}{n_S}$ such that $i \equiv_{\bc} 0$ and some
            $t\in \fragment{\tau(i-1)}{\tau(i)}$;
            \label{it:funcover:3}
        \item $\w_S(c_{\last}) \geq \ed(P\fragmentco{\pi(m_S-1)}{|P|},
            T\fragmentco{\tau(n_S-1)}{|T|})$; and
            \label{it:funcover:4}
        \item $\w_S(c_{\last}) \ge \ed(P\fragmentco{\pi(m_S-1)}{|P|},
            T\fragmentco{\tau(i)}{t'})$ holds for every $i \in \fragmentco{0}{n_{S}-1}$
            such that $i \equiv_{\bc} c_{\last}$ and some $t'\in
            \fragment{\tau(i)}{\tau(i+1)}$.
            \label{it:funcover:5}
    \end{enumerate}
    We define $\sum_{c=0}^{\bc-1} \w_S(c)$ to be the \emph{total weight} of $\w_S$.
    Moreover, we define $\w_S(-1) \coloneqq \w_S(\bc-1)$.
\end{definition}

In this (sub)section, we describe the construction of a weight function $\w_S$ covering
$S$ with a total weight of $\Oh(\cost(S))$. To facilitate a
more concise presentation of \cref{alg:construction_w}, we first introduce a preliminary
definition that simplifies the exposition of the construction process.

\begin{algorithm}[t]
\KwInput{Two strings $P$ and $T$, a threshold $k$, and a set $S$ that succinctly encloses $T$.}
\KwOutput{A weight function $\w_S$ that covers $S$.}
Compute $\bc(\bG_S)$ and functions $\Pi_P, \Pi_T$\;
Initialize the weight function $\w_S(c) = 0$ for all $c\in \fragmentco{0}{\bc(\bG_S)}$\;
\For{$\mX : P\fragmentco{x}{x'} \onto T\fragmentco{y}{y'} \in S$}{
    \For{each edit $(\hat{x}, \hat{y})$ performed by $\mX : P\fragmentco{x}{x'} \onto T\fragmentco{y}{y'}$}{
        \If{$\mX$ deletes $P\position{\hat{x}}$ or substitutes $P\position{\hat{x}}$ with $T\position{\hat{y}}$}{
           \CommentSty{//charge edit to the first black connected component to the left of $\hat{x}$\\}
            \If{$\Pi_P(\hat{x}) = \bot$}{Increase $\w_S(\bc(\bG_S)-1)$ by one\;}
            \Else{Increase $\w_S(c)$ by one, where $c$ is the black component of $\Pi_P(\hat{x})$\;}
        }\ElseIf{$\mX$ inserts $T\position{\hat{y}}$}{
               \CommentSty{//charge edit to the first black connected component to the left of $\hat{y}$\\}
            \If{$\Pi_T(\hat{y}) = \bot$}{Increase $\w_S(\bc(\bG_S)-1)$ by one\;}
            \Else{Increase $\w_S(c)$ by one, where $c$ is the black component of $\Pi_T(\hat{y})$\;}
        }
    }
}
 \Return{$\w_S$}\;
\caption{A deterministic algorithm to construct $\w_S$.}\label{alg:construction_w}
\end{algorithm}

\begin{definition}[{\cite[Definition 4.10]{KNW24}}]
    Given a position $x \in \fragmentco{0}{|P|}$, we define a function $\Pi_P(x)$ that
    maps $x$ to the largest position in $P$ that is less than or equal to $x$ and belongs
    to a black connected component. If no such character exists in $P$ preceding $x$, then
    $\Pi_P(x) = \bot$. We define a symmetric function $\Pi_T(y)$ for all positions $y \in
    \fragmentco{0}{|T|}$.
\end{definition}

\begin{lemma}[Improvement of {\cite[Lemma 4.11]{KNW24}}] \label{lem:exfuncover}
    \Cref{alg:construction_w} yields $\w_S : \fragmentco{0}{\bc(\bG_S)}\to
    \mathbb{Z}_{\ge 0}$ that covers $S$ of total weight $\mathcal{O}(\cost(S))$.
\end{lemma}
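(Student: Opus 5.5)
The total-weight bound is immediate, and the covering property is where the work lies. \Cref{alg:construction_w} increments exactly one entry of $\w_S$ per edit of each $\mX\in S$. Hence the total weight equals $\sum_{\mX\in S}\cost(\mX)=\cost(S)$.

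It remains to verify conditions \eqref{it:funcover:1}--\eqref{it:funcover:5} of \cref{def:funcover}. The plan is to exhibit, for each pair of fragments compared there, an explicit alignment obtained by restricting one $\mX\in S$, and to show that every edit of that restriction is charged to the required entry of $\w_S$.

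For condition \eqref{it:funcover:1}, fix $c$, $j\in\fragmentco{0}{m_S-1}$, and $i\in\fragmentco{0}{n_S-1}$ with $i\equiv j\equiv c$. By \cref{lem:periodicity_gbar}, $P_{|S}[j]$ and $T_{|S}[i]$ lie in the same component of $\bbG_S$. Take a path between them in $\bbG_S$ whose edges alternate between $P_{|S}$ and $T_{|S}$. Each edge $\{P\position{\pi(j')},T\position{\tau(i')}\}$ on the path belongs to $\bbG_S$, so by \cref{rmk:perstr}\eqref{rmk:perstr:4} some $\mX\in S$ contains both $(\pi(j'),\tau(i'))$ and $(\pi(j'+1),\tau(i'+1))$. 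Therefore $\mX$ restricts to an alignment of $P\fragmentco{\pi(j')}{\pi(j'+1)}$ onto $T\fragmentco{\tau(i')}{\tau(i'+1)}$.

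Every edit inside this restricted window involves a position whose $\Pi_P$ or $\Pi_T$ value is $\pi(j')$ or $\tau(i')$. Both of these lie in component $c$, so the edit was charged to $\w_S(c)$. This gives $\ed(P\fragmentco{\pi(j')}{\pi(j'+1)},T\fragmentco{\tau(i')}{\tau(i'+1)})\le \w_S(c)$ for every edge of the path. Chaining along the path via the triangle inequality would multiply the bound, which is not allowed. The fix is to bound the target pair $(j,i)$ \emph{directly by a single edge}. Since $P_{|S}[j]$ has some incident edge in $\bbG_S$, say $(j,i_1)$, that edge gives $\ed(P_j,T_{i_1})\le \w_S(c)$, where $P_j$ and $T_{i_1}$ denote the corresponding fragments. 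I would then argue that all fragments $T\fragmentco{\tau(i'')}{\tau(i''+1)}$ with $i''\equiv c$ and all fragments $P\fragmentco{\pi(j'')}{\pi(j''+1)}$ with $j''\equiv c$ must be related. One route, following \cite[Lemma 4.11]{KNW24}, is to exploit that the relevant alignment cost is only counted once. If that fails, the fallback is the weaker statement that the edit distance is at most the cost charged to $c$ by a single alignment. Bounding $\ed(P_j,T_i)$ for \emph{non-adjacent} $i,j$ in the same class is the step I expect to be the main obstacle. I would first reread how \cite{KNW24} handles it, likely through the fact that $\bbG_S$ restricted to $c$ has edges of each alignment forming shifted matchings, so every $(j,i)$ pair is joined by a single edge after accounting for periodicity.

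For conditions \eqref{it:funcover:2} and \eqref{it:funcover:3}, use $\mXpref$ and the alignments covering $T_{|S}[i]$ for $i\equiv 0$. The restriction of such an alignment to $P\fragmentco{0}{\pi(0)}$ has all its edits at positions with $\Pi=\bot$ or in component $\bc-1$, so they were charged to $\w_S(\bc-1)$. The starting point $t$ is chosen as the image of position $0$ under that alignment. Conditions \eqref{it:funcover:4} and \eqref{it:funcover:5} are symmetric, using $\mXsuf$ and $c_{\last}$ from \cref{lem:periodicity}\eqref{lem:periodicity:ii}.
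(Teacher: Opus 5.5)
\textbf{Verdict: genuine gap.} Your total-weight bound is right, and so is your per-window charging observation. The proposal breaks down at condition~(1), and the same missing idea breaks conditions~(3) and~(5).

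\textbf{Condition~(1).} You reject chaining along a path in $\bar{\bG}_S$ on the grounds that it ``would multiply the bound''. In fact your own charging argument gives more than a per-edge bound: it shows that the \emph{sum} of the window costs over all edges of component $c$ is at most $\w_S(c)$.
\begin{itemize}
\item A fixed $\mX\in S$ induces a matching. So its windows for distinct edges, i.e.\ the portions of $\mX$ from $(\pi(j'),\tau(i'))$ to $(\pi(j'+1),\tau(i'+1))$, are disjoint parts of $\mX$.
\item Edits of distinct alignments are separate increments in \cref{alg:construction_w}.
\item Hence, choosing one inducing alignment per edge, no increment is counted twice.
\end{itemize}
Now compose the window alignments (and their inverses) along a simple path from $P_{|S}[j]$ to $T_{|S}[i]$. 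This yields an alignment $P\fragmentco{\pi(j)}{\pi(j+1)}\onto T\fragmentco{\tau(i)}{\tau(i+1)}$ whose cost is at most the sum of the edge costs, hence at most $\w_S(c)$. This is exactly the paper's route. It defines an auxiliary $\w'_S$ whose value on $c$ is the total edge weight $\bar{\w}(c)$ of that component plus boundary terms. It proves that $\w'_S$ covers $S$ by path composition, and proves $\w'_S\le\w_S$ pointwise by charging.

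The replacement you propose cannot work. $P_{|S}[j]$ has at most $|S|$ incident edges, while its class can contain up to $n_c$ characters of $T_{|S}$. So in general most pairs $(j,i)$ are not adjacent, and periodicity does not make them adjacent.

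\textbf{Conditions~(3) and~(5).} Condition~(3) quantifies over \emph{all} $i\equiv 0$ in $\fragmentoo{0}{n_S}$. But at most $|S|$ alignments contain $P\fragmentco{0}{\pi(0)}$ in their pre-image, and each sends $P\position{\pi(0)}$ to a single text position. So ``the alignments covering $T_{|S}[i]$'' do not exist for most $i$. The fix has three parts:
\begin{itemize}
\item Take the prefix part of $\mXsuf$, namely $P\fragmentco{0}{\pi(0)}\onto T\fragmentco{y}{\tau(\hi)}$ with $\hi=n_S-m_S\equiv 0$ and $\tau(\hi-1)<y$.
\item Compose it with a path alignment $T\fragmentco{\tau(\hi-1)}{\tau(\hi)}\onto T\fragmentco{\tau(i-1)}{\tau(i)}$ inside component $\bc(\bG_S)-1$. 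All of these edits are again distinct increments of $\w_S(\bc(\bG_S)-1)$: deletions and substitutions have $\Pi_P=\bot$, and insertions in $T\fragmentco{y}{\tau(\hi)}$ have $\Pi_T=\tau(\hi-1)$.
\item Dispose of the case $\hi=0$ separately. It forces $S$ to be degenerate, which makes the quantifier in~(3) vacuous.
\end{itemize}
Condition~(5) is symmetric. Note that it uses the suffix part of $\mXpref$, not $\mXsuf$, composed with a path alignment in component $c_{\last}$.
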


\begin{proof}
    The total weight of $\w_S$ is clearly $\Oh(\cost(S))$ because each edit in an
    alignment increases the total weight of $\w_S$ by at most 1. To demonstrate that
    $\w_S$ covers $S$, we first construct another function $\w_S' :
    \fragmentco{0}{\bc(\bG_S)} \to \mathbb{Z}_{\ge 0}$. We show that $\w_S'$ covers $S$
    and satisfies $\w_S'(c) \leq \w_S(c)$ for all $c \in \fragmentco{0}{\bc(\bG_S)}$.
    Since the properties defined in \cref{def:funcover} are all componentwise monotone, it
    follows that $\w_S$ also covers $S$.

    We begin by considering the graph \( \bbG_S \) as defined in
    \cref{lem:periodicity_gbar}.
    By \cref{lem:periodicity_gbar} the graph $\bbG_S$ contains for all $c \in
    \fragmentco{0}{\bc(\bbG_S)}$ a connected component $\bbG_S(c)$ induced by vertices
    $P\position{\pi(j)}$ and $T\position{\tau(i)}$ with $j\in \fragmentco{0}{m_S-1}$,
    $i\in \fragmentco{0}{n_S-1}$ and $i \equiv_{\bc(\bbG_S)} j \equiv_{\bc(\bbG_S)} c$.
    Note, by \cref{lem:periodicity_gbar}, we have that $\bc(\bbG_S) = \bc(\bG_S)$,
    if \(S\) is not degenerate and $\bc(\bbG_S) = \bc(\bG_S) - 1$, otherwise. Let us
    assign weights to edges of $\bbG_S$.
    If $\bbG_S$ contains an edge between $P\position{\pi(j)}$ and $T\position{\tau(i)}$,
    we define its weight to be $\edal{\mX}(P\fragmentco{\pi(j)}{\pi(j+1)}, T\fragmentco{\tau(i)}{\tau(i+1)})$
    where $\mX$ is any $\mX \in S$ such that $(\pi(j), \tau(i)) \in \mX$.
    (Note that all of this is indeed well-defined because of how we defined $\bbG_S$.)
    By $\bar{\w}(c)$, we denote the total weight of edges in $\bbG_S(c)$ for $c \in \fragmentco{0}{\bc(\bbG_S)}$.

    Now, we define
    \[
    \w_S'(c) = \bar{\w}(c) \cdot \mathbf{1}[c\neq \bc(\bG_S)-1 \text{ or } \bc(\bG_S) =
    \bc(\bbG_S)] + \alpha\cdot \mathbf{1}[c=\bc(\bG_S)-1]+\alpha'\cdot
    \mathbf{1}[c=c_{\last}],
    \]
    where $\mathbf{1}[\phi]$ denotes the indicator function (equal to $1$ if the statement
    $\phi$ holds and to $0$ otherwise), and $\alpha, \alpha'$ are defined as follows.
    \begin{itemize}
        \item We set
            \[
                \alpha \coloneqq \edal{\mXpref}(P\fragmentco{0}{\pi(0)}, T\fragmentco{0}{\tau(0)})
                + \edal{\mXsuf}(P\fragmentco{0}{\pi(0)}, T\fragmentco{y}{\tau(\hi)}),
            \]
            assuming $\mXsuf$ aligns $P\fragmentco{0}{\pi(0)}$ onto
            $T\fragmentco{y}{\tau(\hi)}$ for some $\hi \in \fragmentco{0}{n_S}$ with $\hi
            \equiv_{\bc(\bG_S)} 0$ and $y \in \fragment{0}{\tau(\hi)}$.
        \item Similarly, we set
            \[
                \alpha' \coloneqq \edal{\mXsuf}(P\fragmentco{\pi(m_S-1)}{|P|},
                T\fragmentco{\tau(n_S-1)}{|T|}) +
                \edal{\mXpref}(P\fragmentco{\pi(m_S-1)}{|P|},
                T\fragmentco{\tau(\hi')}{y'}),
            \]
            assuming $\mXpref$ aligns $P\fragmentco{\pi(m_S-1)}{|P|}$ onto
            $T\fragmentco{\tau(\hi')}{y'}$
            for some $\hi' \in \fragmentco{0}{n_S}$ with $\hi' \equiv_{\bc(\bG_S)}
            c_{\last}$ and $y'\in \fragment{\tau(\hi')}{|T|}$.
    \end{itemize}

    We proceed to show that $\w_S'$ satisfies properties~\eqref{it:funcover:1},
    \eqref{it:funcover:2}, and~\eqref{it:funcover:3} of \cref{def:funcover}.

    \begin{claim}
        \label{claim:funcover:1}
        The function $\w_S'$ satisfies property~\eqref{it:funcover:1} of
        \cref{def:funcover}.
    \end{claim}
    \begin{claimproof}
        Whenever \( S \) is degenerate, by \cref{lem:periodicity}\eqref{lem:periodicity:i}
        we have \( m_{0} = n_{0} = 1 \), and we do not need to verify
        property~\eqref{it:funcover:1} for $c  = \bc(\bbG_S)-1$. Consequently, it always
        suffices to verify property~\eqref{it:funcover:1}
        only for \( c \in \fragmentco{0}{\bc(\bbG_S)} \) rather than for all \( c \in
        \fragmentco{0}{\bc(\bG_S)} \).

        By \cref{lem:periodicity_gbar}, $\bbG_S(c)$ is connected for all $c \in
        \fragmentco{0}{\bc(\bbG_S)}$.
        Thus, for any
        $c \in \fragmentco{0}{\bc(\bbG_S)}$,  $j\in \fragmentco{0}{m_S-1}$, $i\in
        \fragmentco{0}{n_S-1}$ with $i \equiv_{\bc(\bbG)} j \equiv_{\bc(\bbG)} c$,
        we can construct an alignment of cost at most $\bar{\w}(c)$
        aligning $P\fragmentco{\pi(j)}{\pi(j+1)}$ onto $T\fragmentco{\tau(i)}{\tau(i+1)}$.
        To obtain such alignment, we compose
        the alignments inducing the weights on the edges
        contained in the path in $\bbG_S(c)$ between $P\position{\pi(j)}$ and $T\position{\tau(i)}$.
    \end{claimproof}

    \begin{claim} \label{claim:funcover:2}
        The function $\w_S'$ satisfies
        properties~\eqref{it:funcover:2} and~\eqref{it:funcover:3} of \cref{def:funcover}.
    \end{claim}
    \begin{claimproof}
        Clearly, property~\eqref{it:funcover:2} holds because
        \[
            \w_S'(\bc(\bG_S)-1) \geq \alpha \geq \edal{\mXpref}(P\fragmentco{0}{\pi(0)},
            T\fragmentco{0}{\tau(0)}) \geq \ed(P\fragmentco{0}{\pi(0)},
            T\fragmentco{0}{\tau(0)}).
        \]

        To address property~\eqref{it:funcover:3} of \cref{def:funcover}, we first show
        that the (edge) case \((\pi(0), \tau(0)) \in \mXsuf\), that is \(\hi = 0\), cannot
        occur.
        Observe that \((\pi(0), \tau(0)) \in \mXsuf\) implies \(y_{\mXsuf} = 0\). Together
        with \(y_{\mXsuf}' = n_S\), this yields \(m_S = n_S\).
        Since all alignments \(\mX_{\pref}, \mX_{\suf}, \mA_1, \ldots, \mA_a\) align the
        entire string \(P\) with a fragment of \(T\), it follows that
        $S$ is degenerate.
        Thus, there is nothing to prove, as the quantifier $i$ in
        property~\eqref{it:funcover:3} ranges over an empty set.

        Thus, we can restrict ourselves to the case $\hi > 0$ and that $S$ is
        non-degenerate.
        Choose an arbitrary $i \in \fragmentoo{0}{n_S}$ such that $i \equiv_{\bc(\bG_S)}
        0$.
        Since $\tau(\hi)$ and $\tau(i)$ belong to the same connected component
        $\bbG_S(\bc(\bG_S)-1)$, we can employ a similar argument as in
        \cref{claim:funcover:1}.
        Thus, there exists an alignment $\mY$ with a cost of at most
        $\bar{\w}(\bc(\bG_S)-1)$, aligning $T\fragmentco{\tau(\hi-1)}{\tau(\hi)}$ to
        $T\fragmentco{\tau(i-1)}{\tau(i)}$.
        Assume $\mY(T\fragmentco{y}{\tau(\hi)}) = T\fragmentco{t}{\tau(i)}$ for some $t\in
        \fragment{\tau(i-1)}{\tau(i)}$.
        By composing $\mXsuf : P\fragmentco{0}{\pi(0)} \onto T\fragmentco{y}{\tau(\hi)}$
        and $\mY : T\fragmentco{y}{\tau(\hi)} \onto T\fragmentco{t}{\tau(i)}$, we obtain
        an alignment $P\fragmentco{0}{\pi(0)} \onto T\fragmentco{t}{\tau(i)}$ of a cost of
        at most
        \[
            \edal{\mXsuf}(P\fragmentco{0}{\pi(0)}, T\fragmentco{t}{\tau(\hi)}) +
            \bar{\w}(\bc(\bG_S)-1) \leq \alpha + \bar{\w}(\bc(\bG_S)-1) \leq
            \w_S'(\bc(\bG_S)-1).\claimqedhere
        \]
    \end{claimproof}

    The arguments for properties~\eqref{it:funcover:4} and~\eqref{it:funcover:5} of
    \cref{def:funcover} are almost identical to those used in \cref{claim:funcover:2}.
    Property~\eqref{it:funcover:4} holds because in $\alpha'$ we added the term depending
    on $\mXsuf$. On the other hand, for property~\eqref{it:funcover:5}, one can first rule
    out the case when $(\pi(m_S-1), \tau(n_S-1)) \in \mXpref$. This allows us to restrict
    ourselves to the case $\hi' < n_S-1$ and that $S$ is non-degenerate.
    Property~\eqref{it:funcover:5} follows by composing alignments in
    $\bar{\w}(c_{\last})$ and the alignment that contributes to $\alpha'$.

    \begin{claim}\label{claim:funcover:3}
        $\w_S'(c) \leq \w_S(c)$ for all $c \in \fragmentco{0}{\bc(\bG_S)}$.
    \end{claim}
    \begin{claimproof}
        Note that $\w_S'(c)$ is the sum of $\bar{\w}(c)$ and possibly $\alpha$ and/or
        $\alpha'$. Thus, $\w_S'(c)$ is the sum of costs of partial alignments that can be
        of up to five different types.
        \begin{enumerate}
            \item $\edal{\mX}(P\fragmentco{\pi(j)}{\pi(j+1)},
                T\fragmentco{\tau(i)}{\tau(i+1)})$
                for some distinct triplet $(\mX, i, j)$ such that $\mX \in S$, $j\in
                \fragmentco{0}{m_S-1}$, $i\in \fragmentco{0}{n_S-1}$ and $i
                \equiv_{\bc(\bbG)} j \equiv_{\bc(\bbG)} c$;
                \label{claim:funcover:3:1}
            \item $\edal{\mXpref}(P\fragmentco{0}{\pi(0)}, T\fragmentco{0}{\tau(0)})$;
                \label{claim:funcover:3:2}
            \item $\edal{\mXsuf}(P\fragmentco{0}{\pi(0)}, T\fragmentco{y}{\tau(\hi)})$;
                \label{claim:funcover:3:3}
            \item $\edal{\mXsuf}(P\fragmentco{\pi(m_{S}-1)}{|P|}, T\fragmentco{\tau(n_S-1)}{|T|})$; and
                \label{claim:funcover:3:4}
            \item $\edal{\mXpref}(P\fragmentco{\pi(m_{S}-1)}{|P|}, T\fragmentco{\tau(\hi')}{y'})$.
                \label{claim:funcover:3:5}
        \end{enumerate}
        Note that all partial alignments that are (possibly) involved in the sum are
        disjoint.
        To prove the claim, it suffices to show that
        each edit contained in one of these partial alignments corresponds to a unique
        increase of $\w_S$ by one.
        We derive the following.
        \begin{description}
            \item[\eqref{claim:funcover:3:1}:] Each edit in
                $\edal{\mX}(P\fragmentco{\pi(j)}{\pi(j+1)},
                T\fragmentco{\tau(i)}{\tau(i+1)})$ for some distinct triplet $(\mX, i, j)$
                corresponds to the increase by one of $\w_S(c)$ when that edit in $\mX$ is
                processed.
            \item[\eqref{claim:funcover:3:2} and \eqref{claim:funcover:3:3}:] Each edit in
                $\edal{\mXpref}(P\fragmentco{0}{\pi(0)}, T\fragmentco{0}{\tau(0)})$ or
                $\edal{\mXsuf}(P\fragmentco{0}{\pi(0)}, T\fragmentco{y}{\tau(\hi)})$
                corresponds
                to the increase by one of $\w_S(\bc(\bG_S)-1)$ when that edit in
                $\mXsuf,\mXpref$ is processed, respectively.
            \item[\eqref{claim:funcover:3:4} and \eqref{claim:funcover:3:5}:] Similarly,
                each edit in $\edal{\mXsuf}(P\fragmentco{\pi(m_{S}-1)}{|P|},
                T\fragmentco{\tau(n_S-1)}{|T|})$ or in
                $\edal{\mXpref}(P\fragmentco{\pi(m_{S}-1)}{|P|},
                T\fragmentco{\tau(\hi')}{y'})$ corresponds
                to the increase of $\w_S(c_{\last})$ when that edit in $\mXsuf,\mXpref$ is
                processed, respectively. \claimqedhere
        \end{description}
    \end{claimproof}
    This concludes the proof of \cref{lem:exfuncover}.
\end{proof}

In the remaining part of this section, we let $\w_S$ denote a \emph{weight
function} that covers $S$ of total weight at most $w$.
We do not restrict ourselves to the case $w=\mathcal{O}(\cost(S))$ to be more precise
about the dependency on the total weight of $\w_S$.

\subsection{Period Covers}

In the following subsection, we formalize the characters contained in the black components
that we need to learn.
We specify this through a subset of indices $C_S \subseteq \fragmentco{0}{\bc(\bG_S)}$ of
black components,
which we refer to as \emph{period cover}.

\begin{definition}[{\cite[Definition 4.16]{KNW24}}]\label{def:periodcover_alt}
    A set $C_S \subseteq \fragmentco{0}{\bc(\bG_S)}$ is a \emph{period cover with respect
    to $\w_S$} if $\fragment{a}{b}\subseteq C_S$ holds for every interval
    $\fragment{a}{b}\subseteq \fragmentco{0}{\bc(\bG_S)}$
    that satisfies at least one of the following.
    \begin{enumerate}
        \item $\selfed(T\fragment{\tau(a)}{\tau(b)}) \le 6w+11k$ and $a=0$;
            \label{it:periodcover_alt:1}
        \item $\selfed(T\fragment{\tau(a)}{\tau(b)}) \le 6w+11k$ and $b=\bc(\bG_S)-1$;
            \label{it:periodcover_alt:2}
        \item $\selfed(T\fragment{\tau(a)}{\tau(b)}) \le 6w+11k$ and $b=c_{\last}$;
            \label{it:periodcover_alt:3}
        \item $\selfed(T\fragment{\tau(a)}{\tau(b)}) \le 6w+11k$ and $a=c_{\last}+1$; or
            \label{it:periodcover_alt:4}
        \item $\selfed(T\fragment{\tau(a)}{\tau(b)}) \le 6 \sum_{c=a-1}^{b} \w_S(c)$.
            \label{it:periodcover_alt:5}  \qedhere
    \end{enumerate}
\end{definition}

In the remainder of this (sub)section,
we illustrate that by constructing a period cover $C_S$ in two different ways,
we can efficiently (w.r.t. parameters $w$, $\cost(S)$) encode the information $\{(c,
T[\tau_0^c]):c\in C_S\}$.

First, we establish that this is possible if we construct $C_S$
by straightforwardly verifying whether all intervals $\fragment{a}{b}$
satisfy any of the conditions
\eqref{it:periodcover_alt:1},\eqref{it:periodcover_alt:2},\eqref{it:periodcover_alt:3},\eqref{it:periodcover_alt:4},\eqref{it:periodcover_alt:5}
of \cref{def:periodcover_alt}.

\begin{lemma}[Improvement of {\cite[Lemma 4.17]{KNW24}}] \label{prp:encode_simple_funcover}
    Suppose we are given the information from \cref{prp:enc_gs}. Let
    ${\{\fragment{a_i}{b_i}\}}_{i=0}^{\ell-1}$ be the set of intervals satisfying any of
    the conditions \eqref{it:periodcover_alt:1}--\eqref{it:periodcover_alt:5} of
    \cref{def:periodcover_alt}. Defining the period cover as $C_S \coloneqq
    \bigcup_{i=0}^{\ell-1} \fragment{a_i}{b_i}$, we can encode the set $\{(c, T[\tau(c)])
    \mid c \in C_S\}$ using an additional $\Oh((w+k) \log (1+m|\Sigma| / (w+k)))$ bits.
\end{lemma}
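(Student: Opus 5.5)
The plan is to reduce everything to \cref{prp:self_ed_enc} applied to the text $X\coloneqq T$. The pairs $(c,T\position{\tau(c)})$ for $c\in C_S$ are exactly the characters of $T$ at the positions $\tau(c)$. The decoder already knows these positions, because the information of \cref{prp:enc_gs} reconstructs $\bG_S$, and with it $T_{|S}$ and the map $\tau$. So it suffices to cover $\{\tau(c) : c\in C_S\}$ by pairwise disjoint fragments of $T$ with total self-edit distance $d=\Oh(w+k)$, and to encode the endpoints of these fragments cheaply. Since every $c\in\fragment{a_i}{b_i}$ satisfies $\tau(c)\in\fragment{\tau(a_i)}{\tau(b_i)}$, the natural fragments to use are $T\fragment{\tau(a_i)}{\tau(b_i)}$. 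I then take the connected components (maximal merged runs) of their union, which gives disjoint fragments $J_1,\dots,J_r$ of $T$.

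\textbf{Step 1: select few intervals.} Among the intervals satisfying conditions \eqref{it:periodcover_alt:1}--\eqref{it:periodcover_alt:4} of \cref{def:periodcover_alt}, consider those sharing a fixed endpoint, e.g.\ all intervals with $a=0$. Each such interval is contained in the longest one, so per condition a single interval suffices. This gives four intervals, each of self-edit distance at most $6w+11k$.

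For the intervals satisfying condition \eqref{it:periodcover_alt:5}, I would apply the standard greedy interval-covering argument: extract a subfamily with the same union in which every $c$ lies in at most two chosen intervals. The charged range of an interval is $\fragment{a-1}{b}$, which extends it by one index. Hence every $c$ lies in $\Oh(1)$ charged ranges, and the total self-edit distance of the chosen intervals is at most $6\sum_i\sum_{c=a_i-1}^{b_i}\w_S(c)=\Oh(w)$.

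\textbf{Step 2: merge.} Each component $J_t$ is a union of overlapping fragments $T\fragment{\tau(a_i)}{\tau(b_i)}$ from the selected family. I cut $J_t$ into consecutive pieces, each contained in one selected fragment. By sub-additivity and monotonicity (\cref{prp:prop_selfed}), $\selfed(J_t)$ is at most the sum of the self-edit distances of the selected fragments forming it. Summing over $t$ gives $d\coloneqq\sum_t\selfed(J_t)=\Oh(w+k)$. \Cref{prp:self_ed_enc} then encodes all characters of $\bigcup_t J_t$, and in particular every $T\position{\tau(c)}$ with $c\in C_S$, in $\Oh(d\log(1+m|\Sigma|/d))$ bits; here $|T|=\Oh(m)$.

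\textbf{Step 3: encode the endpoints.} This is the step I expect to be the main subtlety. The decoder must also learn $C_S$ itself, i.e.\ the $r$ fragments $J_t$, or equivalently the index intervals. The key observation is that a non-empty string has positive self-edit distance, since a self-alignment cannot match any character to itself. Therefore $r\le d$. The $2r$ sorted endpoints can then be written with a stars-and-bars encoding in $\Oh(d\log(1+m/d))$ bits. Finally, $x\mapsto x\log(1+m|\Sigma|/x)$ is non-decreasing, so with $d=\Oh(w+k)$ the total is $\Oh((w+k)\log(1+m|\Sigma|/(w+k)))$ bits, as claimed.
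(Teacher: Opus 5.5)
Your proposal is correct and follows the paper's proof step by step. You keep the longest interval for each of conditions \eqref{it:periodcover_alt:1}--\eqref{it:periodcover_alt:4}, take a subfamily of the condition-\eqref{it:periodcover_alt:5} intervals that covers every index at most twice (so double counting over $\fragment{a_i-1}{b_i}$ gives $\Oh(w)$), make the fragments disjoint via sub-additivity and monotonicity, and apply \cref{prp:self_ed_enc}. You deviate in two places, both minor and both slightly more careful than the paper. First, by merging in text space you never glue index-adjacent intervals $\fragment{a}{b}$ and $\fragment{b+1}{c}$ across the uncovered gap $T\fragmentoo{\tau(b)}{\tau(b+1)}$; the paper's index-space interval representation does this, and sub-additivity plus monotonicity alone do not justify it. Second, you make the endpoint encoding explicit via $\selfed(X)\ge 1$ for non-empty $X$, which the paper leaves implicit in \cref{prp:self_ed_enc}.
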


\begin{proof}
    Recall that using the information from \cref{prp:enc_gs}, we can derive all properties
    of the inference graph $\bG_S$, such as the connected component to which a character
    belongs.

    To prove the lemma, we show that it is possible to select a subset of intervals
    indexed by $I \subseteq \fragmentco{0}{\ell}$ such that $\bigcup_{i\in I}
    \fragment{a_i}{b_i} = C_S$ and $\sum_{i \in I}
    \selfed(T\fragment{\tau(a_i)}{\tau(b_i)}) = \Oh(w+k)$.
    This is sufficient to prove that $\{(c, T\position{\tau(c)}) \mid c\in C_S\}$ can be
    encoded with $\Oh((w+k) \log (1+m|\Sigma| / (w+k)))$ bits. Indeed, the interval
    representation $J$ of $C_S$, i.e.,
    the representation of $C_S$ using the smallest number of non-overlapping intervals,
    satisfies
    \[
        \sum_{i \in J} \selfed(T\fragment{\tau(a_i)}{\tau(b_i)}) \leq \sum_{i \in I}
        \selfed(T\fragment{\tau(a_i)}{\tau(b_i)}) = \Oh(w+k),
    \]
    where the first inequality follows by applying sub-additivity and monotonicity
    (\cref{prp:prop_selfed}) repeatedly. Hence, by \cref{prp:self_ed_enc}, we can encode
    $\{(c, T\position{\tau(c)}) \mid c\in C_S\}$ using
    \(
        \Oh\left(D \log(1+|\Sigma||T|/D)\right)
    \)
    bits, where $D \coloneqq \sum_{i \in J} \selfed(T\fragment{\tau(a_i)}{\tau(b_i)}) =
    \Oh(w+k)$. Since $|T| \leq 2m$, this is $\Oh((w+k) \log(1+m|\Sigma|/(w+k)))$.

    Among all intervals that satisfy condition~\eqref{it:periodcover_alt:1} of
    \cref{def:periodcover_alt},
    that is, among all intervals $\fragment{a_i}{b_i}$ such that $a_i=0$, we add to $I$ the
    index $j$ that maximizes $b_j$.
    Clearly, for all other intervals $\fragment{a_i}{b_i}$ such that $a_i=0$,
    we have $\fragment{a_i}{b_i} \subseteq \fragment{a_j}{b_j}$, and we can safely leave
    them out from $I$.
    By using a similar approach for the intervals that satisfy any of the
    conditions~\eqref{it:periodcover_alt:2},~\eqref{it:periodcover_alt:3},~\eqref{it:periodcover_alt:4},
    henceforth, we may assume that we have taken care of all the intervals
    that satisfy any of the
    conditions~\eqref{it:periodcover_alt:1},~\eqref{it:periodcover_alt:2},~\eqref{it:periodcover_alt:3},~\eqref{it:periodcover_alt:4},
    and that ${\{\fragment{a_i}{b_i}\}}_{i=0}^{\ell-1}$ only contains
    intervals that satisfy~\eqref{it:periodcover_alt:5}.

    Now, we proceed by iteratively adding indices to $I$.
    The first index we add to $I$ is $\argmin_i \{a_i\}$.
    Next, we continue to add indices to $I$ based on the last index we added, $j$, as
    follows.
    \begin{itemize}
        \item If $\{i : a_j < a_i \leq b_j < b_i\} \neq \emptyset$, then we add the index
            $\argmax_i \{b_i : a_j < a_i \leq b_j < b_i\}$ to $I$;
        \item otherwise, we add the index $\argmin_i \{a_i : b_j < a_i\}$ to $I$.
    \end{itemize}
    We terminate if we cannot add any index to $I$ anymore, i.e., if $b_j = \max_i
    \{b_i\}$.
    Note, this iterative selection process is guaranteed to end, because
    in every iteration for the $i$ that we add to $I$,
    we have $a_j < a_i$.

    We want to argue that $\bigcup_{i\in I} \fragment{a_i}{b_i} = C_S$.
    To do this, consider an interval $\fragment{x}{y}$ in the interval representation $J$
    of $C_S$.
    We observe that an index $i$ with $a_i = x$ is added either initially or through the
    first rule.
    Subsequently, the selection process applies a combination of the two rules until we
    include an index $i$ with $b_i = y$ into $I$.

    Finally, we want to show that
    $\sum_{i \in I} \selfed(T\fragment{\tau(a_i)}{\tau(b_i)}) \leq 24w$.
    Consider three indices $i,i',i''$ that are added one after the other to $I$.
    If for the selection of $i'$ or $i''$ we need to apply the second rule, then clearly
    $b_i < a_{i''}$.
    Otherwise, if we always apply the first rule, then from $b_{i'} < b_{i''}$ follows
    that $b_i < a_{i''}$,
    because otherwise, we would have selected $i''$ instead of $i'$ in the $\argmin$.
    Hence, for every $c \in \fragmentco{0}{\bc(\bG_S)}$
    it holds $|\{i \in I : c \in \fragment{a_i}{b_i}\}| \leq 2$,
    from which follows $|\{i \in I : c \in \fragment{a_i-1}{b_i}\}| = |\{i \in I : c \in
    \fragment{a_i}{b_i}\}| + |\{i \in I : c + 1 = a_i\}| \leq 4$.
    Note, the previous property also holds when we add less than three indices to $I$.
    Using a double counting argument, we obtain
    \begin{align*}
        \sum_{i \in I} \selfed(T\fragment{\tau(a_i)}{\tau(b_i)})
        \leq 6 \sum_{i \in I} \sum_{c=a_i-1}^{b_i} \w_S(c)
        = 6\sum_{c=0}^{\bc(\bG_S)-1} \sum_{\substack{i \in I \\ c \in \fragment{a_i-1}{b_i}}} \w_S(c)
        \leq 24w,
    \end{align*}
    which concludes the proof.
\end{proof}

A limitation of this approach is the requirement to inspect every character within the
black components before encoding. To address this, in \cite{KNW24} we provide an
alternative construction for a period cover $C_S$
optimized to reduce the total number of string queries. This strategy proves useful within
the quantum setting.

\subsection{Close Candidate Positions}\label{sec:close}

This (sub)section is devoted to proving the following result.

\begin{restatable*}[{\cite[Proposition 4.26]{KNW24}}]{lemma}{prpclose}\label{prp:close}
    Let $\mX : P \onto T\fragmentco{t}{t'}$ be an optimal alignment of $P$ onto a fragment
    $T\fragmentco{t}{t'}$ such that $\ed(P, T\fragmentco{t}{t'})\le k$.
    If there exists $i \in \fragment{0}{n_0-m_0}$ such that $|\tau_i^0 - t - \pi_0^0| \leq
    w+3k$ holds, then the following holds for every $c\in
    \fragmentco{0}{\bc(\bG_S)}\setminus C_S$:
    \begin{enumerate}
        \item\label{it:close:in} $\mX$ aligns $P\position{\pi^c_{j}}$ to
            $T\position{\tau^c_{i+{j}}}$ for every $j \in \fragmentco{0}{m_c}$, and
        \item\label{it:close:out} $\tau^c_{i'}\notin \fragmentco{t}{t'}$ for every $i'\in
            \fragmentco{0}{n_c}\setminus \fragmentco{i}{i+m_c}$.\qedhere
    \end{enumerate}
\end{restatable*}

\subsubsection*{Partition of \texorpdfstring{$P$}{P} and \texorpdfstring{$T$}{T} into Blocks}

\Cref{prp:close} heavily relies on the fact that $P$ and $T$ exhibit a periodic structure.
We divide $P$ and $T$ into the blocks ${\{P_j\}}_{j \in \fragmentco{0}{m_0}}$ and
${\{T_i\}}_{i \in \fragmentco{0}{n_0}}$ having a similar structure.

\begin{definition}[{\cite[Definition 4.20]{KNW24}}]\label{def:blocks}
    For $j \in \fragmentco{0}{m_0}$ and $i \in \fragmentco{0}{n_0}$ we define
    \begin{align*}
        P_j \coloneqq
        \begin{cases}
            P\fragmentco{\pi_j^{0}}{\pi_{j+1}^{0}} & \text{if $j \neq m_{0}-1$,} \\
            P\fragment{\pi_{m_0-1}^{0}}{\pi_{m_0-1}^{c_{\last}}} & \text{otherwise,}
        \end{cases}
        \quad
        \text{ and }
        \quad
         T_i  \coloneqq
        \begin{cases}
            T\fragmentco{\tau^{0}_{i}}{\tau^{0}_{i+1}} & \text{if $i \neq n_{0}-1$,} \\
            T\fragment{\tau^{0}_{n_0-1}}{\tau^{c_{\last}}_{n_0-1}} & \text{otherwise.}
        \end{cases}
        \tag*{\qedhere}
    \end{align*}
\end{definition}

The notion of similarity between these blocks is also captured by the fact that
for any $j,i$ we
can construct an alignment of $P_j$ onto $T_i$ whose cost is upper bounded by the total
weight of the weight function $\w_S$.
More formally, we can prove the following proposition.

\begin{lemma}[{\cite[Proposition 4.21]{KNW24}}]
    \label{prop:asji}
    Let $i \in \fragmentco{0}{n_0}, j \in \fragmentco{0}{m_0}$ be arbitrary such that if
    $i = n_0-1$ then $j = m_0-1$.
    Then, there is an alignment $\mA_{S}^{j,i}$ with the following properties.
    \begin{enumerate}
        \item $\mA_{S}^{j,i} : P_j \onto T_i$ if $j \neq m_0-1$,
            and $\mA_{S}^{j,i} : P_{m_0-1} \onto T\fragment{\tau_i^0}{\tau_i^{c_{\last}}}$ otherwise.
            \label{it:asji:i}
        \item Let $c \in \fragmentco{0}{\bc(\bG_S)}$ if $j \neq m_0-1$, and $c \in
            \fragment{0}{c_{\last}}$ if $j = m_0-1$.
            Then, $\mA_{S}^{j,i}$ matches $P\position{\pi_{j}^{c}}$ and $T\position{\tau_{i}^{c}}$.
            \label{it:asji:ii}
        \item
            $\mA_{S}^{j,i}$ has cost at most $w$.
            \label{it:asji:iii}
    \end{enumerate}
\end{lemma}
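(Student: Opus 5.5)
We construct $\mA_S^{j,i}$ by concatenating, for each black component $c$ in order, an alignment of the piece $P\fragmentco{\pi_j^c}{\pi_j^{c+1}}$ onto $T\fragmentco{\tau_i^c}{\tau_i^{c+1}}$, where $\pi_j^{\bc(\bG_S)}=\pi_{j+1}^0$ and $\tau_i^{\bc(\bG_S)}=\tau_{i+1}^0$ as in the extended notation of \cref{def:pitau}. Concretely, writing $\bc=\bc(\bG_S)$, the block $P_j$ (for $j\ne m_0-1$) decomposes as $\bigcirc_{c=0}^{\bc-1} P\fragmentco{\pi(j\bc+c)}{\pi(j\bc+c+1)}$. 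Likewise $T_i$ decomposes as $\bigcirc_{c=0}^{\bc-1} T\fragmentco{\tau(i\bc+c)}{\tau(i\bc+c+1)}$. This uses \cref{rmk:perstr} (the partition induced by $\pi,\tau$), with the indices $j\bc+c$ and $i\bc+c$ congruent to $c$. For each $c$, property~\eqref{it:funcover:1} of \cref{def:funcover} applied with these indices yields an alignment of the $c$-th pieces of cost at most $\w_S(c)$. It is needed that $j\bc+c<m_S-1$ and $i\bc+c<n_S-1$, which holds for $j<m_0-1$ because $j\bc+c+1\le (j+1)\bc\le \pi$-index of $\pi_{j+1}^0$, an existing position, and similarly for $T$ using $i<n_0-1$ or the hypothesis.

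Concatenating these piecewise alignments gives an alignment $P_j\onto T_i$ of cost at most $\sum_c \w_S(c)\le w$, which proves \eqref{it:asji:iii}. Each piece begins at the pair $(\pi_j^c,\tau_i^c)$, so the concatenation passes through these points. It remains to ensure that $P\position{\pi_j^c}$ and $T\position{\tau_i^c}$ are actually \emph{matched}, not merely passed through. Both characters lie in the $c$-th black component, so they are equal. Hence, given an optimal alignment of a piece $P\fragmentco{\pi_j^c}{\cdot}\onto T\fragmentco{\tau_i^c}{\cdot}$, we modify it so that its first step is the diagonal match of the two equal first characters. This does not increase the cost: the edit distance of $aX$ and $aY$ equals $\ed(X,Y)$, and $\ed(X,Y)\le \ed(aX,aY)$, since an optimal alignment of $aX$ onto $aY$ may be assumed to start with the match. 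Thus we choose each piece alignment to start with a match, which gives \eqref{it:asji:ii}.

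The case $j=m_0-1$ (with arbitrary $i$, or $i=n_0-1$) needs separate handling. Here the target is $T\fragment{\tau_i^0}{\tau_i^{c_\last}}$ and only the components $c\in\fragment{0}{c_\last}$ matter. We concatenate the pieces for $c<c_\last$ as before; for these, $j\bc+c<m_S-1$ because $m_S-1=j\bc+c_\last$. We then append the single-character pair $P\position{\pi_{m_0-1}^{c_\last}}$, $T\position{\tau_i^{c_\last}}$, matched at zero cost since both lie in component $c_\last$. For $i\bc+c<n_S-1$ in this case we need $i\bc+c_\last\le n_S-1$, which holds since $n_S\equiv m_S \pmod{\bc}$ by \cref{lem:periodicity}\eqref{lem:periodicity:ii} and $i\le n_0-1$.

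The main obstacle is this index bookkeeping at the boundary: verifying that every application of property~\eqref{it:funcover:1} uses indices strictly below $m_S-1$ and $n_S-1$, which is where the hypothesis ``$i=n_0-1\Rightarrow j=m_0-1$'' and the congruence $n_S\equiv_{\bc} m_S$ enter. The degenerate case $\bc=m_S$ must also be checked separately, since then $m_0=1$ and only the last-block case arises.
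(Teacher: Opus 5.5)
Your proof is correct and is essentially the paper's construction. The paper likewise takes, for each component $c$, an alignment of $P\fragmentco{\pi_j^c}{\pi_j^{c+1}}$ onto $T\fragmentco{\tau_i^c}{\tau_i^{c+1}}$ of cost at most $\w_S(c)$ from \cref{def:funcover}\eqref{it:funcover:1}, assumes it matches the equal characters $P\position{\pi_j^c}$ and $T\position{\tau_i^c}$, and concatenates these over $c$; for $j=m_0-1$ it appends the matched pair $(\pi_{m_0-1}^{c_\last},\tau_i^{c_\last})$. Your index bookkeeping (e.g.\ $i\bc(\bG_S)+c<n_S-1$ via $n_S\equiv m_S \bmod \bc(\bG_S)$) only spells out what the paper leaves implicit, and the degenerate case needs no separate treatment because it falls under $j=m_0-1$.
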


\begin{proof}
    Let $c\in \fragmentco{0}{\bc(\bG_S)}$ if $j \neq m_{0}-1$ and let $c \in
    \fragmentco{0}{c_{\last}}$ if $j = m_{0}-1$.
    From \Cref{def:funcover}\eqref{eq:funcover} follows that there is an
    alignment $\mX_c : P\fragmentco{\pi_j^{c}}{\pi_j^{c+1}} \onto T\fragmentco{\tau_i^{c}}{\tau_i^{c+1}}$
    with cost at most $\w_S(c)$.
    Note, since $P\position{\pi_j^{c}} = T\position{\tau_i^{c}}$, we
    may assume that $\mX_c$ matches $P\position{\pi_j^{c}}$ and $T\position{\tau_i^{c}}$.
    Now, it suffices to set
    \[
        \mA_{S}^{j,i} \coloneqq
        \begin{cases}
            \bigcup_{c=0}^{\bc(\bG_S)-1} \mX_c & \text{if $j \neq m_{0}-1$}, \\
            \left(\bigcup_{c=0}^{c_{\last}-1} \mX_c \right) \cup
            \{(\pi_{m_0-1}^{c_{\last}}+1, \tau_i^{c_{\last}}+1)\} & \text{if $j =
            m_{0}-1$}.
        \end{cases}
    \]
    Observe that \( \mA_{S}^{j,i} \) is a valid alignment because the last two characters
    aligned by \( \mX_{c-1} \) are exactly the first two characters aligned by \( \mX_{c}
    \).

    Clearly, \eqref{it:asji:i} holds.
    To verify~\eqref{it:asji:ii}, observe that we may assume each \( \mX_c \) matches in
    its first two characters, since they are equal as  they belong to the same black
    connected component. These are precisely the characters referenced in
    \eqref{it:asji:ii}.
    Regarding~\eqref{it:asji:iii}, note that the cost of \( \mA_{S}^{j,i} \) equals the
    sum of the costs of all \( \mX_{c} \) whose union defines \( \mA_{S}^{j,i} \).
    This sum is upper bounded by the sum of the corresponding \( \w_S(c) \), and is
    therefore at most \( w \).
\end{proof}

In the next lemma, we prove the key property that we use when relating one of the
alignments $\mA_{S}^{j,i}$ and an optimal alignment acting on the same fragments.

\begin{lemma}[{\cite[Lemma 4.22]{KNW24}}]
    \label{prp:coverededit}
    Let $j \in \fragmentco{0}{m_0}$, $i \in \fragmentco{0}{n_0}$ such that if $i = n_0-1$
    then $j = m_0-1$, and let $\mA := \mA_S^{j,i}$.
    Let $P\fragmentco{x}{x'}$ and $T\fragmentco{y}{y'}$ be fragments of $P_j$ and $T_i$,
    respectively, such that $\mA$ aligns $P\fragmentco{x}{x'}$ onto $T\fragmentco{y}{y'}$.

    Let $\mX$ be an optimal alignment of $P\fragmentco{x}{x'}$ onto $T\fragmentco{y}{y'}$.
    If there exists no $(\hx, \hy) \in \mX \cap \mA$ such that both $\mX$ and $\mA$ match
    $P[\hx]$ and $T[\hy]$, then $\{c\in \fragmentco{0}{\bc(\bG_S)} : \pi^c_j \in
    \fragmentco{x}{x'}\} \subseteq C_S$.
\end{lemma}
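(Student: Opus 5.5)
Let $\mathcal{C} \coloneqq \{c\in \fragmentco{0}{\bc(\bG_S)} : \pi^c_j\in\fragmentco{x}{x'}\}$. Since the positions $\pi^c_j$ increase with $c$ inside $P_j$, the set $\mathcal{C}$ is an interval $\fragment{a}{b}$ (assume it is non-empty). The plan is to show that $\fragment{a}{b}$ satisfies one of the conditions of \cref{def:periodcover_alt}. In the generic case this is condition~\eqref{it:periodcover_alt:5}. When the fragment touches the boundary components ($a=0$, $b=\bc(\bG_S)-1$, $b=c_{\last}$, or $j=m_0-1$), it is one of conditions~\eqref{it:periodcover_alt:1}--\eqref{it:periodcover_alt:4}. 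We then conclude with the fact that $C_S$ is a period cover.

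\textbf{Step 1: bound the self-edit distance of the $P$-side.} We apply \cref{prp:edimpliesselfed} with $Y=P\fragmentco{x}{x'}$ and the two alignments $\mX$ and $\mA$, both onto the same fragment $T\fragmentco{y}{y'}$. By hypothesis they share no matched pair, so the boundary terms vanish and
\[\selfed(P\fragmentco{x}{x'})\le \cost(\mX)+\edal{\mA}(P\fragmentco{x}{x'},T\fragmentco{y}{y'})\le 2\,\edal{\mA}(P\fragmentco{x}{x'},T\fragmentco{y}{y'}),\]
using the optimality of $\mX$. The cost of $\mA$ on this piece is localised. By construction in \cref{prop:asji}, $\mA$ is the concatenation of the alignments $\mX_c$ of cost at most $\w_S(c)$. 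The restriction to $P\fragmentco{x}{x'}$ intersects only the pieces $\mX_c$ with $c\in\fragment{a-1}{b}$: piece $a-1$ covers the part of $P\fragmentco{x}{x'}$ before $\pi^a_j$. Hence the cost is at most $\sum_{c=a-1}^{b}\w_S(c)$, where for $a=0$ the index $-1$ is read as $\bc-1$, matching the convention $\w_S(-1)=\w_S(\bc-1)$. This case needs care, because the piece before $\pi^0_j$ lies outside $P_j$; but then $x\ge\pi_j^0$, so in fact only $\fragment{a}{b}$ contributes.

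\textbf{Step 2: transfer to $T\fragment{\tau(a')}{\tau(b')}$.} The period cover is phrased via $\selfed(T\fragment{\tau(a)}{\tau(b)})$ with $T$-positions. I would use the alignment $\mA$, or an alignment $\mA_S^{0,0}$ from \cref{prop:asji} that maps $P\fragment{\pi^a_j}{\pi^b_j}$ onto the $T$-fragment between the matched characters of components $a$ and $b$. By \cref{prop:asji}\eqref{it:asji:ii}, $\mA$ matches $P\position{\pi^c_j}$ with $T\position{\tau^c_i}$, so $\mA$ restricted to $P\fragment{\pi^a_j}{\pi^b_j}$ maps it onto $T\fragment{\tau^a_i}{\tau^b_i}$ at cost at most $\sum_{c=a}^{b}\w_S(c)$. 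First use monotonicity (\cref{prp:prop_selfed}) to pass from $P\fragmentco{x}{x'}$ to $P\fragment{\pi^a_j}{\pi^b_j}$. Then use the triangle inequality to get
\[\selfed(T\fragment{\tau^a_i}{\tau^b_i})\le 2\sum_{c=a-1}^b\w_S(c)+2\sum_{c=a}^b\w_S(c)\le 4\sum_{c=a-1}^b \w_S(c).\]
Finally, map $T\fragment{\tau^a_i}{\tau^b_i}$ back to $T\fragment{\tau(a)}{\tau(b)}$, i.e.\ block $i=0$, using $\mA_S^{0,i}$-type alignments composed via \cref{prop:asji}. I expect this to increase the constant, and getting the constant $6$ of condition~\eqref{it:periodcover_alt:5} is the delicate bookkeeping. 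I would try to choose the intermediate alignment so that only one extra $2\sum\w_S$ term appears, for a total of at most $6\sum_{c=a-1}^b\w_S(c)$.

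\textbf{Main obstacle.} The boundary cases are the main obstacle: the last block $j=m_0-1$ or $i=n_0-1$ (truncated at $c_{\last}$), and intervals wrapping at $0$. There the per-component weights do not cleanly bound the relevant pieces, so the bound must instead be the global $w$ (plus $\Oh(k)$ slack). This yields $\selfed\le 6w+11k$ together with $a=0$, $b=\bc-1$, $b=c_{\last}$, or $a=c_{\last}+1$, i.e.\ conditions~\eqref{it:periodcover_alt:1}--\eqref{it:periodcover_alt:4}. In all cases, membership of $\fragment{a}{b}$ in $C_S$ follows from \cref{def:periodcover_alt}.
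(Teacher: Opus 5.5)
The core of your plan is right, but Step~2 does not reach the required constant and the ``main obstacle'' remedy would leave cases unproved. Write $W\coloneqq\sum_{c=a-1}^{b}\w_S(c)$.

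\textbf{Step~2 overshoots the constant.} Your Step~1 is sound and is the paper's key step in mirrored form. The paper applies \cref{prp:edimpliesselfed} with $Y=T\fragmentco{y}{y'}$, i.e.\ to the inverses of $\mA$ and $\mX$. Monotonicity then gives $\selfed(T\fragment{\tau_i^a}{\tau_i^b})\le 2W$ directly. One transfer to $T\fragment{\tau(a)}{\tau(b)}$ at edit cost $2\sum_{c=a}^{b-1}\w_S(c)$ then yields $6W$.

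Your route instead detours through block~$i$ of $T$. It costs $2W$ (Step~1), plus $2W$ ($P$ to block~$i$), plus $4W$ (block~$i$ to block~$0$). The last term is so large because \cref{def:funcover} only bounds distances between a piece of $P$ and a piece of $T$. A text-to-text transfer must therefore pass through $P$, at cost $2\w_S(c)$ per component. That totals $8W$, and the available tools give no cheaper text-to-text transfer, so your hoped-for ``only one extra $2W$'' is not realizable on this route.

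The fix is to skip block~$i$. For $c\in\fragmentco{a}{b}$, take the alignments from \cref{def:funcover}\eqref{it:funcover:1} for the $P_{|S}$-index $c+j\cdot\bc(\bG_S)$ and the $T_{|S}$-index $c$; these are pieces of $\mA_S^{j,0}$, not $\mA_S^{0,0}$. Concatenate them and append the match of $P\position{\pi_j^b}$ with $T\position{\tau(b)}$. This gives $\ed(P\fragment{\pi_j^a}{\pi_j^b},T\fragment{\tau(a)}{\tau(b)})\le\sum_{c=a}^{b-1}\w_S(c)$. Hence $\selfed(T\fragment{\tau(a)}{\tau(b)})\le 2W+2W=4W\le 6W$, slightly better than the paper's constant.

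\textbf{The boundary-case plan is misdiagnosed and would not close the proof.} Conditions~\eqref{it:periodcover_alt:1}--\eqref{it:periodcover_alt:4} of \cref{def:periodcover_alt} apply only to intervals with $a=0$, $b=\bc(\bG_S)-1$, $b=c_{\last}$, or $a=c_{\last}+1$. For $j=m_0-1$, an interval with $0<a\le b<c_{\last}$ satisfies none of them, so a global bound of $6w+11k$ would not place it in $C_S$.

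No fallback is needed, because the per-component bound works in every case:
\begin{itemize}
\item When $a=0$, the wrap-around piece never contributes, because $x\ge\pi_j^0$ forces $x=\pi_j^0$. You observed this yourself in Step~1; it is the paper's index $a'$.
\item In the last block with $b=c_{\last}$, the part of $P\fragmentco{x}{x'}$ from $\pi_j^{c_{\last}}$ onward is a single character that $\mA$ matches at cost~$0$.
\item Every remaining piece has indices for which \cref{def:funcover}\eqref{it:funcover:1} applies. On the text side this is guaranteed by the hypothesis that $i=n_0-1$ implies $j=m_0-1$.
\end{itemize}
Hence condition~\eqref{it:periodcover_alt:5} settles every case uniformly, which is exactly how the paper's proof concludes.
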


\begin{proof}
    Denote  $\fragment{a}{b}=\{c\in \fragmentco{0}{\bc(\bG_S)} : \pi^c_j \in
    \fragmentco{x}{x'}\}$
    and assume that this interval is non-empty (otherwise, there is nothing to prove).
    Let $a'\coloneqq a+1$ if $\pi^a_j = x$ and $a'\coloneqq a$ otherwise so that
    $\fragment{a'}{b}=\{c\in \fragmentco{0}{\bc(\bG_S)} : \pi^c_j \in
    \fragmentoo{x}{x'}\}$.
    Observe that $a'>0$ because $P\fragmentco{x}{x'}$ is a fragment of $P_j$ and therefore
    $x \geq \pi_j^0$.
    Now, we can apply the following chain of inequalities
    \begin{align}
            \selfed(T\fragment{\tau_i^a}{\tau_i^b}) &\leq \selfed(T\fragmentco{y}{y'})
            \label{eq:coverededit:1} \\
            &\leq 2 \edal{\mA}(P\fragmentco{x}{x'}, T\fragmentco{y}{y'})
            \label{eq:coverededit:2} \\
            &\leq 2 \edal{\mA}(P\fragmentco{\pi_j^{a'-1}}{\pi_j^{b+1}},
            T\fragmentco{\tau_i^{a'-1}}{\tau_i^{b+1}}) \label{eq:coverededit:3} \\
            &= 2 \sum_{c=a'-1}^{b}\edal{\mA}(P\fragmentco{\pi_j^{c}}{\pi_j^{c+1}},
            T\fragmentco{\tau_i^{c}}{\tau_i^{c+1}}) \label{eq:coverededit:4} \\
            &\leq 2 \sum_{c=a'-1}^{b} \w_S(c) \label{eq:coverededit:5}\\
            &\leq 2 \sum_{c=a-1}^{b} \w_S(c)  \label{eq:coverededit:6}
    \end{align}
    where we have used
    \begin{itemize}
        \item[(\ref{eq:coverededit:1})] monotonicity of $\selfed$
            (\cref{prp:prop_selfed});
        \item[(\ref{eq:coverededit:2})] \cref{prp:edimpliesselfed} and
            $\edal{\mX}(P\fragmentco{x}{x'}, T\fragmentco{y}{y'}) \leq
            \edal{\mA}(P\fragmentco{x}{x'}, T\fragmentco{y}{y'})$;
        \item[(\ref{eq:coverededit:3})] the definition of $\fragment{a'}{b}$ and the fact
            that  $(\pi_j^{a'-1}, \tau_i^{a'-1}),(\pi_j^{b+1}, \tau_i^{b+1}) \in \mA$;
        \item[(\ref{eq:coverededit:4})] the fact that $(\pi_j^{c}, \tau_i^{c}) \in \mA$
            for all $c \in \fragment{a}{b}$; and
        \item[(\ref{eq:coverededit:5})] \cref{def:funcover}.
    \end{itemize}

    Now, notice that
    \[
        \ed(T\fragment{\tau_i^{a}}{\tau_i^{b}}, T\fragment{\tau(a)}{\tau(b)}) \leq
        \sum_{c=a}^{b-1} \ed(T\fragment{\tau_i^{c}}{\tau_i^{c+1}},
        T\fragment{\tau(c)}{\tau(c+1)}) \leq 2\sum_{c=a}^{b-1} \w_S(c)
    \]
    follows by using twice \cref{def:funcover} for each $c \in \fragmentco{a}{b}$.
    From this together with \cref{prp:prop_selfed} and
    $\selfed(T\fragment{\tau_i^{a}}{\tau_i^{b}}) \leq 2 \sum_{c=a-1}^{b} \w_S(c)$ follows
    $\selfed(T\fragment{\tau(a)}{\tau(b)}) \leq 6 \sum_{c=a-1}^{b} \w_S(c)$.
    Hence, we can use \cref{def:periodcover_alt} to deduce $\fragment{a}{b} \subseteq
    C_S$.
\end{proof}

\subsubsection*{Recovering the Edit Distance for a Single Candidate Position}

\Cref{prp:coverededit} allows us to prove that an optimal alignment $P_j$ onto a fragment
$T\fragmentco{y}{y'}$ consistently aligns all characters in black components with each
other (period cover excluded), provided $y$ is close enough to a starting point of any
$T_i$.

\begin{lemma}[{\cite[Lemma 4.23]{KNW24}}]\label{prp:recperioded}
    Let $j\in \fragmentco{0}{m_0}$ and $i\in \fragmentco{0}{n_0}$ be such that $j=m_0-1$
    if $i=n_0-1$.
    Let $\mX: P_j \onto T\fragmentco{y}{y'}$ be an optimal alignment of $P_j$ onto an
    arbitrary fragment $T\fragmentco{y}{y'}$.
    If $|\tau_i^0 - y|  \leq w + 4k$ and $\ed(P_j, T\fragmentco{y}{y'})\le k$,
    then $\mX$ aligns $P[\pi^c_j]$ to $T[\tau^c_{i}]$ for all $c\in
    \fragmentco{0}{\bc(\bG_S)}\setminus C_S$ such that $j\in \fragmentco{0}{m_c}$.
\end{lemma}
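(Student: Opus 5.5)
We prove this by contradiction, combining \cref{prp:coverededit} with the alignment $\mA \coloneqq \mA_S^{j,i}$ from \cref{prop:asji}. Fix $c\notin C_S$ with $j\in\fragmentco{0}{m_c}$, and suppose that $\mX$ does not align $P\position{\pi_j^c}$ to $T\position{\tau_i^c}$. By \cref{prop:asji}, $\mA$ matches $P\position{\pi_j^c}$ with $T\position{\tau_i^c}$ and has cost at most $w$. The goal is to find a fragment $P\fragmentco{x}{x'}$ of $P_j$ that contains $\pi_j^c$ and satisfies two conditions. First, $\mA$ aligns it onto some $T\fragmentco{y_0}{y_0'}$ inside $T_i$. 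Second, some optimal alignment of $P\fragmentco{x}{x'}$ onto $T\fragmentco{y_0}{y_0'}$ shares no matched pair with $\mA$. Then \cref{prp:coverededit} gives $c\in C_S$, which is a contradiction.

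The fragment comes from comparing the two paths. In the grid, $\mX$ and $\mA$ are monotone paths starting at $(\pi_j^0,y)$ and $(\pi_j^0,\tau_i^0)$, and the vertical offset between these starts is at most $w+4k$. Let $P\fragmentco{x}{x'}$ be the maximal fragment around $\pi_j^c$ on which $\mX$ and $\mA$ share no matched pair $(\hx,\hy)$ with both alignments matching. Equivalently, take $x$ to be just after the last common match before $\pi_j^c$, and $x'$ to be the first common match after it, or the boundary of $P_j$ if there is none. At the endpoints, both paths either pass through a common point (the neighbour of a common match) or lie at the block boundary.

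Next I would replace $\mX$ on this stretch by an optimal alignment $\mX'$ of $P\fragmentco{x}{x'}$ onto $\mA(P\fragmentco{x}{x'})$. This needs a cut-and-paste argument. Wherever the endpoints are common points of $\mX$ and $\mA$, we have $\mA(P\fragmentco{x}{x'}) = \mX(P\fragmentco{x}{x'})$, and the restriction of the optimal $\mX$ is itself optimal by \cref{fct:ali}. Since it avoids common matches with $\mA$ by construction, \cref{prp:coverededit} applies directly. The delicate endpoints are the block boundaries, at $x=\pi_j^0$ and at the right end of $P_j$, where $\mX$ and $\mA$ may start at different rows of $T$.

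I expect these boundary cases to be the main obstacle. My first attempt would be to avoid them by extension: view $\mX$ and $\mA$ as restrictions of paths across neighbouring blocks. Alternatively, I would use \cref{prp:edimpliesselfed} directly instead of \cref{prp:coverededit}. That lemma's bound includes the start and end offsets $|i-i'|$ and $|j-j'|$. Here these offsets are at most $w+4k$ plus the drift allowed by \cref{obs:drift}, giving $O(w+k)$ in total. This yields $\selfed(T\fragment{\tau_i^a}{\tau_i^b})\le 2w+11k$-type bounds, in the style of the chain in \cref{prp:coverededit}. We then transfer this bound to $T\fragment{\tau(a)}{\tau(b)}$ as in that proof, at an additional cost of $2\sum \w_S$, to land within the $6w+11k$ thresholds. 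These thresholds apply through conditions \eqref{it:periodcover_alt:1}--\eqref{it:periodcover_alt:4} of \cref{def:periodcover_alt} when the interval touches $0$, $\bc(\bG_S)-1$, $c_{\last}$ or $c_{\last}+1$. The constants in \cref{def:periodcover_alt} seem tailored to exactly this boundary accounting, so I would check the arithmetic there carefully. In the interior, where both endpoints are common matches, we rely on condition \eqref{it:periodcover_alt:5} through \cref{prp:coverededit}.
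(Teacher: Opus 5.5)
Your overall architecture matches the paper's. Between the nearest common matches of $\mX$ and $\mA=\mA_S^{j,i}$ around $\pi_j^c$ you invoke \cref{prp:coverededit}, i.e., condition~\eqref{it:periodcover_alt:5} of \cref{def:periodcover_alt}, and you leave the block boundaries to conditions~\eqref{it:periodcover_alt:1}--\eqref{it:periodcover_alt:3}. The gap is the boundary case. You rightly call it the crux, but you do not close it, and the route you sketch fails for three reasons.

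First, \cref{prp:edimpliesselfed} needs both alignments to be defined on the same string $Y$. On the text side, however, $\mA^{-1}$ starts at $\tau_i^0$ and $\mX^{-1}$ starts at $y$, and this offset is the whole point of the boundary case.

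Second, the bound is not of the form $2w+11k$. If $\mX$ and $\mA$ share no match on all of $P_j$, the start offset can be $w+4k$. By \cref{obs:drift}, the end offset can then be $2w+5k$, so \cref{prp:edimpliesselfed} gives only $4w+10k$.

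Third, moving from $T\fragment{\tau_i^a}{\tau_i^b}$ to $T\fragment{\tau(a)}{\tau(b)}$ as in \cref{prp:coverededit} goes through two covering alignments, with edit distance up to $2w$. By \cref{prp:prop_selfed} this costs up to $4w$ in $\selfed$. You end at $8w+10k$, which exceeds $6w+11k$ whenever $2w>k$. Extending the paths into neighbouring blocks is no remedy either: $\mX$ only aligns $P_j$, and for $j=0$ there is no block to the left.

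The missing idea is to work on the pattern side. Apply \cref{prp:edimpliesselfed} to $Y=P\fragmentco{\pi_j^0}{\pi_j^{c}}$, on which both $\mA$ and $\mX$ are defined. With \cref{obs:drift} bounding the end offset, this gives $\selfed(Y)\le 2|\tau_i^0-y|+2(w+k)\le 4w+10k$. Then transfer once, through the single alignment $\mA_S^{j,0}$ of cost at most $w$, to $T\fragmentco{\tau_0^0}{\tau_0^{c}}$; this costs only $2w$ in $\selfed$. The result is $6w+10k$, and closing the right endpoint adds at most one, so condition~\eqref{it:periodcover_alt:1} forces $c\in C_S$.

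The paper runs this argument once, for $c_\ell=\min(\fragmentco{0}{\bc(\bG_S)}\setminus C_S)$, to show that $\mX$ and $\mA$ share a point at or before $(\pi_j^{c_\ell},\tau_i^{c_\ell})$. Given that point, \cref{obs:drift} yields $|\tau_{i+1}^0-y'|\le w+k$. The symmetric argument, using condition~\eqref{it:periodcover_alt:2} (or~\eqref{it:periodcover_alt:3} when $j=m_0-1$), then gives a shared point at or after $(\pi_j^{c_r},\tau_i^{c_r})$. After that only your interior case remains. The constant $6w+11k$ is calibrated for exactly this accounting.
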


\begin{proof}
    We first assume that $j< m_0-1$ and then briefly argue that the case of $j=m_0-1$ can
    be handled similarly.
    Assume $\fragmentco{0}{\bc(\bG_S)}\setminus C_S \neq \emptyset$; otherwise there is
    nothing to prove.
    Denote $c_{\ell}= \min (\fragmentco{0}{\bc(\bG_S)}\setminus C_S)$ and $c_{r} =
    \max(\fragmentco{0}{\bc(\bG_S)}\setminus C_S)$.
    Henceforth, we set $\mA \coloneqq \mA_{S}^{j,i}$.

    \begin{claim}
        There exist $(x_{\ell},y_{\ell}), (x_{r},y_{r}) \in \mA \cap \mX$
        such that $(x_{\ell},y_{\ell}) \leq (\pi_j^{c_{\ell}}, \tau_i^{c_{\ell}})$
        and $(\pi_j^{c_{r}}, \tau_i^{c_{r}}) \leq (x_{r},y_{r})$.
    \end{claim}

    \begin{claimproof}
        First, we want to argue that there exists $(x_{\ell},y_{\ell}) \in \mA \cap \mX$
        such that $(x_{\ell},y_{\ell}) \leq (\pi_j^{c_{\ell}}, \tau_i^{c_{\ell}})$.
        Note that, following \cref{def:periodcover_alt}, we must have
        $\selfed(T\fragment{\tau_0^0}{\tau_0^{c_{\ell}}}) > 6w + 11k$; otherwise
        $c_{\ell}$ would have been included in $C_S$.
        Suppose that $\mX$ aligns $P\fragmentco{\pi_j^0}{\pi_j^{c_{\ell}}}$ with
        $T\fragmentco{y}{\hy}$.
        For the sake of contradiction, suppose that $\mA :
        P\fragmentco{\pi_j^0}{\pi_j^{c_{\ell}}} \onto
        T\fragmentco{\tau_i^0}{\tau_i^{c_{\ell}}}$ and $\mX :
        P\fragmentco{\pi_j^0}{\pi_j^{c_{\ell}}} \onto T\fragmentco{y}{\hy}$ are disjoint.
        Applying \cref{prp:edimpliesselfed} to the two alignments $\mA :
        P\fragmentco{\pi_j^0}{\pi_j^{c_{\ell}}} \onto
        T\fragmentco{\tau_i^0}{\tau_i^{c_{\ell}}}$ and $\mX :
        P\fragmentco{\pi_j^0}{\pi_j^{c_{\ell}}} \onto T\fragmentco{y}{\hy}$ yields
        \begin{align*}
            \MoveEqLeft \selfed(P\fragmentco{\pi_j^0}{\pi_j^{c_{\ell}}}) \\
        &\leq |\tau_i^0 - y| + \edal{\mA}(P\fragmentco{\pi_j^0}{\pi_j^{c_{\ell}}},
        T\fragmentco{\tau_i^0}{\tau_i^{c_{\ell}}}) +
        \edal{\mX}(P\fragmentco{\pi_j^0}{\pi_j^{c_{\ell}}}, T\fragmentco{y}{\hy}) +
        |\tau_i^{c_{\ell}} - \hy| \\
        &\leq 2|\tau_i^0 - y| + 2\edal{\mA}(P\fragmentco{\pi_j^0}{\pi_j^{c_{\ell}}},
        T\fragmentco{\tau_i^0}{\tau_i^{c_{\ell}}}) +
        2\edal{\mX}(P\fragmentco{\pi_j^0}{\pi_j^{c_{\ell}}}, T\fragmentco{y}{\hy}) \\
        &\leq 2w + 8k + 2w + 2k = 4w + 10k,
        \end{align*}
        where we have used
        \begin{align*}
            |\tau_i^{c_{\ell}} - \hy| &\leq  |\tau_i^0 - y| + |(\tau_i^{c_{\ell}} -
            \tau_i^0) - (\pi_j^{c_{\ell}} - \pi_j^0)| + |(\hy - y) - (\pi_j^{c_{\ell}} -
            \pi_j^0)| \\
                                      &\leq |\tau_i^0 - y| +
                                      \edal{\mA}(P\fragmentco{\pi_j^0}{\pi_j^{c_{\ell}}},
                                      T\fragmentco{\tau_i^0}{\tau_i^{c_{\ell}}}) +
                                      \edal{\mX}(P\fragmentco{\pi_j^0}{\pi_j^{c_{\ell}}},
                                      T\fragmentco{y}{\hy}).
        \end{align*}
        Here, the second inequality follows from \cref{obs:drift}.

        An application of \cref{prp:prop_selfed} on
        $P\fragmentco{\pi_j^0}{\pi_j^{c_{\ell}}}$ and
        $\mA^{j,0}_S$ yields
        \[
            \selfed(T\fragment{\tau_0^0}{\tau_0^{c_{\ell}}}) \leq
            \selfed(P\fragmentco{\pi_j^0}{\pi_j^{c_{\ell}}}) + 2w = 6w + 10k.
        \]
        As a consequence, we have
        \[
            \selfed(T\fragment{\tau_0^0}{\tau_0^{c_{\ell}}}) \leq
        \selfed(T\fragmentco{\tau_0^0}{\tau_0^{c_{\ell}}}) + 1 \leq 6w + 11k,\]
        and we obtain a contradiction.

        Since the alignments $\mA : P_j \onto T_{i}$ and $\mX : P_j \onto
        T\fragmentco{y}{y'}$ intersect and have costs
        at most $w$ and $k$, respectively, we conclude from \cref{obs:drift} that
        $|\tau_{i+1}^0 - y'| \le w+k \le w+4k$.
        Using an argument symmetric to the above, there exists $(x_{r},y_{r}) \in \mA \cap
        \mX$ such that $(\pi_j^{c_{r}}, \tau_i^{c_{r}}) \leq (x_{r},y_{r})$.
    \end{claimproof}

    Now, for the sake of contradiction, suppose that there exists $c\in
    \fragmentco{0}{\bc(\bG_S)}\setminus C_S$ such that $\mX$ does not align $P[\pi^c_j]$
    to $T[\tau^c_{i}]$.
    Let $(x_c,y_c) \in \mA \cap \mX$ be the largest $(x_c,y_c)$ such that $(x_c,y_c) \leq
    (\pi^c_j, \tau^c_{i})$, and let $(x_c',y_c') \in \mA \cap \mX$ be the smallest
    $(x_c',y_c')$ such that $(\pi^c_j, \tau^c_{i}) \leq (x_c',y_c')$.
    Note, we know that such $(x_c,y_c), (x_c',y_c')$ exist because $(x_{\ell},y_{\ell}),
    (x_{r},y_{r}) \in \mA \cap \mX$ are such that $(x_{\ell},y_{\ell}) \leq
    (\pi_j^{c_{\ell}}, \tau_i^{c_{\ell}})\leq (\pi^c_j, \tau^c_{i}) \leq (\pi_j^{c_{r}},
    \tau_i^{c_{r}}) \leq (x_{r},y_{r})$.

    \begin{claim}
        $\mX$ and $\mA$ do not both align $P[x_c]$ to $T[y_c]$ at the same time.
    \end{claim}

    \begin{claimproof}
        For the sake of contradiction, assume $\mX$ and $\mA$ both align $P[x_c]$ to
        $T[y_c]$.
        Thus, $(x_c+1,y_c+1) \in \mX \cap \mA$.
        Since $(\pi^c_j, \tau^c_{i}) \notin \mX$, we must have $(x_c,y_c) \neq (\pi^c_j,
        \tau^c_{i})$,
        and $(x_c,y_c)$ appears strictly before $(\pi^c_j, \tau^c_{i}) \in \mA$ in $\mA$.
        Consequently, $(x_c+1,y_c+1) \leq (\pi^c_j, \tau^c_{i})$.
        However, this is a contradiction with $(x_c,y_c)$
        being the largest $(x_c,y_c) \in \mX \cap \mA$ such that $(x_c,y_c) \leq (\pi^c_j,
        \tau^c_{i})$.
    \end{claimproof}

    As a consequence,
    there is no $(\hx, \hy) \in \mA \cap \mX$ such that
    both $\mA : P\fragmentco{x_c}{x_c'} \onto T\fragmentco{y_c}{y_c'}$ and $\mX :
    P\fragmentco{x_c}{x_c'} \onto T\fragmentco{y_c}{y_c'}$
    align $P\position{\hx}$ to $T\position{\hy}$.
    Therefore, we can use \cref{prp:coverededit} obtaining $c \in C_S$.
    However, this is a contradiction with the definition of $c$.

    The proof for $j=m_0-1$ is almost identical to the proof for $j<m_0-1$: in the proof,
    we replace every occurrence of $\fragmentco{0}{\bc(\bG_S)}$  with
    $\fragment{0}{c_{\last}}$ and every occurrence of $\tau_{i+1}^0$ with
    $\tau_i^{c_{\last}}+1$.
\end{proof}

Next, we prove that for an optimal alignment of $P$ onto $T$ it suffices that any $P_j$ is
aligned close enough to any $T_i$ that the consistent alignment of black components as
proven above extends to all blocks before $P_j$ and $T_i$.

\begin{lemma}\label{prp:seq_match}
    Let $\mX : P \onto T\fragmentco{t}{t'}$ be an optimal alignment of $P$ onto a fragment
    $T\fragmentco{t}{t'}$ such that $\ed(P, T\fragmentco{t}{t'})\le k$.
    If there exists $j \in \fragmentco{0}{m_0}$ and $i \in \fragmentco{0}{n_0}$ such that
    $|\tau_{i}^0 - t - \pi_{j}^0| \leq w + 3k$, then $\mX$ aligns
    $P\position{\pi^c_{\hj}}$ to $T\position{\tau^c_{i+{\hj-j}}}$ for every $c\in
    \fragmentco{0}{\bc(\bG_S)}\setminus C_S$ and  $\hj \in \fragmentco{\max(0,
    j-i)\allowbreak }{\min(m_c, n_c+j-i)}$.
\end{lemma}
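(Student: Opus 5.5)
The plan is an induction that propagates the consistent alignment of black components block by block, starting from the block $P_j$ and moving both left and right. The engine is \cref{prp:recperioded}: whenever $\mX$ aligns a block $P_{\hj}$ onto some $T\fragmentco{y}{y'}$ with $|\tau^0_{\hi}-y|\le w+4k$, it aligns $P[\pi^c_{\hj}]$ to $T[\tau^c_{\hi}]$ for all $c\notin C_S$. We restrict $\mX$ to $P_{\hj}$ and obtain an alignment $\mX_{\hj}: P_{\hj}\onto \mX(P_{\hj})$. By \cref{fct:ali}, since $\mX$ is optimal, this restriction is optimal and has cost at most $k$. So the lemma applies to every block once the starting position of its image is shown to be within $w+4k$ of the right $\tau^0$.

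\textbf{Base step.} Let $y=\mX(P\fragmentco{\pi^0_j}{\cdot})$ be the position that $\mX$ assigns to $\pi^0_j$. By \cref{obs:drift} applied to the prefix $P\fragmentco{0}{\pi^0_j}$ aligned onto $T\fragmentco{t}{y}$, we get $|y-t-\pi^0_j|\le k$. Combined with the hypothesis, this gives $|\tau^0_i-y|\le w+4k$. Hence \cref{prp:recperioded} yields the claim for $\hj=j$. When $j=m_0-1$ or $i=n_0-1$, the block variants in \cref{def:blocks} and \cref{prp:recperioded} require $j=m_0-1$ whenever $i=n_0-1$. If $i=n_0-1$ but $j<m_0-1$, I would argue that the range of $\hj$ still only needs blocks up to index $j$ paired with $\tau$ indices at most $n_0-1$, and treat the last block with the $c\le c_{\last}$ variant.

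\textbf{Inductive step to the right.} Suppose $\mX$ aligns $P[\pi^c_{\hj}]$ to $T[\tau^c_{\hi}]$ with $\hi=i+\hj-j$ for some $c\notin C_S$; this needs $C_S\neq\fragmentco{0}{\bc(\bG_S)}$, since otherwise the statement is vacuous. I would pick such a $c$ and use the matched point $(\pi^c_{\hj},\tau^c_{\hi})\in\mX$. Next, compare $\mX$ with $\mA^{\hj,\hi}_S$ from \cref{prop:asji}, which also contains this point and has cost at most $w$. Applying \cref{obs:drift} to the segment from $\pi^c_{\hj}$ to $\pi^0_{\hj+1}$ under both alignments shows that the image of $\pi^0_{\hj+1}$ under $\mX$ lies within $w+k$ of $\tau^0_{\hi+1}$. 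This is at most $w+4k$, so \cref{prp:recperioded} applies to block $\hj+1$ against $T_{\hi+1}$.

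\textbf{Inductive step to the left.} Here I would use the symmetric argument from the point $(\pi^c_{\hj},\tau^c_{\hi})$ back to $\pi^0_{\hj-1}$. It uses the cost of $\mA^{\hj-1,\hi-1}_S$ and the fact that the segment $P\fragmentco{\pi^0_{\hj-1}}{\pi^c_{\hj}}$ is covered by two consecutive block alignments. This gives the drift bound $w+k$ once more, where I would re-check the accounting so that it stays $\le w+4k$.

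\textbf{Main obstacle.} The main obstacle is the boundary handling. I must ensure the indices stay in range: $\hj<m_c$ and $\hi<n_c$, which corresponds to the stated range $\fragmentco{\max(0,j-i)}{\min(m_c,n_c+j-i)}$. I must also handle the final partial blocks $P_{m_0-1}$ and $T_{n_0-1}$, where only $c\le c_{\last}$ lie inside and components $c>c_{\last}$ have $m_c=m_0-1$. I would treat the last step separately, using the $j=m_0-1$ clause of \cref{prp:recperioded} and \cref{prop:asji}, and check that whenever $\hi=n_0-1$ is reached we indeed have $\hj=m_0-1$ or the component ranges exclude it.
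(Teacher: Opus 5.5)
Your base case and rightward step match the paper's proof. The paper also obtains $|\tau^0_i-y_j|\le w+4k$ from \cref{obs:drift} applied to $P\fragmentco{0}{\pi^0_j}$. To step right, it combines a matched point $(\pi^{\bar c}_{\hj},\tau^{\bar c}_{\hi})\in\mX$ with $(\pi^0_{\hj+1},y_{\hj+1})\in\mX$, where $\mX(P_{\hj})=T\fragmentco{y_{\hj}}{y_{\hj+1}}$. It then uses a cost-$w$ alignment of $P\fragmentco{\pi^{\bar c}_{\hj}}{\pi^0_{\hj+1}}$ onto $T\fragmentco{\tau^{\bar c}_{\hi}}{\tau^0_{\hi+1}}$ composed via \cref{def:funcover}, which is exactly what your $\mA^{\hj,\hi}_S$ supplies.

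The leftward step, however, fails as you describe it. An alignment of $P\fragmentco{\pi^0_{\hj-1}}{\pi^c_{\hj}}$ built from two consecutive blocks costs up to $w+\sum_{c'<c}\w_S(c')$. So you only get $|y_{\hj-1}-\tau^0_{\hi-1}|\le w+\sum_{c'<c}\w_S(c')+k$, which can be close to $2w+k$. Nothing bounds $\sum_{c'<c}\w_S(c')$ by $3k$, not even for the smallest $c\notin C_S$: the components before it lie in $C_S$ but may carry most of the weight. The proof of \cref{prp:recperioded} does not survive an offset of order $2w$ at the left end. Its self-edit-distance estimate becomes $8w+\Oh(k)$, above the $6w+11k$ threshold of \cref{def:periodcover_alt}. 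So re-checking the constants cannot rescue this step.

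The missing idea is to anchor at the right end of the block. From $(\pi^c_{\hj},\tau^c_{\hi})\in\mX$, go back only to $\pi^0_{\hj}$, the right end of $P_{\hj-1}$. The segment $P\fragmentco{\pi^0_{\hj}}{\pi^c_{\hj}}$ costs at most $\sum_{c'<c}\w_S(c')\le w$, so $|y_{\hj}-\tau^0_{\hi}|\le w+k$; this is the exact mirror of the rightward step.

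You then need the mirror image of \cref{prp:recperioded}, whose hypothesis compares the right end of $\mX(P_{\hj-1})$ with that of $T_{\hi-1}$. Its proof is the given one with the two ends exchanged. Write $c_r=\max(\fragmentco{0}{\bc(\bG_S)}\setminus C_S)$ and $c_\ell=\min(\fragmentco{0}{\bc(\bG_S)}\setminus C_S)$.
\begin{enumerate}
\item Find a common matched point of $\mX$ and $\mA^{\hj-1,\hi-1}_S$ at or after component $c_r$, using \cref{prp:edimpliesselfed} together with \cref{def:periodcover_alt}\eqref{it:periodcover_alt:2}.
\item From that intersection, derive $|y_{\hj-1}-\tau^0_{\hi-1}|\le w+k$ by \cref{obs:drift}.
\item Obtain the common point before $c_\ell$ via \cref{def:periodcover_alt}\eqref{it:periodcover_alt:1}.
\item Conclude with \cref{prp:coverededit} exactly as in the original proof.
\end{enumerate}
The paper itself only says that the case $\hj<j$ is ``almost identical'' to $\hj>j$; this mirrored argument is how to make that precise.

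A side benefit: $y_j$ is also the right end of $\mX(P_{j-1})$, so the first leftward step follows directly from the base estimate $|y_j-\tau^0_i|\le w+4k$. It then needs no matched component in block $j$, which eases part of your boundary worry when $i=n_0-1$.
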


\begin{proof}
    We assume that $C_S \subsetneq  \fragmentco{0}{\bc(\bG_S)}$; otherwise, there is
    nothing to prove.

    Let $j \in \fragmentco{0}{m_0}$ and $i \in \fragmentco{0}{n_0}$ be such that
    $|\tau_{i}^0 - t - \pi_{j}^0| \leq w + 3k$.
    For each $\hj \in \fragmentco{\max(0, j-i)}{\min(m_c, n_c+j-i)}$, we must show: the
    alignment $\mX$ aligns $P\position{\pi^c_{\hj}}$ to $T\position{\tau^c_{i+{\hj-j}}}$
    for all $c \in \fragmentco{0}{\bc(\bG_S)}\setminus C_S$.
    We begin by proving the case for $\hj = j$. This serves as the base case for an
    inductive argument showing that the statement holds for both $\hj > j$ and $\hj < j$
    (we provide the explicit proof for $\hj > j$, as the case for $\hj < j$ is almost
    identical).

    Before going into the details of the proofs, we set up some notation.
    Define $y_0,\ldots, y_{m_0}\in \fragment{t}{t'}$ so that
    $\mX(P_{\hj})=T\fragmentco{y_{\hj}}{y_{\hj+1}}$
    holds for all $\hj \in \fragmentco{0}{m_0}$.

    Now, we prove the base case when $\hj = j$.
    We observe that $|\pi_{j}^0-(y_{j}-t)|\le k$ because $\mX$ aligns
    $P\fragmentco{0}{\pi_{j}^0}$ with $T\fragmentco{t}{y_j}$ at a cost not exceeding $k$.
    Combining this inequality with the assumption $|\tau_{i}^0 - t - \pi_j^0| \leq w +
    3k$,
    we conclude that $|\tau_{i}^0-y_j|\le w+4k$.
    Thus, we can use \cref{prp:recperioded} to conclude that $\mX$ aligns  $P[\pi_j^c]$ to
    $T[\tau_{i}^c]$ for all $c\in \fragmentco{0}{\bc(\bG_S)}\setminus C_S$ such that $j
    \in \fragmentco{0}{m_c}$.

    For the inductive step consider $\hj \in \fragmentco{0}{m_0}$ such that $\hj > j$.
    Moreover, consider an arbitrary $\bar{c} \in \fragmentco{0}{\bc(\bG_S)}\setminus C_S$.
    By the inductive hypothesis, we have $(\pi^{\bar{c}}_{\hj-1},
    \tau^{\bar{c}}_{i+\hj-1-j}) \in \mX$.
    Combined with $(\pi^{0}_{\hj}, y_{\hj}) \in \mX$, this yields $|(\pi_{\hj}^0 -
    \pi_{\hj-1}^{\bar{c}}) - (y_{\hj} - \tau^{\bar{c}}_{i+\hj-1-j})| \leq k$.
    On the other side, we observe that
    $\ed(P\fragmentco{\pi_{\hj-1}^{\bar{c}}}{\pi_{\hj}^0},T\fragmentco{\tau_{{i+\hj-1-j}}^{\bar{c}}}{\tau_{i+\hj-j}^0})\le
    w$ follows from \cref{def:funcover} by composing alignments of cost at most $\w_S(c)$
    for each $c \in \fragmentco{\bar{c}}{\bc(\bG_S)}$, which means that $|(\pi_{\hj}^0 -
    \pi_{\hj-1}^{\bar{c}})-(\tau_{i+\hj-j}^0 - \tau_{{i+\hj-1-j}}^{\bar{c}})|\le w$.
    Consequently,
    \[
        |\tau_{i+\hj-j}^0 - y_{\hj}| \leq |(\pi_{\hj}^0 - \pi_{\hj-1}^{\bar{c}}) -
        (y_{\hj} - \tau^{\bar{c}}_{i+\hj-1-j})| + |(\tau_{i+\hj-j}^0 -
        \tau_{{i+\hj-1-j}}^{\bar{c}}) - (\pi_{\hj}^0 - \pi_{\hj-1}^{\bar{c}})| \leq w + k.
    \]
    This lets us apply \cref{prp:recperioded} for $\hj$, $i+\hj-j$, and $\mX : P_{\hj}
    \onto T\fragmentco{y_{\hj}}{y_{\hj+1}}$,
    which implies that $\mX$ aligns $P[\pi_{\hj}^c]$ to $T[\tau_{i+\hj-j}^c]$ for all
    $c\in \fragmentco{0}{\bc(\bG_S)}\setminus C_S$ such that $\hj\in \fragmentco{0}{m_c}$,
    completing the inductive argument.
\end{proof}

Finally, to conclude this (sub)section, we prove \cref{prp:close}.

\prpclose
\begin{proof}
    We assume that $C_S \subsetneq  \fragmentco{0}{\bc(\bG_S)}$; otherwise, there is nothing to prove.

    For ~\eqref{it:close:in} notice that it suffices to apply \cref{prp:seq_match},
    so we proceed directly to the proof of~\eqref{it:close:out}.
    First, consider $i'\in \fragmentco{0}{i}$ and suppose, for a proof by contradiction,
    that $\tau_{i'}^c\ge t$.
    \cref{prp:close}\eqref{it:close:in} implies $(\pi_0^c,\tau_i^c)\in \mX$, so $\mX$
    aligns $P\fragmentco{0}{\pi_0^c}$
    onto $T\fragmentco{t}{\tau_i^c}$ at cost at most $k$.
    By \cref{def:funcover}\eqref{it:funcover:3}, there exists an alignment $\mA$ of cost
    at most $w$
    that aligns $P\fragmentco{0}{\pi_0^c}$ onto $T\fragmentco{\hat{t}}{\tau_i^c}$
    for some $\hat{t}\in \fragment{\tau_{i-1}^{\bc(\bG_S)-1}}{\tau_i^0}$.
    Consequently, $|\hat{t}-t|\le w+k$.
    Since $t \le \tau_{i'}^c \le \tau_{i'}^{\bc(\bG_S)-1} \le \tau_{i-1}^{\bc(\bG_S)-1}
    \le \hat{t}$,
    we conclude $|T\fragmentco{\tau_{i'}^c}{\tau_{i'}^{\bc(\bG_S)-1}}|\le \hat{t}-t \le
    w+k$.
    Moreover, $|T\fragmentco{\tau_{0}^c}{\tau_{0}^{\bc(\bG_S)-1}}| \le
    |T\fragmentco{\tau_{i'}^c}{\tau_{i'}^{\bc(\bG_S)-1}}|+\ed(T\fragmentco{\tau_{0}^c}{\tau_{0}^{\bc(\bG_S)-1}},T\fragmentco{\tau_{i'}^c}{\tau_{i'}^{\bc(\bG_S)-1}})
    \le w+k+2w=3w+k$
    and $|T\fragment{\tau_{0}^c}{\tau_{0}^{\bc(\bG_S)-1}}|\le 3w+k+1 \le 3w+2k$.
    Observe that for any string $X$, by only applying insertions/deletions, we have
    $\selfed(X) \leq 2|X|$.
    Consequently, $\selfed(T\fragment{\tau_{0}^c}{\tau_{0}^{\bc(\bG_S)-1}})\le 6w+4k$,
    contradicting $c\notin C_S$.

    The argument for $i'\in \fragmentco{i+m_c}{n_c}$ is fairly similar.
    For a proof by contradiction, suppose that $\tau_{i'}^{c} < t'$.
    \cref{prp:close}\eqref{it:close:in} implies $(\pi_{m_c-1}^c,\tau_{i+m_c-1}^c)\in \mX$,
    so $\mX$ aligns $P\fragmentco{\pi_{m_c-1}^c}{|P|}$
    onto $T\fragmentco{\tau_{i+m_c-1}^c}{t'}$ at cost at most $k$.
    By \cref{def:funcover}\eqref{it:funcover:5}, there exists an alignment $\mA$ of cost
    at most $w$
    that aligns $P\fragmentco{\pi_{m_c-1}^c}{|P|}$ onto
    $T\fragmentco{\tau_{i+m_c-1}^c}{\hat{t}}$
    for some $\hat{t}\in
    \fragment{\tau_{i+m_0-1}^{c_{\last}}}{\tau_{i+m_0-1}^{c_{\last}+1}}$.
    Consequently, $|\hat{t}-t'|\le w+k$. Now, we consider two cases.

    First, suppose that $c \le c_\last$ so that $i'\ge i+m_c = i+m_0$.
    Since $\hat{t} \le \tau_{i+m_0-1}^{c_{\last}+1} \le \tau_{i'}^0 \le \tau_{i'}^c <  t'$,
    we conclude $|T\fragmentco{\tau_{i'}^{0}}{\tau_{i'}^{c}}|\le t'-\hat{t} \le w+k$.
    Moreover, $|T\fragmentco{\tau_{0}^{0}}{\tau_{0}^{c}}| \le
    |T\fragmentco{\tau_{i'}^{0}}{\tau_{i'}^{c}}|+\ed(T\fragmentco{\tau_{0}^{0}}{\tau_{0}^{c}},T\fragmentco{\tau_{i'}^{0}}{\tau_{i'}^{c}})
    \le w+k+2w=3w+k$
    and $|T\fragment{\tau_{0}^{0}}{\tau_{0}^{c}}|\le 3w+k+1 \le 3w+2k$.
    Consequently, $\selfed(T\fragment{\tau_{0}^{0}}{\tau_{0}^{c}})\le 6w+4k$,
    contradicting $c\notin C_S$.

    Next, suppose that $c > c_\last$ so that $i' \ge i+m_c = i+m_0-1$.
    Since $\hat{t} < \tau_{i+m_0-1}^{c_{\last}+1} \le \tau_{i'}^{c_\last+1} \le
    \tau_{i'}^c <  t'$,
    we conclude $|T\fragmentco{\tau_{i'}^{c_\last+1}}{\tau_{i'}^{c}}|\le t'-\hat{t} \le
    w+k$.
    Moreover, $|T\fragmentco{\tau_{0}^{c_\last+1}}{\tau_{0}^{c}}| \le
    |T\fragmentco{\tau_{i'}^{c_\last+1}}{\tau_{i'}^{c}}|+\ed(T\fragmentco{\tau_{0}^{c_\last+1}}{\tau_{0}^{c}},T\fragmentco{\tau_{i'}^{c_\last+1}}{\tau_{i'}^{c}})
    \le w+k+2w=3w+k$
    and $|T\fragment{\tau_{0}^{c_\last+1}}{\tau_{0}^{c}}|\le 3w+k+1 \le 3w+2k$.
    Consequently, $\selfed(T\fragment{\tau_{0}^{c_\last+1}}{\tau_{0}^{c}})\le 6w+4k$,
    contradicting $c\notin C_S$.
\end{proof}

\subsection{Adding Candidate Positions to \texorpdfstring{$S$}{S}}\label{sec:halves}

In this (sub)section we prove the lemma that we use when adding a new alignment to $S$.

\periodhalves

\begin{proof}
    For a proof by contradiction, suppose that $(\pi_j^c,\tau_i^c)\in \mY$ for some $i\in
    \fragmentco{0}{n_c}$ and $j \in \fragmentco{0}{m_c}$.
    Moreover, since $C_S \subsetneq \fragmentco{0}{\bc(\bG_S)}$, we can consider an
    arbitrary $c' \in  \fragmentco{0}{\bc(\bG_S)} \setminus C_S$.
    We consider two cases.

    If $i \geq j$, we note that $\mY$ aligns $P\fragmentco{0}{\pi_j^c}$ to
    $T\fragmentco{t}{\tau_i^c}$. Since $\mY$ has cost at most $k$, we have $|\tau_i^c - t
    - \pi_j^c| \leq k$.
    At the same time,
    $\ed(P\fragmentco{\pi_j^0}{\pi_j^c},T\fragmentco{\tau_i^0}{\tau_i^c})\le w$ follows
    from \cref{def:funcover} by composing alignments of cost at most $\w_S(\bar{c})$ for
    each $\bar{c} \in \fragmentco{0}{c}$,
    which means that $|(\pi_j^c-\pi_j^0)-(\tau_i^c-\tau_i^0)|\le w$.
    Combining the two inequalities, we derive $|\tau_i^0 - t - \pi_j^0| \le w+k$.
    This allows us to use \cref{prp:seq_match} to get that $(\pi_0^{c'},\tau_{i-j}^{c'})
    \in \mY$ because $c' \in  \fragmentco{0}{\bc(\bG_S)} \setminus C_S$.

    Now, since $\mY$ has cost at most $k$, we have $|\tau_{i-j}^{c'} - t - \pi_0^{c'}|
    \leq k$.
    By composing alignments of cost at most $\w_S(\bar{c})$ from
    \cref{def:funcover} for each $\bar{c} \in \fragmentco{0}{c'}$, we get
    $\ed(P\fragmentco{\pi_0^0}{\pi_0^{c'}},T\fragmentco{\tau_{i-j}^{0}}{\tau_{i-j}^{c'}})
    \le w$, which implies $|(\pi_0^{c'}-\pi_0^0)-(\tau_{i-j}^{c'} - \tau_{i-j}^{0})|\le
    w$.
    Combining the previous two inequalities via the triangle inequality, we get
    $|\tau_{i-j}^{0} - t - \pi_0^{0}| \leq w + k$, which contradicts the assumption that
    $|\tau_{i}^{0} - t - \pi_0^{0}|>w+2k$ holds for every $i\in \fragmentco{0}{n_0}$.

    If $i < j$, then we distinguish two further cases.
    If $\pi_0^0 > \tau_0^0 - t$, we use the same argument as before up until the
    application of \cref{prp:seq_match} to get that $(\pi_{j-i}^{c'}, \tau_{0}^{c'}) \in
    \mY$. Thus, $|\tau_0^{c'} - t - \pi_{j-i}^{c'}| \le k$. Using once again the
    alignments from \cref{def:funcover}, and combining through the triangle inequality, we
    get $|\tau_0^0 - t - \pi_{j-i}^0| \le w + k$. This allows us to reach a contradiction
    using the following chain of inequalities:
    \[
        |\tau_0^0 - t - \pi_0^0| = \pi_0^0 - (t - \tau_0^0) \leq \pi_{j-i}^0 - (t -
        \tau_0^0) \leq |\pi_{j-i}^0 - (t - \tau_0^0)| \leq w + k.
    \]
    Otherwise, if $\pi_0^0 \leq \tau_0^0 - t$, we recall that $S$ contains an alignment
    $\mXpref$ of cost at most $k$ such that $(0,0), (\pi_0^0,\tau_{0}^0)\in \mXpref$.
    Thus, $|\tau_0^0 - \pi_0^0| \leq k$, which gives us
    \[
        |\tau_0^0 - t - \pi_0^0| \leq t + |\tau_0^0 - \pi_0^0| = \tau_0^0 - (\tau_0^0 - t)
        + |\tau_0^0 - \pi_0^0| \leq \tau_0^0 - \pi_0^0 + |\tau_0^0 - \pi_0^0| \leq
        2|\tau_0^0 - \pi_0^0| \leq 2k.
    \]
    We conclude that in both cases we have a contradiction with our assumption.
\end{proof}

\subsection{Recovering all \texorpdfstring{$k$}{k}-edit Occurrences of
\texorpdfstring{$P$}{P} in \texorpdfstring{$T$}{T}}
\label{sec:recocc}

In this (sub)section we drop the assumption that $S$ succinctly encloses $T$ and that $\bc(\bG_{S}) > 0$.
Every time we assume that $S$ succinctly encloses $T$ or that $\bc(\bG_{S}) > 0$, we explicitly mention it.

\begin{definition}[Compare with {\cite[Definition 4.28]{KNW24}}] \label{def:scomplete}
Let $S$ be a set of $k$-edit alignments of $P$ onto fragments of $T$
and let $T\fragmentco{t}{t'}$ be a $k$-error occurrence of $P$ in $T$.
We say that $S$ \emph{captures} $T\fragmentco{t}{t'}$ if $S$ succinctly encloses $T$
and exactly one of the two following holds.
\begin{itemize}
    \item We have $\bc(\bG_{S}) = 0$; or
    \item $\bc(\bG_{S}) > 0$ and $|\tau_i^{0} - t - \pi_0^{0}| \leq w + 3k$
    holds for some $i\in \fragmentco{0}{n_0}$.\qedhere
\end{itemize}
\end{definition}

\begin{theorem}[Compare with {\cite[Theorem 4.29]{KNW24}}]\label{prp:subhash}
    Let $S$ be a set of $k$-edit alignments of $P$ onto fragments of $T$ such that $S$
    succinctly encloses $T$ and $\bc(\bG_{S}) > 0$.
    Construct $P^\#$ and $T^\#$ by replacing, for every $c \notin C_S$, every character in
    the $c$-th black component with a unique character $\#_c$. Then, the two following
    hold.
    \begin{enumerate}
        \item For every $a\in \fragmentco{0}{m}$ and $b \in \fragmentco{0}{n}$, if
            $P^\#\position{a} = T^\#\position{b}$, then $P\position{a} = T\position{b}$.\\
            (No new equalities between characters are created.)
            \label{it:ed_subhash:i}
        \item If $S$ captures a $k$-error occurrence $T\fragmentco{t}{t'}$, then we have
            that $\ed(P^\#, T^\#\fragmentco{t}{t'}) \leq \ed(P, T\fragmentco{t}{t'})$.
            Moreover, all optimal alignments $\mX \mid P \onto T\fragmentco{t}{t'}$
            satisfy $\sE_{P, T}(\mX) = \sE_{P^\#, T^\#}(\mX)$.\\
            (For captured $k$-error occurrences, the edit distance and edit information of
            all optimal alignments are preserved.)
            \label{it:ed_subhash:ii}
    \end{enumerate}
\end{theorem}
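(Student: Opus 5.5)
Part \eqref{it:ed_subhash:i} follows directly from the construction. If $P^\#\position{a} = T^\#\position{b}$, there are two cases. Either both characters were replaced by the same symbol $\#_c$, which means both lie in the $c$-th black component; by \cref{rmk:perstr}, all characters of a black component are identical, so $P\position{a}=T\position{b}$. Or neither was replaced, and then $P^\#\position{a}=P\position{a}$ and $T^\#\position{b}=T\position{b}$. A replaced character cannot equal an unreplaced one, because each $\#_c$ is a fresh symbol not in $\Sigma$.

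For part \eqref{it:ed_subhash:ii}, fix a captured $k$-error occurrence $T\fragmentco{t}{t'}$ and an optimal alignment $\mX : P \onto T\fragmentco{t}{t'}$. The plan is to show that $\mX$ matches every replaced character of $P$ to a replaced character of $T$ from the same component. Since $\bc(\bG_S)>0$, capture gives some $i\in\fragmentco{0}{n_0}$ with $|\tau_i^0 - t - \pi_0^0|\le w+3k$. I would apply \cref{prp:seq_match} with $j=0$ and this $i$. It yields that $\mX$ aligns $P\position{\pi^c_{\hj}}$ with $T\position{\tau^c_{i+\hj}}$ for every $c\notin C_S$ and every $\hj\in\fragmentco{0}{\min(m_c,n_c-i)}$.

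Two points need checking. First, we need $i+m_c\le n_c$, so that all $m_c$ characters of component $c$ in $P$ are covered. If $i>n_0-m_0$, the final characters of $P$'s component would have to be aligned beyond $T$'s last black block. I would rule this out using the length/cost bounds of $\mX$ together with \cref{def:funcover}\eqref{it:funcover:4}--\eqref{it:funcover:5}, mimicking the argument in the proof of \cref{prp:close}. Equivalently, I would show that the alignment would otherwise force a short fragment with small $\selfed$, contradicting $c\notin C_S$. Once $i\in\fragment{0}{n_0-m_0}$ is established, \cref{prp:close}\eqref{it:close:in} applies directly. Second, by \cref{prp:close}\eqref{it:close:out}, no other character $T\position{\tau^c_{i'}}$ of component $c$ lies in $T\fragmentco{t}{t'}$.

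With these facts in hand, the conclusion is a bookkeeping step. Every $\#_c$ in $P^\#$ is aligned to a $\#_c$ in $T^\#\fragmentco{t}{t'}$, and these are matches in both the original and the replaced instance. Every other aligned pair is unchanged by the replacement. No replaced $T$-character in the window is inserted or substituted, because all of them are exactly the $\tau^c_{i+\hj}$ that are matched. Hence $\mX$ has the same cost on $(P^\#,T^\#\fragmentco{t}{t'})$, giving $\ed(P^\#,T^\#\fragmentco{t}{t'})\le \ed(P,T\fragmentco{t}{t'})$. The non-matching tuples of $\mX$ never involve replaced characters, so $\sE_{P,T}(\mX)=\sE_{P^\#,T^\#}(\mX)$.

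The main obstacle I expect is the boundary issue of guaranteeing $i\le n_0-m_0$ from capture, which only provides $i\in\fragmentco{0}{n_0}$. I would first try to derive it from the cost bound $k$ of $\mX$ combined with the weight bounds $w$ on the tails $P\fragmentco{\pi(m_S-1)}{|P|}$. If that fails, I would instead rely on \cref{prp:seq_match} plus the self-edit-distance contradiction to show that all $\hj<m_c$ are covered.
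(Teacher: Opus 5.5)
Your architecture matches the paper's. Part (i) is immediate from the uniformity of black components. For part (ii), the paper also takes the capturing index $i$, proves $i\le n_0-m_0$, invokes \cref{prp:close}\eqref{it:close:in} and \eqref{it:close:out}, and then does exactly your bookkeeping. The gap is that the bound $i\le n_0-m_0$, the only genuinely new step in this theorem, is left as a plan, and your first proposed route does not close it. Capture and $\cost(\mX)\le k$ give $|(m-\pi_0^0)-(t'-\tau_i^0)|\le w+4k$. Conditions \eqref{it:funcover:4}--\eqref{it:funcover:5} of \cref{def:funcover} only compare $P$ and $T$ near their ends. To rule out $i>n_0-m_0$, you must show that the $m_0$ blocks of $P\fragmentco{\pi_0^0}{m}$ cannot fit into the at most $m_0-1$ blocks of $T\fragmentco{\tau_i^0}{n}$. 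Chaining the per-block bounds of \cref{def:funcover} loses up to $w$ per block, so they cannot certify this on their own. Mimicking the proof of \cref{prp:close}\eqref{it:close:out} is circular: that proof uses the anchor $(\pi^c_{m_c-1},\tau^c_{i+m_c-1})\in\mX$ from part \eqref{it:close:in}, which already presupposes $i\le n_0-m_0$.

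What is missing is a single cheap alignment spanning all these blocks, and the paper takes it from $S$ itself. By \cref{clm:occ,clm:nosingle}, $\mXsuf$ contains $(\pi_0^0,\tau^0_{n_0-m_0})$ and $(m,n)$ and has cost at most $k$, so $|(m-\pi_0^0)-(n-\tau^0_{n_0-m_0})|\le k$. Suppose $i>n_0-m_0$ and fix $c\notin C_S$; then $\tau^c_{n_0-m_0}<\tau^0_i$. Combining the two estimates with $t'\le n$ gives $|T\fragment{\tau^0_{n_0-m_0}}{\tau^c_{n_0-m_0}}|\le \tau^0_i-\tau^0_{n_0-m_0}\le w+5k$. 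Transferring this fragment to the first block via \cref{def:funcover} costs at most $2w$. Hence $|T\fragment{\tau^0_0}{\tau^c_0}|\le 3w+5k$ and $\selfed(T\fragment{\tau^0_0}{\tau^c_0})\le 6w+10k$, so condition \eqref{it:periodcover_alt:1} of \cref{def:periodcover_alt} forces $c\in C_S$, a contradiction.

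Your fallback can also be completed. For $c\notin C_S$ with $n_c>i$, \cref{prp:seq_match} gives the anchor $(\pi^c_{n_c-i-1},\tau^c_{n_c-1})\in\mX$. Together with the tail bound, this shows that a full period of $P$ has length at most $w+k$, which again forces all components into $C_S$. The case $n_c=i$ for every $c\notin C_S$ needs a separate (easy) argument directly from the capture inequality. As written, however, neither route is carried out, so your proof of part (ii) is incomplete at precisely the point where the paper does its work.
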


\begin{proof}
    First, for \eqref{it:ed_subhash:i} observe that we substitute with the same sentinel
    $\#_c$
    characters belonging to the $c$-th black connected component, which we know to be
    uniform, that is,
    containing the same character.

    We proceed with the proof of \eqref{it:ed_subhash:ii}.
    Clearly, when $C_S = \fragmentco{0}{\bc(\bG_S)}$ there is nothing to prove, so assume
    that this is not the case.
    Suppose $\mX : P \onto T\fragmentco{t}{t'}$ is an optimal alignment
    of cost $k' \leq k$. We begin by showing that $\ed(P^\#, T^\#\fragmentco{t}{t'}) \leq
    \ed(P, T\fragmentco{t}{t'})$.

    Since $S$ succinctly encloses $T$ and $\bc(\bG_{S}) > 0$, \cref{def:scomplete} implies
    that there exists $i\in \fragmentco{0}{n_0}$ such that $|\tau_i^0 - t - \pi_0^0| \le w
    + 3k$.
    In the next claim, we prove that \cref{prp:close} applies to $\mX$ because $i \in
    \fragment{0}{n_0-m_0}$.
    \begin{claim}
        We have that $i \le n_0 - m_0$ holds.
    \end{claim}
    \begin{claimproof}
        Recall that $\mXsuf \in S$ has cost at most $k$ and contains both $(\pi_{0}^0,
        \tau_{n_0-m_0}^{0})$ and $(m,n)$.
        Hence, $|(m - \pi_{0}^0) - (n - \tau_{n_0-m_0}^{0})| \leq k$.
        By the definition of $i$ and since $\cost(\mX) \leq k$, we have
        \[
            |(m - \pi_0^0) - (t' - \tau_i^0)| \leq |m - (t' - t)| + |\tau_i^0 - t -
            \pi_0^0| \leq w+3k+k = w+4k.
        \]
        Fix any $c \in \fragmentco{0}{\bc(\bG_S)} \setminus C_S$ and suppose for
        contradiction that $i > n_0 - m_0$. Then
        \begin{multline*}
            |T\fragment{\tau_{n_0-m_0}^{0}}{\tau_{n_0-m_0}^{c}}| =
            \tau_{n_0-m_0}^{c} - \tau_{n_0-m_0}^{0}
            \leq \tau_{i}^{0} - \tau_{n_0-m_0}^{0}
            = |\tau_{i}^{0} - \tau_{n_0-m_0}^{0}|\\
            \leq |(m - \pi_0^0) - (t' - \tau_i^0)| + |(n - \tau_{n_0-m_0}^{0}) - (m - \pi_{0}^0)| \leq w+5k.
        \end{multline*}
        Since \cref{def:funcover} implies $\ed(T\fragment{\tau_0^0}{\tau_0^c},
        T\fragment{\tau_{n_0-m_0}^{0}}{\tau_{n_0-m_0}^{c}})\le 2w$, we get
        \[
            |T\fragment{\tau_0^0}{\tau_0^c}| \leq
            |T\fragment{\tau_{n_0-m_0}^{0}}{\tau_{n_0-m_0}^{c}}| +
            \ed(T\fragment{\tau_0^0}{\tau_0^c},
            T\fragment{\tau_{n_0-m_0}^{0}}{\tau_{n_0-m_0}^{c}}) \leq 3w + 5k.
        \]
        Thus, $\selfed(T\fragment{\tau_0^0}{\tau_0^c}) \leq 6w + 10k$, contradicting $c
        \notin C_S$ in view of \cref{def:periodcover_alt}.
    \end{claimproof}

    By \cref{prp:close}\eqref{it:close:out}, for every $c \in \fragmentco{0}{\bc(\bG_S)}
    \setminus C_S$ we have that
    $\#_c$ appears in $P^{\#}$ at the positions $\Pi_c = \{\pi_j^c \mid j \in
    \fragmentco{0}{m_c}\}$
    and in $T^{\#}\fragmentco{t}{t'}$ at the positions $\Tau_c = \{\tau_{i+j}^c \mid j \in
    \fragmentco{0}{m_c}\}$.

    Observe that the difference in cost between $\mX : P \onto T\fragmentco{t}{t'}$ and
    $\mX : P^{\#} \onto T^{\#}\fragmentco{t}{t'}$ is determined by the characters that
    $\mX$ matches in $P$ and $T$,
    but due to the substitutions they do not match anymore in $P^\#$ and $T^\#$. That is,
    $\edal{\mX}(P^\#, T^\#\fragmentco{t}{t'}) - \ed(P, T\fragmentco{t}{t'}) = |E|$, where
    \[
        E \coloneqq \big\{(x, y) \mid \text{$\mX$ matches $P\position{x}$ with
        $T\position{y}$ but not $P^\#\position{x}$ with $T^\#\position{y}$} \big\},
    \]
    is such that $E \subseteq \big(\textstyle{\bigcup_c \Pi_c \times \bigcup\nolimits_c
    \Tau_c} \big) \cap \mX$.
    By \cref{prp:close}\eqref{it:close:in}, we have that $\mX$ aligns
    $P\position{\pi^c_j}$ to $T\position{\tau^c_{i+j}}$ for every $j \in
    \fragmentco{0}{m_c}$, meaning that
    \[
        E \subseteq \{(\pi^c_j, \tau^c_{i+j}) \mid j \in \fragmentco{0}{m_c}, c \in \fragmentco{0}{\bc(\bG_S)}\}.
    \]
    Moreover, as $\pi^c_j$ and $\tau^c_{i+j}$ are in the same black connected component,
    we have $P^{\#}\position{\pi_j^c} = T^{\#}\position{\tau_{i+j}^c}$.
    Consequently, $E = \emptyset$ and
    \[
        \ed(P^\#, T^\#\fragmentco{t}{t'}) \leq \edal{\mX}(P^\#, T^\#\fragmentco{t}{t'}) = \ed(P, T\fragmentco{t}{t'}).
    \]

    For the second part of \eqref{it:ed_subhash:ii}, it suffices
    to notice that all characters that are substituted in $P,T$
    (or equivalently all the hashes in $P^{\#},T^{\#}$)
    are always matched by such $\mX$.
    Since the characters that are matched by $\mX$
    are stored explicitly neither in $\sE_{P, T}(\mX)$ nor in $\sE_{P^\#, T^\#}(\mX)$,
    we conclude $\sE_{P, T}(\mX) = \sE_{P^\#, T^\#}(\mX)$.
\end{proof}

\subsection{Putting Everything Together}
\label{subsec:together}

Finally, we give the full proof for the construction of $S$ that allows us to prove \cref{thm:ccompl}.

\construction
\begin{proof}
    We begin by verifying that the notation used in \cref{alg:construction_S} is
    well-defined.
    To this end, we must show that at the start of every iteration of the loops in
    Line~\ref{loop:add_A} and Line~\ref{loop:add_B}, the set \( S \) succinctly encloses
    \( T \), since this condition is required for \( S \) to capture a \( k \)-error
    occurrence.
    Additionally, we need to ensure in Line~\ref{alg:construction_S:select_i} that \( m_0
    \geq 3 \), so that the quantifier \( j \in \fragmentco{0}{m_0 - 2} \) is well-defined.
    We prove both of these facts through the next two claims.

    \begin{claim}\label{cl:construction_S:1}
        Before each iteration of the loop in Line~\ref{loop:add_A}, and before the first
        iteration of the loop in Line~\ref{loop:add_B}, the set $S$ succinctly encloses
        $T$.
        Moreover, at the end of the $\ell$-th iteration of the loop at
        Line~\ref{loop:add_A} at least one of $\bc(\bG_S) = 0$, $C_S =
        \fragmentco{0}{\bc(\bG_S)}$, or $s_P \geq 2^{\ell}$ holds, where $s_P$ is the
        minimum number of characters of $P$ in a black component of $\bG_S$.
    \end{claim}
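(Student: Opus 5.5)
We argue by induction over the iterations of the loop in Line~\ref{loop:add_A}. Initially $S=\{\mXpref,\mXsuf\}$ with $a=b=0$. Both alignments have cost at most $k$, and $n\le 2m-2k$ holds by assumption, so $S$ encloses $T$. Condition~\eqref{def:enclose:d} is void because $b=0$. Hence $S$ succinctly encloses $T$.

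For the inductive step, consider the $\ell$-th iteration. Suppose it does not return. Then $S$ does not capture some $k$-error occurrence, and $C_S\neq\fragmentco{0}{\bc(\bG_S)}$. The early-exit condition also lets us assume $\bc(\bG_S)>0$, since for $\bc(\bG_S)=0$ an enclosing $S$ captures everything by \cref{def:scomplete}. We then add $\mY$ as $\mA_\ell$. The new set $S'=S\cup\{\mY\}$ still encloses $T$, because $\mY$ aligns all of $P$ with a fragment of $T$ at cost at most $k$ and $n\le 2m-2k$ is unchanged. Since $b$ is still $0$, condition~\eqref{def:enclose:d} remains void, so $S'$ succinctly encloses $T$. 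This gives the first part of the claim, both before each iteration of the first loop and at its exit, i.e.\ before the first iteration of the loop in Line~\ref{loop:add_B}.

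For the bound on $s_P$, we use that $\mY$ is uncaptured. Together with $\bc(\bG_S)>0$, this means $|\tau_i^0-t-\pi_0^0|>w+3k>w+2k$ for every $i\in\fragmentco{0}{n_0}$. Moreover, $\mY$ is optimal, and $C_S\ne\fragmentco{0}{\bc(\bG_S)}$. Therefore \cref{lem:periodhalves} applies: for no black component $c$ of $\bG_S$ does $\mY$ align a $P$-character of that component with a $T$-character of the same component. Now take a black component $K'$ of $\bG_{S'}$. Adding edges only merges components or turns them red, so $K'$ is a union of black components of $\bG_S$. Pick a character $P\position{\pi_j^c}$ in $K'$. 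The alignment $\mY$ covers all of $P$, so it either deletes this character or aligns it. A deletion would create a red edge, contradicting that $K'$ is black. So $\mY$ aligns it via a black edge to some $T$-character. That $T$-character lies in $K'$ and hence in a black component $c''$ of $\bG_S$, and \cref{lem:periodhalves} forces $c''\ne c$. Thus $K'$ contains at least two distinct black components of $\bG_S$. Each of these has at least $s_P$ characters of $P$ (where $s_P$ refers to $S$), and so $s_P(S')\ge 2s_P(S)$. We start from $s_P\ge1$, since every black component contains $P$-characters by \cref{lem:periodicity}. Induction then gives $s_P\ge 2^\ell$ after the $\ell$-th iteration, unless $\bc(\bG_S)=0$ or $C_S=\fragmentco{0}{\bc(\bG_S)}$ occurs at some point.

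The main obstacle is this merging argument. We must make sure that the black components of $\bG_{S'}$ are indeed unions of black components of $\bG_S$, which holds because red components stay red. We must also check that every black component of $\bG_{S'}$ actually receives a $\mY$-edge at one of its $P$-characters; this uses that $\mY$ spans all of $P$. Finally, we rely on the fact that the hypotheses of \cref{lem:periodhalves} are phrased relative to $S$ and not $S'$, which is exactly the setting here.
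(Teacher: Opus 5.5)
Your proof is correct and follows essentially the same route as the paper. The first part is immediate because $b=0$, and the doubling of $s_P$ comes from \cref{lem:periodhalves} applied to the uncaptured optimal alignment $\mY$: since $\mY$ spans all of $P$, every black component receives an edge that is either red or leads to a different black component. You phrase it from the side of a surviving black component of $\bG_{S\cup\{\mY\}}$ rather than from each old $P\position{\pi_0^c}$, which is only a slightly more explicit version of the paper's merging argument.
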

    \begin{claimproof}
        Regarding the first part of the statement of \cref{cl:construction_S:1},
        the conditions for \( S \) succinctly enclosing \( T \) can be easily verified, as
        \( b = 0 \).

        We proceed to prove the second part of the statement. In the $\ell$-th iteration
        of Line~\ref{loop:add_A}, we add to \(S\) an optimal alignment \( \mY \) for a \(
        k \)-error occurrence that \(S\) does not capture.
        Since \( S \) succinctly encloses \( T \) and $C_S \ne
        \fragmentco{0}{\bc(\bG_S)}$, we can apply \cref{lem:periodhalves} to conclude
        there is no $c \in \fragmentco{0}{\bc(\bG_{S})}$ such that $\mY$ aligns
        $P\position{\pi_j^{c}}$ with $T\position{\tau_i^c}$ for some $i\in
        \fragmentco{0}{n_c}$ and $j \in \fragmentco{0}{m_c}$.

        In \( \bG_{S \cup \{\mY\}} \), for every black component $c \in
        \fragmentco{0}{\bc(\bG_{S})}$, the newly added alignment \(\mY\) induces an edge
        incident to \(P\position{\pi_0^c}\).
        Since $\mY$ does not align $P\position{\pi_0^{c}}$ with $T\position{\tau_i^c}$ for
        any $i\in \fragmentco{0}{n_c}$, this edge is either red, making the whole
        component red, or black with its other endpoint outside the component, forcing a
        merge with another (red or black) component.
        Consequently, as long as any black component survives, the value \( s_P \) at
        least doubles.
    \end{claimproof}

    By \cref{cl:construction_S:1}, when we enter the loop at Line~\ref{loop:add_B} for the
    first time, we have \( m_0 \geq s_P \geq 2^2 = 4 \), ensuring that the selection of \(
    j \) at Line~\ref{alg:construction_S:select_i} is well-defined.
    Moreover, unless \( \bc(\bG_S) = 0 \) or $C_S = \fragmentco{0}{\bc(\bG_S)}$, the value
    of \( s_P \) can never decrease.
    Thus, as long as \( S \) continues to succinctly enclose \( T \) in subsequent
    iterations, the selection of \( j \) remains well-defined throughout.

    \begin{claim}\label{cl:construction_S:2}
        Before each iteration in Line~\ref{loop:add_B} the set $S$ succinctly encloses
        $T$.
        Moreover, at the end of the $\ell$-th iteration of the loop at
        Line~\ref{loop:add_B} at least one of $\bc(\bG_S) = 0$, $C_S =
        \fragmentco{0}{\bc(\bG_S)}$, or $s_P \geq 2^{2+\ell}$ holds.
    \end{claim}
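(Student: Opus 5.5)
We argue by induction on $\ell$, with \cref{cl:construction_S:1} as the base case. It gives that $S$ succinctly encloses $T$ when we enter the loop at Line~\ref{loop:add_B}, and that $m_0\ge s_P\ge 4$ unless $\bc(\bG_S)=0$ or $C_S=\fragmentco{0}{\bc(\bG_S)}$ (in which case the loop returns immediately). So assume that before the $\ell$-th iteration $S$ succinctly encloses $T$ with $b=\ell-1$ alignments $\mB_1,\dots,\mB_{\ell-1}$. Assume also that $s_P\ge 2^{1+\ell}$, and that neither return condition holds. Let $\mY$ be the uncaptured optimal alignment selected in Line~\ref{alg:selstwo}, and let $\mB_\ell\coloneqq\mY_j$ be the selected piece, with pre-image $P\fragmentco{\pi^0_j}{\pi^0_{j+2}}$.

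\textbf{Succinct enclosure of $S'\coloneqq S\cup\{\mB_\ell\}$.} Conditions \eqref{def:enclose:a}--\eqref{def:enclose:c} are inherited from $S$. Condition \eqref{def:enclose:d} for indices up to $\ell-1$ depends only on the $x_{\mB_i}$, $y_{\mB_i}$, and $g_i$. These are computed with respect to the final set, so they change when passing to $S'$. This is exactly the situation \cref{lem:add_iter} handles. We apply it to $S'$ with $i=\ell-1$, so that $S'^{\,i}=S$. Its hypothesis requires an integer $z$ such that $P_{|S}\fragmentco{z\cdot\bc(\bG_S)}{(z+2)\cdot\bc(\bG_S)}$ lies in the pre-image of $\mB_\ell$. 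Taking $z=j$ works, because by definition $P_{|S}\position{j\bc(\bG_S)}$ sits at $\pi^0_j$ and $P_{|S}\position{(j+2)\bc(\bG_S)}$ sits at $\pi^0_{j+2}$, so the $2\bc(\bG_S)$ characters in between lie in $P\fragmentco{\pi^0_j}{\pi^0_{j+2}}$. The remaining subtlety is that \cref{lem:add_iter} presupposes condition \eqref{def:enclose:d} holds for $S'$ up to index $i=\ell-1$. We check this by an inner induction over $i'<\ell$, applying \cref{lem:add_iter} to $S'$ at each index $i'-1$. The required window for $\mB_{i'}$ existed relative to $S^{i'-1}$ at the time $\mB_{i'}$ was added, and $S'^{\,i'-1}=S^{i'-1}$ does not depend on later alignments. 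The lemma's alternative outcome is $\bc(\bG_{S'})=0$, and in that case $S'$ is fine in any case, since the degenerate branch needs no periodicity. This re-verification of earlier indices, which is the ``dependence on the future'' issue, is the step I expect to be the main obstacle.

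\textbf{Growth of $s_P$.} We have $C_S\ne\fragmentco{0}{\bc(\bG_S)}$ and $\mY$ is optimal and uncaptured, so \cref{lem:periodhalves} applies: $\mY$, and hence $\mB_\ell\subseteq\mY$, aligns no $P\position{\pi^c_{j'}}$ with any $T\position{\tau^c_{i}}$ of the same component. The window $P\fragmentco{\pi^0_j}{\pi^0_{j+2}}$ contains $P\position{\pi^c_j}$ for every $c\in\fragmentco{0}{\bc(\bG_S)}$, since $j+1<m_0$. Therefore $\mB_\ell$ induces, for every black component of $\bG_S$, an edge that is either red or leaves the component. This is the same argument as in \cref{cl:construction_S:1}. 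As a result, every surviving black component of $\bG_{S'}$ is the union of at least two black components of $\bG_S$. Hence $s_P$ at least doubles to $2^{2+\ell}$, unless $\bc(\bG_{S'})=0$. The return condition $C_{S'}=\fragmentco{0}{\bc(\bG_{S'})}$ is the other allowed outcome of the claim.
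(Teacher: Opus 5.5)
Your proposal is correct and follows the paper's proof: you re-establish succinct enclosure of $S\cup\{\mB_\ell\}$ by applying \cref{lem:add_iter} to $\mB_1,\dots,\mB_\ell$ in turn (each window $z=j$ chosen relative to $S^{i-1}$, which later alignments do not affect), and you get the doubling of $s_P$ from \cref{lem:periodhalves}, which forces every black component to become red or merge with another one. One small correction: if $\bc(\bG_{S'})=0$, then $S'$ is degenerate with $b\ge 1$, so condition~\eqref{def:enclose:d} actually fails (since $g_0=m_{S'}=0$ and $x'_{\mB_1}-x_{\mB_1}=0$); the right remark, as in the paper, is that there is nothing to prove, because the algorithm returns at the next check.
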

    \begin{claimproof}
        If \( \bc(\bG_S) = 0 \), there is nothing to prove.
        We therefore assume \( \bc(\bG_S) \neq 0 \) and focus on showing that the set $S =
        \{\mXpref, \mXsuf, \mA_1, \mA_2, \mB_1, \ldots, \mB_{\ell}\}$
        succinctly encloses \( T \) for \( \ell \geq 1 \).
        The case \( \ell = 0 \) is excluded, as it has already been handled in
        \cref{cl:construction_S:1}.

        Clearly, conditions \eqref{def:enclose:a}, \eqref{def:enclose:b}, and
        \eqref{def:enclose:c} of \cref{def:enclose} are satisfied.
        It remains to verify that condition \eqref{def:enclose:d} of \cref{def:enclose}
        holds for all alignments \( \mB_{1}, \ldots, \mB_{\ell} \). This follows by
        applying \cref{lem:add_iter} iteratively to each alignment \( \mB_{1}, \ldots,
        \mB_{\ell} \).

        For the second part of the claim, let $S_{\ell-1}$ be the set at the beginning of the
        $\ell$-th iteration of the loop at Line~\ref{loop:add_B}. Since the iteration is
        executed, we have $\bc(\bG_{S_{\ell-1}}) \neq 0$ and
        $C_{S_{\ell-1}} \ne \fragmentco{0}{\bc(\bG_{S_{\ell-1}})}$. Let $\mY$ be the uncaptured optimal
        alignment selected in Line~\ref{alg:selstwo}, and let $\mB_\ell=\mY_j$ be the
        partial alignment added to $S_{\ell-1}$ in Line~\ref{alg:construction_S:select_i}.
        By construction, the pre-image of $\mB_\ell$ contains all characters of
        $P_{|S_{\ell-1}}\fragmentco{j \cdot \bc(\bG_{S_{\ell-1}})}{(j+2)\cdot \bc(\bG_{S_{\ell-1}})}$ and hence, in
        particular, at least one character from every black component of $\bG_{S_{\ell-1}}$.

        Moreover, \cref{lem:periodhalves} applied to $\mY$ shows that no edge of $\mY$,
        and therefore no edge of its restriction $\mB_\ell$, joins a character to a text
        character from the same black component of $\bG_{S_{\ell-1}}$. Consequently, every black
        component of $\bG_{S_{\ell-1}}$ either receives a red edge and becomes red, or is joined
        by a black edge to another black component of $\bG_{S_{\ell-1}}$. Hence every surviving
        black component after the $\ell$-th iteration contains at least two black
        components from the beginning of that iteration, so $s_P$ at least doubles.
        Using the induction hypothesis from the end of the $(\ell-1)$-st iteration, we
        conclude that at the end of the $\ell$-th iteration of the loop at
        Line~\ref{loop:add_B}, at least one of $\bc(\bG_S) = 0$,
        $C_S = \fragmentco{0}{\bc(\bG_S)}$, or $m_0 \geq s_P \geq 2^{2 + \ell}$ holds.
    \end{claimproof}
    Since $s_P$ cannot grow indefinitely, \cref{cl:construction_S:2} ensures that the loop
    will eventually terminate (specifically, after $\Oh(\log m)$ iterations). When it
    does, all $k$-error occurrences will be captured.

    Next, we argue why the bound on $\cost(S)$ holds.
    Note that, for each $r \in \{0, 1\}$, the alignments $\mY_j$ satisfying $j \equiv_2 r$
    are disjoint and are all subsets of $\mY$. We obtain that the sum of the costs of the
    alignments $\mY_j$ for all $j \in \fragmentco{0}{m_0 - 2}$ is at most $2k$.
    Consequently, in the $\ell$-th iteration the cost of the alignment $\mY_j$ selected at
    Line~\ref{alg:construction_S:select_i} is at most
    $\cost(\mY_j) \leq 2k / (m_0-2) \leq 2k / (2^{\ell+1} - 2) \leq k / 2^{\ell-1}$. Thus,
    $\cost(S) \leq 4k + k \cdot \sum_{\ell=1}^{\Oh(\log m)} \frac{1}{2^{\ell-1}} \le
    \Oh(k)$.

    Lastly, we give the bound on the encoding size.
    By \cref{prp:enc_gs}, we can encode this information for all $\mX \in S$ such that
    $\cost(\mX) > 0$, which we collect in the set $S^+$, using
    \begin{align*}
        &\sum_{\mX \in S^+} \cost(\mX) \cdot \log \left( \frac{m|\Sigma|}{ \cost (\mX)}
        \right)
        \leq \cost(S) \cdot \log \left( \frac{m|\Sigma|}{k} \right) + \sum_{\mX \in S^+}
        \cost(\mX) \cdot \log \left( \frac{k}{ \cost (\mX)} \right) \\
        &\leq k \log \left(\frac{m|\Sigma|}{k}\right) + 4k + k \cdot
        \sum_{\ell=1}^{\Oh(\log m)} \frac{\ell-1}{2^{\ell-1}}
        \leq \Oh\left(k \log \left(\frac{m|\Sigma|}{k}\right)\right)
        \text{ bits,}
    \end{align*}
    where we use that $x \log (k/x)$ is monotonically increasing for $0 < x \le k/e$. On
    the other side, we bound the number of $\mX$ of cost zero by $|S|$, for which
    $\Oh(\log m)$ bits suffice.

    Lastly, we prove that the returned set satisfies $|S| =
    \Oh(|\floor{\OccE_k(P,T)/k}|)$.
    If no iteration at Line~\ref{loop:add_B} is executed, then there is nothing to prove,
    as $|S| \leq 4$.
    Otherwise, let $\mY_{\ell} : P \onto T\fragmentco{t_{\ell}}{t_{\ell}'}$
    be the selected alignment at Line~\ref{alg:selstwo} in the $\ell$-th iteration of
    Line~\ref{loop:add_B}.
    It suffices to prove that $|t_{\ell} - t_{\ell'}| > k$ for all $\ell' < \ell$, as each
    time we add a new alignment the starting point corresponds to a different element in
    $\floor{\OccE_k(P,T)/k}$.

    Suppose $S_{\ell}$ is $S$ at the point in time just after the $\ell$-th iteration of
    Line~\ref{loop:add_B}, and pick an arbitrary black component $c \in
    \fragmentco{0}{\bc(\bG_{S_{\ell-1}})}$ (note that since we implicitly assume that the
    $\ell$-th iteration is executed, we have $\bc(\bG_{S_{\ell-1}}) > 0$).
    Let $\ell' < \ell$ be arbitrary.
    The $c$-th black component $C_c$ in $\bG_{S_{\ell-1}}$ must be the union of black
    components of $\bG_{S_{\ell'}}$.
    Choose an arbitrary black component $D$ of $\bG_{S_{\ell'}}$ contained in $C_c$.
    Just after adding $\mY_{\ell'}$, the component $D$ is either merged with another black
    component or becomes red.
    The latter is impossible, since then all vertices of $D$ would belong to a red
    component in $\bG_{S_{\ell-1}}$, contradicting $D \subseteq C_c$.
    Hence, $D$ is merged via a black edge induced by $\mY_{\ell'}$ with another black
    component $D'$ of $\bG_{S_{\ell'}}$.
    Moreover, $D'$ must also be contained in $C_c$: otherwise, the black edge of
    $\mY_{\ell'}$ joining $D$ and $D'$ would force $D$ and $D'$ to belong to the same
    black component in $\bG_{S_{\ell-1}}$.
    Consequently, there must be $T\position{\tau_i^c}$ and $P\position{\pi_j^c}$ in $C_c$
    such that $\mY_{\ell'}$ aligns $T\position{\tau_i^c}$ with $P\position{\pi_j^c}$ (here
    $i,j$ are w.r.t. $S_{\ell-1}$).
    Using the contrapositive of \cref{lem:periodhalves} on $\mY_{\ell'}$, this means that
    there is $\hi \in \fragmentco{0}{n_0}$ such that $|\tau_{\hi}^{0} - t_{\ell'} -
    \pi_0^{0}| \leq w+2k$ (again, all $\hi,w,i,j,n_0,m_0$ are w.r.t. $S_{\ell-1}$).
    Therefore, every $k$-error occurrence of $P$ in $T$ with starting point within $k$ of
    $t_{\ell'}$ is captured by $S_{\ell-1}$, because for such a starting point $t$ we have
    $|\tau_{\hi}^{0} - t - \pi_0^{0}| \leq |\tau_{\hi}^{0} - t_{\ell'} - \pi_0^{0}| + |t_{\ell'}-t| \leq w+3k$.
    In particular, the occurrence selected at the $\ell$-th iteration cannot start within
    distance $k$ of $t_{\ell'}$, so $|t_{\ell'} - t_{\ell}| > k$.
\end{proof}

The next construction considers the case where $P$ is close to a highly periodic string.

\constructSper
\begin{proof}
    Let $q_P \in \fragment{m-k}{m+k}$
    be such that $\ed(P, Q^{\infty}\fragmentco{0}{q_P}) \leq 2k$. First, we claim the
    following.

    \begin{claim}
        There is $q_T$ with
        $\ed(T\fragmentco{0}{n}, Q^{\infty}\fragmentco{0}{q_T}) \leq 12k$, $q_T
        \equiv_{|Q|} q_P$ and $q_T \geq q_P$.
    \end{claim}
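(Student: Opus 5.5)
Using the triangle inequality, I will build an alignment of $T$ onto a prefix of $Q^\infty$ by stitching together the two $k$-error occurrences of $P$ at the prefix and suffix of $T$ with the alignment of $P$ onto $Q^\infty\fragmentco{0}{q_P}$.

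\emph{Prefix part.} Let $\mXpref : P \onto T\fragmentco{0}{e}$ have cost at most $k$, and let $\mA : P \onto Q^\infty\fragmentco{0}{q_P}$ have cost at most $2k$. Composing $\mXpref^{-1}$ with $\mA$ aligns $T\fragmentco{0}{e}$ onto $Q^\infty\fragmentco{0}{q_P}$ at cost at most $3k$.

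\emph{Suffix part.} Let $\mXsuf : P \onto T\fragmentco{s}{n}$ have cost at most $k$. The same composition aligns $T\fragmentco{s}{n}$ onto $Q^\infty\fragmentco{0}{q_P}$ at cost at most $3k$. Shifting the target by a multiple of $|Q|$, this becomes an alignment onto $Q^\infty\fragmentco{r|Q|}{r|Q|+q_P}$ for any $r \ge 0$.

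\emph{Gluing.} Since $n \le \tfrac32 m - 28k$, $e \ge m-k$ and $s \le n-m+k$, the overlap $T\fragmentco{s}{e}$ has length at least $m/2+\Omega(k)$. Because $|Q|\le m/128k$, this overlap contains many full copies of $Q$ in both images.
\begin{enumerate}
\item Pick a cut position $x \in \fragmentco{s}{e}$.
\item Let $\mathcal{F}$ be the prefix-side alignment restricted to $T\fragmentco{0}{x}$, landing on $Q^\infty\fragmentco{0}{u}$.
\item Let $\mathcal{G}$ be the suffix-side alignment restricted to $T\fragmentco{x}{n}$, starting at $Q^\infty\position{v}$, with $r$ chosen so that $v$ is as close as possible to $u$ modulo $|Q|$.
\item Concatenate $\mathcal{F}$ with $\mathcal{G}$ shifted by $r|Q|$. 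The junction costs $|u-v|$ extra, through insertions or deletions of characters of $Q^\infty$.
\end{enumerate}
The resulting target is $Q^\infty\fragmentco{0}{q_T}$ with $q_T = r|Q| + q_P$. This gives $q_T \equiv_{|Q|} q_P$ automatically, and $q_T\ge q_P$ holds as long as $r\ge 0$.

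\emph{Main obstacle.} The hard step is bounding the junction mismatch $|u-(v+r|Q|)|$ by $O(k)$, so that the total cost is at most $12k$. I would choose $x$ where both the prefix-side and suffix-side alignments are free of edits on a whole copy of $Q$. This is possible because each has at most $3k$ edits, while the overlap holds more than $6k+2$ copies of $Q$ (here the $128k$ slack is used). On such a window, $T\fragmentco{x}{x+|Q|}$ is matched exactly to two rotations of $Q$. Since $Q$ is primitive, these rotations coincide, so $u \equiv v \pmod{|Q|}$ and the junction costs nothing.

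\emph{Cost accounting.} The total is then at most $3k+3k$ plus boundary slack, comfortably within $12k$. Finally, $r \ge 0$ follows from drift bounds (\cref{obs:drift}): the suffix copy starts at $s\ge 0$, so its phase lies to the right of that of the prefix copy.

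The delicate points are the bookkeeping of where the cut lands in $Q^\infty$ and the primitivity argument that synchronizes the phases. I expect the rest to be routine.
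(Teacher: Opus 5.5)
Your construction of the cost-$6k$ alignment takes a different route from the paper, and that part is sound. The paper obtains an alignment of $T$ onto some $Q^\infty\fragmentco{0}{q_T}$ with $q_T\equiv_{|Q|} q_P$ and cost at most $6k$ by invoking \cite[Lemma 5.4]{CKW20} as a black box. You prove this gluing directly and without external input. You find a window of length $|Q|$ in the overlap on which both product alignments are edit-free; this works because there are at least $64k$ disjoint windows and at most $6k$ edits. You then use primitivity of $Q$ to synchronize the two phases.

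The gap is the last step, the claim that $r\ge 0$. The drift bounds give only $|u-x|\le 3k$ and $|v-(x-s)|\le 3k$, hence $r|Q|=u-v\ge s-6k$. Nothing forces $s$ to be large: $s\in\fragment{n-m-k}{n-m+k}$, so $s=0$ is possible whenever $n\le m+k$.

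Here is a concrete counterexample. Take $Q=\mathtt{ab}$, $T=(\mathtt{ab})^N$, $P=(\mathtt{ab})^{N+1}$ with $N$ large and $k\ge 2$, and let $\mA$ be the identity, so $q_P=m=2N+2$. Let $\mXpref$ delete the last $\mathtt{ab}$ of $P$ and $\mXsuf$ delete the first one; both align $P$ onto all of $T$, so $s=0$. Then every window has $u=x$ and $v=x+2$, so $r=-1$, and your glued alignment ends at $q_T=2N<q_P$. So $q_T\ge q_P$ is not established. The $12k$ in the statement is therefore not slack: it is the price of exactly this case.

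The repair is short. If $r<0$, then $0<-r|Q|=v-u\le 6k-s\le 6k$. Appending $Q^\infty\fragmentco{q_P+r|Q|}{q_P}$ by insertions gives $\ed(T,Q^\infty\fragmentco{0}{q_P})\le 6k+6k=12k$, so you can take $q_T\coloneqq q_P$. The paper handles the same case analogously. It shows $n-e\le 9k$ (its $r$ is your $e$) and deletes $T\fragmentco{e}{n}$ on top of the cost-$3k$ alignment of $T\fragmentco{0}{e}$ onto $Q^\infty\fragmentco{0}{q_P}$.
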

    \begin{claimproof}
        Let $\ell \in \fragment{n-m-k}{n-m+k}$ be such that $\ed(P,T\fragmentco{\ell}{n})
        \leq k$. By the triangle inequality, we have
        $\ed(T\fragmentco{\ell}{n},Q^{\infty}\fragmentco{0}{q_P}) \leq 3k$.
        Similarly, there is $r \in \fragment{m-k}{m+k}$ such that
        $\ed(P,T\fragmentco{0}{r}) \leq k$. For the same reason as before,
        $\ed(T\fragmentco{0}{r},Q^{\infty}\fragmentco{0}{q_P}) \leq 3k$.

        We apply \cite[Lemma 5.4]{CKW20}%
        \footnote{\cite[Lemma 5.4]{CKW20} states that if
            $\ed(T\fragmentco{0}{q}, Q^{\infty}\fragmentco{x}{y}) \leq k$ and
            $\ed(T\fragmentco{p}{n}, Q^{\infty}\fragmentco{x'}{y'}) \leq k$ hold for some integers
            $p \leq q$, $x \leq y$, and $x' \leq y'$, then, provided that $|Q| = 1$ or $q - p \geq
            (2k+1)|Q|$, we have $\ed(T,Q^{\infty }\fragmentco{x''}{y}) = \ed(T,Q^{\infty
            }\fragmentco{x}{y''}) \leq 2k$ for some $x'' \equiv_{|Q|} x'$ and $y'' \equiv_{|Q|} y'$.}
        on $T$ with threshold $3k$, using the bounds $\ed(T\fragmentco{0}{r},\allowbreak
        Q^{\infty}\fragmentco{0}{q_P}) \leq 3k$ and
        $\ed(T\fragmentco{\ell}{n},Q^{\infty}\fragmentco{0}{q_P}) \leq 3k$.
        Moreover, from $n - \ell \geq m - k$, $r \geq m-k$, and $n \leq 3/2 \cdot m - 2k$,
        we obtain
        $r - \ell \geq 2(m-k)-n \geq m/2 \geq 8k|Q| \geq (6k+1)|Q|$, and hence $r \geq
        \ell$.
        Therefore, the lemma yields $\ed(T,Q^{\infty}\fragmentco{0}{q_T}) \leq 6k$ for
        some $q_T \equiv_{|Q|} q_P$.

        Now, if $q_T \geq q_P$ we are done.
        Otherwise, if $q_T < q_P$, we observe that $|n-q_T| \leq 6k$ and $|r - q_P| \leq
        3k$ because $\ed(T,Q^{\infty}\fragmentco{0}{q_T}) \leq 6k$ and
        $\ed(T\fragmentco{0}{r},Q^{\infty}\fragmentco{0}{q_P}) \leq 3k$.
        Consequently, we have $n-r \leq n-r+q_P-q_T \leq |n-r+q_P-q_T| \leq |n-q_T| + |r -
        q_P| \leq 9k$.
        This means that $\ed(T, Q^{\infty}\fragmentco{0}{q_P}) \leq
        \ed(T\fragmentco{0}{r}, Q^{\infty}\fragmentco{0}{q_P}) + n-r \leq 12k$, and the
        claim follows by actually setting $q_T$ to be $q_P$.
    \end{claimproof}

    Let $\mA : P \onto Q^{\infty}\fragmentco{0}{q_P}$ and  $\mZ :
    Q^{\infty}\fragmentco{0}{q_T} \onto T\fragmentco{0}{n}$ be optimal alignments.
    We define three (not necessarily optimal) alignments:
    \begin{enumerate}
        \item $\mXpref$ obtained by composing $\mA$ and $\mZ$ restricted to
            $Q^{\infty}\fragmentco{0}{q_P} \onto \mZ(Q^{\infty}\fragmentco{0}{q_P})$.
        \item $\mXsuf$ obtained by composing $\mA$
            and $\mZ$ restricted to $Q^{\infty}\fragmentco{q_T-q_P}{q_T} \onto
            \mZ(Q^{\infty}\fragmentco{q_T-q_P}{q_T})$.
            Note that this is a valid composition because $q_T \equiv_{|Q|} q_P$ implies
            $Q^{\infty}\fragmentco{q_T-q_P}{q_T} = Q^{\infty}\fragmentco{0}{q_P}$.
        \item If $q_T = q_P$, then we set $\mXmid = \mXpref$ (and notice that $q_P = q_T$
            also implies $\mXsuf = \mXpref$). Otherwise, define $\mXmid$ as the alignment
            obtained by composing $\mA$
            and $\mZ$ restricted to $Q^{\infty}\fragmentco{|Q|}{|Q|+q_P} \onto
            \mZ(Q^{\infty}\fragmentco{|Q|}{|Q|+q_P})$.
            Notice that this is again a valid composition because
            $Q^{\infty}\fragmentco{|Q|}{|Q|+q_P} = Q^{\infty}\fragmentco{0}{q_P}$.
    \end{enumerate}

    We set $S = \{\mXpref, \mXsuf,\mXmid\}$ and we note that $S$ contains alignments of
    cost at most $12k+2k \leq 14k$.
    Clearly, $S$ succinctly encloses $T$ because $n \leq 3/2 \cdot m - 28k \leq 2m - 2
    \cdot 14k$ and $\mXpref,\mXsuf$ are alignments that align $P$ with a prefix and a
    suffix of $T$, respectively.
    (So $\mA_1 = \mXmid$ and $S$ contains no other alignments.)
    \Cref{prp:enc_gs} and $|S| = 3$ imply that we can encode the desired information of
    the alignments in $S$ using $\Oh(k\log(m|\Sigma|/k))$ bits. It remains to show that
    $S$ captures all $k$-error occurrences.

    Now, if $\bc(\bG_S) = 0$, then $S$ trivially captures all $k$-error occurrences of $P$
    in $T$ and we are done.
    Henceforth, we assume $\bc(\bG_S) > 0$.

    Next define functions $f_{\mA} : \fragmentco{0}{q_P} \rightarrow \fragmentco{0}{|P|}
    \cup \{\bot\}, f_{\mZ} :  \fragmentco{0}{q_T} \rightarrow \fragmentco{0}{|T|} \cup
    \{\bot\}$ as follows.
    For $z \in \fragmentco{0}{q_P}$ set $f_{\mA}(z) = x$, if $\mA$ matches
    $Q^{\infty}\position{z}$ to some $P\position{x}$, otherwise set $f_{\mA}(z) = \bot$.
    Similarly, for $z \in \fragmentco{0}{q_T}$ set $f_{\mZ}(z) = y$, if $\mZ$ matches
    $Q^{\infty}\position{z}$ to some $T\position{y}$, otherwise set $f_{\mZ}(z) = \bot$.

    Next, construct strings $P_Q$ and $T_Q$ of length $q_P$ and $q_T$ as follows. For $z
    \in \fragmentco{0}{q_P}$, if $f_{\mA}(z) \neq \bot$, we set $P_Q\position{z} = \#^c$
    if $P\position{f_{\mA}(z)}$ is contained in the $c$th black component. Otherwise, we
    set $P_Q\position{z} = \$$.
    Similarly, for $z \in \fragmentco{0}{q_T}$, if $f_{\mZ}(z) \neq \bot$, we set
    $T_Q\position{z} = \#^c$ if $T\position{f_{\mZ}(z)}$ is contained in the $c$th black
    component. Otherwise, we set $T_Q\position{z} = \$$.

    We begin by noting that for each black edge $(p,t)$ in $\bG_S$, there must exist $z$
    and $z'$ with $z \equiv_{|Q|} z'$ such that $\mA$ matches $P\position{p}$ with
    $Q^{\infty}\position{z}$, and $\mZ$ matches $Q^{\infty}\position{z'}$ with
    $T\position{t}$. This means that $f_{\mA}(z) = p$ and $f_{\mZ}(z') = t$. Furthermore,
    the inverses $f_{\mA}^{-1}(p) = z$ and $f_{\mZ}^{-1}(t) = z'$ are well-defined for
    such $p$ and $t$.

    Thus, for each character $P\position{p}$ in the $c$-th black component, we have
    $P_Q\position{f^{-1}_{\mA}(p)} = \#^c$, and similarly, for each character
    $T\position{t}$ in the $c$-th black component, we have $T_Q\position{f^{-1}_{\mZ}(t)}
    = \#^c$.
    We observe that this also implies that the character $\#^c$ appears in $P_Q$ and $T_Q$
    only at positions belonging to a single residue class modulo $|Q|$.
    \newcommand{\dol}{\$}
    \newcommand{\hash}{\#}

    \begin{claim}\label{clm:Sper:match}
        Set $\Delta_{\mXpref} = 0$ and $\Delta_{\mXsuf} = q_T - q_P$. Additionally, set
        $\Delta_{\mXmid} = |Q|$ if $q_T > q_P$, and $\Delta_{\mXmid} = 0$ otherwise.
        Then, for every $\mX \in S$ and all $z \in \fragmentco{0}{q_P}$, we have
        $P_Q\position{z} = T_Q\position{z+\Delta_{\mX}}$.
    \end{claim}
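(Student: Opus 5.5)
The plan is to fix $\mX\in S$ and $z\in\fragmentco{0}{q_P}$, and compare the two positions $z$ and $z':=z+\Delta_{\mX}$ of $Q^\infty$ through the composed alignment. By construction, $\mX$ is the product of $\mA$ with the restriction of $\mZ$ to $Q^{\infty}\fragmentco{\Delta_{\mX}}{\Delta_{\mX}+q_P}$. Here we use the identity $Q^{\infty}\fragmentco{\Delta_{\mX}}{\Delta_{\mX}+q_P}=Q^{\infty}\fragmentco{0}{q_P}$, which holds because $\Delta_{\mX}\equiv_{|Q|}0$ in all three cases. In particular, the product pairs position $z$ of the $\mA$-side with position $z'$ of the $\mZ$-side.

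We split into cases according to whether $f_{\mA}(z)$ and $f_{\mZ}(z')$ are $\bot$.

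\textbf{Both defined.} Let $p=f_{\mA}(z)$ and $t=f_{\mZ}(z')$. Then $\mA$ matches $P\position{p}$ with $Q^\infty\position{z}$, and $\mZ$ matches $Q^\infty\position{z'}$ with $T\position{t}$. The two $Q^\infty$ characters are equal, so the product $\mX$ matches $P\position{p}$ with $T\position{t}$. This induces a black edge $\{P\position{p},T\position{t}\}$ in $\bG_S$, so both endpoints lie in the same component. If that component is black with index $c$, both strings carry $\#^c$; if it is red, both carry $\$$.

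\textbf{Exactly one defined.} Say $f_{\mA}(z)=p\ne\bot$ but $f_{\mZ}(z')=\bot$. Then $\mZ$ either deletes $Q^\infty\position{z'}$ or substitutes it. In the product, $P\position{p}$ is therefore either deleted or aligned with a mismatching character (a substitution). Either way $\mX$ creates a red edge at $P\position{p}$, so $P\position{p}$ is in a red component and $P_Q\position{z}=\$=T_Q\position{z'}$. The symmetric case, where $f_{\mZ}(z')=t\ne\bot$ but $f_{\mA}(z)=\bot$, is handled the same way: $T\position{t}$ is inserted or substituted by $\mX$, so it lies in a red component. \textbf{Neither defined.} Both sides are $\$$ directly.

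The main obstacle is that a product alignment is only guaranteed to be some product, so we must justify that $\mX$ really pairs positions in the described way. Concretely, when $\mA$ matches $P\position{p}$ to $Q^\infty\position{z}$ and $\mZ$ matches $Q^\infty\position{z'}$ to $T\position{t}$, the product must contain both $(p,t)$ and $(p+1,t+1)$. I would settle this by fixing the canonical composition from the construction: walk through the intermediate string character by character, composing the step of $\mA$ with the corresponding step of $\mZ$. Under this composition, a match followed by a match is a match; a match or substitution combined with a deletion or substitution yields either a deletion of $P\position{p}$ or a substitution at $P\position{p}$; and the analogous statement holds for insertions on the $T$ side. In every such case, a non-match step of either factor produces a red edge at the relevant surviving endpoint, which is exactly what the case analysis needs.

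Finally, the degenerate case $q_T=q_P$ requires no separate treatment. There all three alignments coincide, every shift is $\Delta_{\mX}=0$, and the argument above applies verbatim.
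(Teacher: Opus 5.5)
Your proof is correct and is essentially the paper's argument. Both rely on the composition pairing position $z$ on the $\mA$-side with $z+\Delta_{\mX}$ on the $\mZ$-side, so that two matches compose to a black edge of $\bG_S$, while a match composed with a non-match leaves a red edge at the matched endpoint. Your case split on whether $f_{\mA}(z)$ and $f_{\mZ}(z+\Delta_{\mX})$ equal $\bot$ is only a reorganization of the paper's route: start from a $\#^c$ label, note that the edge $\mX$ induces there cannot be red, and conclude that $\mZ$ matches at $z+\Delta_{\mX}$. You are also more explicit than the paper about fixing the canonical composition, which the paper uses implicitly.

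There is one small slip in your justification paragraph. A substitution composed with a substitution can yield a match, so your general composition rule is false as written. This does no harm, because your case analysis only ever composes a match with a non-match, where the rule does hold.
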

    \begin{claimproof}
        Consider any $\mX \in S$ and $z \in \fragmentco{0}{q_P}$.
        We must prove that $P_Q\position{z} = T_Q\position{z+\Delta_{\mX}}$.
        This is immediate if both characters are equal to $\dol$, so we henceforth assume
        otherwise.

        First, suppose that $P_Q\position{z}=\hash^c$ for some black component $c\in
        \fragmentco{0}{\bc(\bG_S)}$.
        Then, for $p\coloneqq f_\mA(z)$, the character $P\position{p}$ belongs to the
        $c$th black component and $\mA$ matches $P\position{p}$ with
        $Q^{\infty}\position{z}$.
        Since $\mX$ is an alignment of the whole pattern $P$, it induces an edge of $\bG_S$ incident to $P\position{p}$.
        As $P\position{p}$ belongs to a black component, this edge cannot be red, and
        hence $\mX$ matches $P\position{p}$ with some character $T\position{t}$.
        Because $\mX$ is obtained by composing $\mA$ and $\mZ$ on the shifted copy of
        $Q^{\infty}$, the alignment $\mZ$ matches $Q^{\infty}\position{z+\Delta_\mX}$ with
        $T\position{t}$.
        Hence, $T_Q\position{z+\Delta_\mX}=\hash^c$.

        The converse implication is symmetric; we provide it for completeness.
        Suppose that $T_Q\position{z+\Delta_\mX}=\hash^c$ for some black component $c\in
        \fragmentco{0}{\bc(\bG_S)}$.
        Then, for $t\coloneqq f_\mZ(z+\Delta_\mX)$, the character $T\position{t}$ belongs
        to the $c$th black component and $\mZ$ matches $Q^{\infty}\position{z+\Delta_\mX}$
        with $T\position{t}$.
        Since $\mX$ is the composition of $\mA$ and $\mZ$ on
        $Q^{\infty}\fragmentco{\Delta_\mX}{q_P+\Delta_\mX}$, the character $T\position{t}$
        lies in the image of $\mX$, and therefore $\mX$ induces an edge of $\bG_S$
        incident to $T\position{t}$.
        As $T\position{t}$ belongs to a black component, this edge cannot be red, and
        hence $\mX$ matches $T\position{t}$ with some character $P\position{p}$.
        By the definition of the composition, $\mA$ matches $P\position{p}$ with
        $Q^{\infty}\position{z}$, and therefore $P_Q\position{z}=\hash^c$.
    \end{claimproof}

    \begin{claim}\label{clm:Sper:match:2}
        There is $q \in \fragmentco{0}{q_P}$ such that $f^{-1}_{\mA}(\pi_0^0) = q$ and
        $f^{-1}_{\mZ}(\tau_0^0) = q$.
        Moreover, if $q_T > q_P$, then this $q$ satisfies $q \in \fragmentco{0}{|Q|}$ and
        also $f^{-1}_{\mA}(\pi_j^0) = j|Q| + q$ and $f^{-1}_{\mZ}(\tau_i^0) = i|Q| + q$
        for all $j \in \fragmentco{0}{m_0}$ and $i \in \fragmentco{0}{n_0}$.
    \end{claim}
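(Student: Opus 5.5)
The plan is to work entirely with the auxiliary strings $P_Q$ and $T_Q$. We combine \cref{clm:Sper:match} with two facts. The first is monotonicity: $f_\mA$ and $f_\mZ$ are increasing on their non-$\bot$ values, because $\mA$ and $\mZ$ are alignments. The second is surjectivity onto the black characters: by the observation preceding \cref{clm:Sper:match}, every character of $P_{|S}$ equals $P\position{f_\mA(z)}$ for a unique $z$, and every character of $T_{|S}$ equals $T\position{f_\mZ(z')}$ for a unique $z'$. Together these give an order-preserving bijection between the $\#$-positions of $P_Q$ (resp.\ $T_Q$) and the positions of $P_{|S}$ (resp.\ $T_{|S}$). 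In particular, the $\ell$-th occurrence of a $\#$ symbol in $P_Q$ corresponds to $P_{|S}\position{\ell}$, and the same holds for $T_Q$ and $T_{|S}$.

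For the first part, set $q \coloneqq f_\mA^{-1}(\pi_0^0)$. This is the position of the first $\#$ in $P_Q$, and it carries the symbol $\#^0$. \Cref{clm:Sper:match} for $\mXpref$ (with $\Delta_{\mXpref}=0$) gives $P_Q\position{z}=T_Q\position{z}$ for all $z<q_P$. Hence $T_Q$ contains no $\#$ before position $q$, and $T_Q\position{q}=\#^0$. So position $q$ holds the first $\#$ of $T_Q$, which by the bijection corresponds to $T_{|S}\position{0}$. Therefore $f_\mZ(q)=\tau(0)=\tau_0^0$.

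For the second part, assume $q_T>q_P$. I will first show that $T_Q$ (and hence its prefix $P_Q$) has period $|Q|$.
\begin{itemize}
\item Combining \cref{clm:Sper:match} for $\mXpref$ and for $\mXmid$ (with $\Delta_{\mXmid}=|Q|$) gives $T_Q\position{z}=P_Q\position{z}=T_Q\position{z+|Q|}$ for $z<q_P$. So the prefix $T_Q\fragmentco{0}{q_P+|Q|}$ has period $|Q|$.
\item \Cref{clm:Sper:match} for $\mXsuf$ shows that $T_Q\fragmentco{q_T-q_P}{q_T}=P_Q$, and $P_Q$ is $|Q|$-periodic. The shift $q_T-q_P$ is a multiple of $|Q|$.
\item The two $|Q|$-periodic pieces overlap in at least $|Q|$ positions with matching phase. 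This needs $q_T-q_P$ small compared with $q_P$, which follows from $n\le 3/2\cdot m-28k$ together with $q_T\le n+12k$ and $q_P\ge m-k$. Hence the whole of $T_Q$ is $|Q|$-periodic.
\end{itemize}

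Next, $q<|Q|$. Otherwise $T_Q\position{q-|Q|}=\#^0$ by periodicity, and then $P_Q\position{q-|Q|}=\#^0$ as well. That is an earlier $\#$ in $P_Q$, contradicting the choice of $q$.

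Finally, periodicity puts $\#^0$ at every position $\equiv_{|Q|} q$ in both strings. Recall that each $\#^c$ occupies a single residue class modulo $|Q|$. So the $\#^0$-positions in $P_Q$ are exactly $\{q+j|Q|\}\cap\fragmentco{0}{q_P}$, and in $T_Q$ exactly $\{q+i|Q|\}\cap\fragmentco{0}{q_T}$. By the order-preserving bijection, the $j$-th $\#^0$ in $P_Q$ is the $j$-th character of component $0$ in $P$, namely $\pi_j^0$. This gives $f_\mA^{-1}(\pi_j^0)=j|Q|+q$, and symmetrically $f_\mZ^{-1}(\tau_i^0)=i|Q|+q$.

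I expect the main obstacle to be the length bookkeeping in the periodicity step, i.e.\ checking that the $\mXsuf$-copy of $P_Q$ overlaps the periodic prefix by at least $|Q|$ positions. A secondary care point is the claim that every black character of $P$ and $T$ is matched by $\mA$ or $\mZ$ respectively. That was stated for endpoints of black edges, and every black vertex is such an endpoint because $\mXpref$ and $\mXsuf$ cover all of $P$ and $T$.
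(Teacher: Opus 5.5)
Your proof is correct. It shares the paper's skeleton: period $|Q|$ for $P_Q$ and $T_Q$, the single residue class of $\#^0$, and monotonicity of $f_\mA$ and $f_\mZ$. Two steps, however, take a different route.

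For the first part, the paper splits off the case $q_T=q_P$. There it uses that $S$ is degenerate, so $\#^0$ occurs exactly once in $P_Q=T_Q$. For $q_T>q_P$ it obtains the common $q$ only at the very end, from the arithmetic progressions. You treat both cases at once, using only $T_Q\fragmentco{0}{q_P}=P_Q$ (from $\mXpref$) and the order-preserving bijection between $\#$-positions and black characters to identify $q$ as the first $\#$ of both strings.

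For the periodicity, the paper checks $q_T\le 2q_P+1$ and applies \cref{fct:periodicity} to $\{0,|Q|,q_T-q_P\}\subseteq\Occ(P_Q,T_Q)$. It concludes that $|Q|$ is a period because it is a multiple of $\gcd(\Occ(P_Q,T_Q))$. You instead read off period $|Q|$ of $T_Q\fragmentco{0}{q_P+|Q|}$ directly from the occurrences at shifts $0$ and $|Q|$. You then extend it through the suffix copy of $P_Q$, which needs only $q_T\le 2q_P$.

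Your version is more elementary, and it spells out the bijection (including why every black character of $T$ is matched by $\mZ$) that the paper compresses into ``there is a unique $q$''. The paper's version is shorter, since \cref{fct:periodicity} is already available.

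Two minor points:
\begin{itemize}
\item An overlap of at least $|Q|$ positions suffices on its own, so you do not need $q_T-q_P$ to be a multiple of $|Q|$ in that step.
\item $\ed(P,Q^\infty\fragmentco{0}{q_P})\le 2k$ only guarantees $q_P\ge m-2k$, not $m-k$. Even with that bound, $q_T\le n+12k\le 3/2\cdot m-16k\le 3/2\cdot q_P-13k\le 2q_P$, so your length bookkeeping goes through with room to spare.
\end{itemize}
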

    \begin{claimproof}
        Now, if $q_T = q_P$, then $S$ is degenerate, meaning that the $0$th component only
        contains $\pi_0^0$ and $\tau_0^0$. By \cref{clm:Sper:match}, we have $P_Q = T_Q$
        and thus we must have $f^{-1}_{\mA}(\pi_0^0) = f^{-1}_{\mZ}(\tau_0^0)$ because
        these are the only positions in $P_Q$ and $T_Q$ carrying $\#^0$.

        Otherwise, when $q_T > q_P$, we wish to apply \cref{fct:periodicity} to $P_Q$ and
        $T_Q$. To do so, we must first show that $q_T \leq 2q_P + 1$. Since the costs of
        $\mA$ and $\mZ$ are at most $2k$ and $12k$ respectively, it follows that $m \leq
        q_P + 2k$ and $q_T \leq n + 12k$. Consequently, we have $q_T \leq n + 12k \leq
        (3/2 \cdot m - 28k) + 12k \leq (3/2 \cdot q_P + 3k - 28k) + 12k \leq 2q_P + 1$.

        This allows us to apply \cref{fct:periodicity} to conclude that $\Delta_{\mXmid} =
        |Q|$ is a period of $P_Q$ and $T_Q$, since $\{\Delta_{\mXpref}, \Delta_{\mXmid},
        \Delta_{\mXsuf}\} \subseteq \Occ(P_Q,T_Q)$ by \cref{clm:Sper:match}.
        Since $\#^0$ appears in both $P_Q$ and $T_Q$ only at positions belonging to a
        single residue class modulo $|Q|$, and $|Q|$ is a period of both strings, the
        occurrences of $\#^0$ in $P_Q$ and $T_Q$ form arithmetic progressions with
        difference $|Q|$.
        Moreover, the maps $f^{-1}_{\mA}$ and $f^{-1}_{\mZ}$ are well-defined and strictly
        increasing on the characters of the $0$th black component, because these
        characters are matched by $\mA$ and $\mZ$, respectively, and both alignments are
        non-crossing.
        Since $\pi_0^0,\ldots,\pi_{m_0-1}^0$ and $\tau_0^0,\ldots,\tau_{n_0-1}^0$
        enumerate the characters of the $0$th black component in order, there is a unique
        $q \in \fragmentco{0}{|Q|}$ such that
        $f^{-1}_{\mA}(\pi_j^0) = j|Q| + q$ and $f^{-1}_{\mZ}(\tau_i^0) = i|Q| + q$ hold
        for all $j \in \fragmentco{0}{m_0}$ and $i \in \fragmentco{0}{n_0}$.
    \end{claimproof}

    Finally, we prove that $S$ captures all $k$-error occurrences.

    \begin{claim}
        Let $T\fragmentco{\ell}{r}$ be a $k$-error occurrence of $P$ in $T$. Then $S$
        captures $T\fragmentco{\ell}{r}$.
    \end{claim}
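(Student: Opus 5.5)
The plan is to show that every $k$-error occurrence $T\fragmentco{\ell}{r}$ has an optimal alignment that matches some character of the $0$th black component of $P$ with a character of the $0$th black component of $T$. Once we have that, a drift argument gives the inequality required by \cref{def:scomplete}, read with threshold $14k$. The case $\bc(\bG_S)=0$ is trivial, so assume $\bc(\bG_S)>0$.

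\emph{Step 1: a perfectly matched copy of $Q$.} Fix an optimal alignment $\mY : P \onto T\fragmentco{\ell}{r}$ of cost at most $k$. Compose $\mY$ with $\mZ^{-1}$ restricted to $T\fragmentco{\ell}{r}$. This gives an alignment $\mY' : P \onto Q^\infty\fragmentco{x}{y}$ of cost at most $13k$. Now invoke the structural characterization of \cite[Section 5]{CKW20} quoted in \cref{sec:ub}, using $|Q|\le m/128k$ and the fact that $Q$ is primitive. It yields an exact occurrence $P\fragmentco{p}{p+|Q|}$ of $Q$ that is matched perfectly both by $\mA$, onto $Q^\infty\fragmentco{z}{z+|Q|}$, and by $\mY'$, onto $Q^\infty\fragmentco{z'}{z'+|Q|}$. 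Primitivity of $Q$ forces $z\equiv_{|Q|} z'$. Perfect matching under $\mY'$ means that $\mY$ matches every character of this copy with a character of $T$, and $\mZ$ matches that character of $T$ with the corresponding position of $Q^\infty\fragmentco{z'}{z'+|Q|}$.

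\emph{Step 2: locate the $0$th component.} Take $q$ from \cref{clm:Sper:match:2}. In the main case $q_T>q_P$, the string $P_Q$ has period $|Q|$ by the proof of \cref{clm:Sper:match:2}, and $P_Q\position{q}=\#^0$. The window $\fragmentco{z}{z+|Q|}$ therefore contains a position $j|Q|+q$ with $P_Q\position{j|Q|+q}=\#^0$, and \cref{clm:Sper:match:2} identifies the matched character $f_\mA(j|Q|+q)$ as $\pi_j^0$. Its image under $\mY$ is a character $T\position{t}$ matched by $\mZ$ with $Q^\infty\position{i|Q|+q}$ for some $i$, using $z\equiv_{|Q|} z'$. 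The string $T_Q$ also has period $|Q|$, so $T_Q\position{i|Q|+q}=\#^0$, and \cref{clm:Sper:match:2} gives $t=\tau_i^0$. Hence $\mY$ matches $P\position{\pi_j^0}$ with $T\position{\tau_i^0}$.

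\emph{Step 3: drift.} From $(\pi_j^0,\tau_i^0)\in\mY$ and \cref{obs:drift} we get $|\tau_i^0-\ell-\pi_j^0|\le k$. Composing the covering alignments of \cref{def:funcover} over $j$ consecutive blocks gives $|(\pi_j^0-\pi_0^0)-(\tau_i^0-\tau_{i-j}^0)|\le w$, so $|\tau_{i-j}^0-\ell-\pi_0^0|\le w+k$. This is the required bound, provided $i-j\in\fragmentco{0}{n_0}$. If $i<j$, argue exactly as in the last case of the proof of \cref{lem:periodhalves}. Either shift to $(\pi_{j-i}^0,\tau_0^0)$, or use $\mXpref$, whose endpoints give $|\tau_0^0-\pi_0^0|\le 14k$; in both subcases $|\tau_0^0-\ell-\pi_0^0|$ is bounded by $w+\Oh(k)\le w+3\cdot 14k$. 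The degenerate case $q_T=q_P$ is handled the same way. There the $0$th component is $\{P\position{\pi_0^0},T\position{\tau_0^0}\}$ with common $Q$-coordinate $q$, and one checks that the copy of $Q$ from Step 1 can be taken to cover coordinate $q$ modulo $|Q|$ at index $0$. Otherwise we fall back on $\mXpref$ as above.

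The main obstacle is Step 1: extracting from \cite{CKW20} a precise statement of the form we need. That statement must apply with our thresholds and with the particular fixed alignments $\mA,\mZ$, rather than with some alignments whose existence the characterization asserts. If the cited characterization only guarantees a copy of $Q$ for some optimal alignments, the fallback is to re-derive it directly. Concretely, $\mA$ and $\mY'$ together make at most $15k$ edits on $P$, which is split into at least $m/|Q|-\Oh(k)\ge 100k$ copies of $Q$, so by pigeonhole some copy is untouched by both alignments. For that copy to be useful, $\mY'$ must align it to a copy of $Q$ starting in the right residue class. This is where primitivity of $Q$ enters: a length-$|Q|$ exact match between $Q$ and a fragment of $Q^\infty$ must start at a multiple of $|Q|$.
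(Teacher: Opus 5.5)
\textbf{Gap in Step 3.} Steps 1--2 are a workable route to the fact that $\mY$ matches some $P\position{\pi_j^0}$ with $T\position{\tau_i^0}$ (one caveat below), but Step 3 does not go through as justified. You claim $|(\pi_j^0-\pi_0^0)-(\tau_i^0-\tau_{i-j}^0)|\le w$ by composing the covering alignments of \cref{def:funcover} over $j$ consecutive blocks. That composition uses one covering alignment per block and per component, so each $\w_S(c)$ is charged $j$ times. It therefore only certifies cost $j\cdot w$; the only bound available for a single block $P_{j'}\onto T_{i'}$ is already $w$ (\cref{prop:asji}), and $j$ can be of order $m/|Q|$. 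This is why the proof of \cref{lem:periodhalves}, which you imitate, composes covering alignments only inside one block and crosses blocks via \cref{prp:seq_match}. That lemma, however, needs some $c'\notin C_S$, and the present claim carries no hypothesis $C_S\ne\fragmentco{0}{\bc(\bG_S)}$. Your subcase $i<j$ (``shift to $(\pi^0_{j-i},\tau^0_0)$'') has the same defect.

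\textbf{How to repair it.} Use the $Q$-coordinates from \cref{clm:Sper:match:2}, which you already invoke in Step 2. Since $\mA$ contains $(0,0)$ and $(\pi_{j'}^0,j'|Q|+q)$, and $\mZ$ contains $(0,0)$ and $(i'|Q|+q,\tau_{i'}^0)$, \cref{obs:drift} gives $|\pi_{j'}^0-j'|Q|-q|\le 2k$ and $|\tau_{i'}^0-i'|Q|-q|\le 12k$. Together with $|\tau_i^0-\ell-\pi_j^0|\le k$ this yields $|\ell-(i-j)|Q||\le 15k$.
\begin{itemize}
\item If $i\ge j$, then $|\tau_{i-j}^0-\ell-\pi_0^0|\le 29k$.
\item If $i<j$, then $0\le \ell\le 15k$, and hence $|\tau_0^0-\ell-\pi_0^0|\le 29k$.
\end{itemize}
Both bounds are within $w+3\cdot 14k$.

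\textbf{Comparison with the paper.} After this repair, your Steps 1--2 amount to a self-contained proof that $\ell$ lies within $\Oh(k)$ of a multiple of $|Q|$. The paper instead imports this fact from \cite[Theorem 5.2]{CKW20} and then finishes with exactly the arithmetic above. Your pigeonhole derivation is a legitimate substitute, provided you count the edits of $\mA$, $\mY$, and $\mZ$ separately. A match in the product $\mY'$ can arise from a deletion in $\mY$ cancelling an insertion in $\mZ^{-1}$, so perfect matching under $\mY'$ alone does not give matches in both $\mY$ and $\mZ$, as Step 2 requires.

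\textbf{Degenerate case.} Your sketch for $q_T=q_P$ is unsupported, but no copy of $Q$ is needed there. We have $\ell\le n+k-m$, $|n-q_T|\le 12k$, and $|q_P-m|\le 2k$, so $\ell\le 15k$. Then $|\tau_0^0-\ell-\pi_0^0|\le|\tau_0^0-q|+|q-\pi_0^0|+\ell$, which is $\Oh(k)$.
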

    \begin{claimproof}
        Since $\bc(\bG_S) > 0$, we must prove that there exists an index $i$ such that
        $|\tau_i^0 - \ell - \pi_0^0| \leq w + 3 \cdot 14k \leq 84k$.
        (Because $\cost(S) \leq 3 \cdot 14k$, applying the weight function from
        \cref{lem:exfuncover} yields $w = 42k$.)

        Let $q$ be as in \cref{clm:Sper:match:2}.
        Since $(0,0), (\pi_0^0,q)\in \mA$ and $\cost(\mA) \leq 2k$, \cref{obs:drift}
        yields $|q-\pi_0^0| \leq 2k$.
        Similarly, since $(0,0), (q,\tau_0^0)\in \mZ$ and
        $\cost(\mZ) \leq 6k$, \cref{obs:drift} yields $|q-\tau_0^0| \leq 6k$. If further
        $q_T > q_P$ holds, then by the periodicity of $P_Q$ and $T_Q$ we have that
        $|i|Q| + q - \tau_i^0| \leq 6k$ for every $i$ such that $i|Q| + q \in
        \fragmentco{0}{q_T}$.

        We begin by observing that since $T\fragmentco{\ell}{r}$ is a $k$-error occurrence
        of $P$ and $\cost(\mZ) \leq 6k$, we have $|m - (r-\ell)| \leq k$ and $|n - q_T|
        \leq 6k$. Consequently,
        \begin{align}
            \ell \leq r + k - m \leq n + k - m \leq q_T + 7k - m.
            \label{eq:constructSper:1}
        \end{align}

        Next, we perform a case distinction.
        First, suppose $q_T = q_P$.
        Via \cref{eq:constructSper:1} we get
        \[
            |\pi_0^0 - \ell - \tau_0^0| \leq |q - \pi_0^0| + |q - \tau_0^0| + \ell \leq 2k
            + 6k + q_T + 7k - m \leq 15k + |q_T - m| \leq 17k,
        \]
        where we use $|q_P - m| = |q_T - m| \leq 2k$ because $\cost(\mA) \leq 2k$. Hence,
        $S$ captures $T\fragmentco{\ell}{r}$.

        Otherwise, we have $q_T > q_P$.
        Here, we use \cite [Theorem 5.2]{CKW20}%
        \footnote{We use \cite[Theorem 5.2]{CKW20}
            with $d = 2$. It shows that for every $p \in \Occ_k^{E}(P,T)$, we have $p \bmod
            |Q| \leq 6k$ or $p \bmod |Q| \geq |Q|-6k$.}
        to get that $|\ell - i|Q|| \leq 6k$ for
        either $i = \floor{\ell/|Q|}$ or $i = \ceil{\ell/|Q|}$.
        In any case, via \cref{eq:constructSper:1} we have
        \[
            i|Q| + q \leq \ell + 2|Q| \leq q_T + 7k - m + 2|Q| \leq q_T + |Q|(2 + 7k) - m < q_T + 128k|Q| - m \leq q_T,
        \]
        where in the last step we use $|Q| \leq m/128k$.
        This means we indeed have $|i|Q|+q - \tau_i^0| \leq 6k$.
        Putting everything together we obtain
        \begin{align*}
            |\pi_0^0 - \ell - \tau_i^0|
             &\leq |(\tau_i^0 - i|Q|-q) - (\ell - i|Q|) - (\pi_0^0 - q)|\\
             &\leq |\tau_i^0 - i|Q|-q| + |\ell - i|Q|| + |\pi_0^0 - q| \leq 6k+6k+2k \leq 14k,
        \end{align*}
        and we conclude that $S$ captures $T\fragmentco{\ell}{r}$.
    \end{claimproof}
    Therefore, $S$ captures all $k$-error occurrences of $P$ in $T$ which concludes this
    proof.
\end{proof}

This allows us to give a more complete proof of \cref{thm:ccompl}.

\setcounter{mtheorem}{0}
\ccompl

\begin{proof}
    We describe a protocol allowing Alice to encode, using $\Oh(n/m\cdot (k
    \log(m|\Sigma|/k)))$ bits, the edit information for all alignments in
    \[S' \coloneqq \{\mX : P \onto T\fragmentco{t}{t'} \mid t,t' \in \fragment{0}{n},
    \edal{\mX}(P, T\fragmentco{t}{t'}) = \ed(P, T\fragmentco{t}{t'}) \leq k\}.\]

    Since $n \geq m \ge k$, we may assume $k \leq m/200$; otherwise, Alice may simply send
    $P$ and $T$ to Bob. Indeed, this uses $\Oh((m+n)\log |\Sigma|)=\Oh(n\log |\Sigma|)$
    bits, whereas $m/200 < k \le m$ implies $n/m \cdot k \log(m|\Sigma|/k) = \Omega(n \log
    |\Sigma|)$.
    We want to argue that we may further restrict ourselves to the case where $S'$
    encloses $T$.

    \begin{claim}\label{clm:split}
        Suppose $n \leq 4/3 \cdot m$ and there is a $k$-error occurrences of $P$ appearing
        a prefix and suffix of $T$.
        If a protocol using $\Oh(k\log(m|\Sigma|/k))$ bits exists that allows Alice to
        encode the edit information for all alignments in $S'$, then such a protocol also
        exists for the general case (if $n \geq 4/3 \cdot m$ or $P$ does not have a
        $k$-error occurrence as a prefix or suffix~of~$T$).
    \end{claim}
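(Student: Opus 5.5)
We reduce the general instance to $\Oh(n/m)$ instances of the special form by a standard block decomposition, as in \cite[Claim 4.31]{KNW24}. Set $\Delta \coloneqq \lfloor m/3 \rfloor$, and for each $r \in \fragmentco{0}{\ceil{n/\Delta}}$ consider the window $T_r \coloneqq T\fragmentco{r\Delta}{\min(n, r\Delta + m + \Delta + k)}$. Every $k$-error occurrence $T\fragmentco{t}{t'}$ has $t' - t \le m+k$. It is therefore fully contained in $T_r$ for $r = \floor{t/\Delta}$. The number of windows is $\Oh(n/m)$, and each window has length at most $m + \Delta + k \le 4/3\cdot m$ for $k \le m/200$, up to adjusting $\Delta$ by a constant factor.

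Next, we trim each window so that the hypotheses of the assumed protocol hold. Within $T_r$, Alice computes the leftmost starting position $t_r$ of a $k$-error occurrence that starts in $\fragmentco{r\Delta}{(r+1)\Delta}$, and the rightmost ending position $t'_r$ among all $k$-error occurrences contained in $T_r$ with start in that range. If no such occurrence exists, she sends a single flag bit. Otherwise, she uses $T'_r \coloneqq T\fragmentco{t_r}{t'_r}$. Then $P$ has $k$-error occurrences as a prefix and as a suffix of $T'_r$, and $|T'_r| \le |T_r| \le 4/3\cdot m$.

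Each relevant optimal alignment, namely one whose start lies in $\fragmentco{r\Delta}{(r+1)\Delta}$, aligns $P$ onto a fragment of $T'_r$. Hence the set $S'$ computed for the instance $(P, T'_r, k)$ contains all of them, with positions shifted by $t_r$. Alignments of $P$ onto fragments of $T'_r$ are exactly alignments onto fragments of $T$ lying inside $T'_r$, and optimality is intrinsic to the pair of strings, so no spurious alignments are reported. Alice sends $t_r$ and $t'_r$ using $\Oh(\log m)$ bits relative to $r\Delta$, together with the assumed $\Oh(k\log(m|\Sigma|/k))$-bit message for $(P, T'_r, k)$. Since $k \ge 1$ and $\log(m|\Sigma|/k) \ge \log(200)$, the $\Oh(\log m)$ overhead is absorbed only if $\log m = \Oh(k\log(m|\Sigma|/k))$. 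This holds because $k\log(m/k) \ge \log m$ for $1 \le k \le m/2$.

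Bob decodes each block, shifts the reconstructed alignments by $t_r$, and takes the union. Summing over the $\Oh(n/m)$ windows gives $\Oh(n/m\cdot k\log(m|\Sigma|/k))$ bits in total. The delicate points are two: the choice of window overlap, which must guarantee containment of every occurrence while keeping length at most $4/3\cdot m$, and the observation that the trimmed text still contains every occurrence assigned to the block. Both follow from $t'-t \le m+k$ and the extremal choice of $t_r$ and $t'_r$.
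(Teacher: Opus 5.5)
Your proposal is correct and follows the paper's argument: you split $T$ into $\Oh(n/m)$ windows of length at most $4/3\cdot m$, trim each window to the extreme start and end of its $k$-error occurrences so that $P$ occurs both as a prefix and as a suffix, and run the assumed protocol per window. The differences are minor. You assign occurrences by the block containing their start rather than by containment in the window, and you explicitly pay for the offsets and the empty-window flags, which the paper leaves implicit. Also, as you suspected, $\Delta=\lfloor m/3\rfloor$ must be shrunk, for example to $\lfloor m/3\rfloor-k$ as in the paper, to get $|T_r|\le 4/3\cdot m$.
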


    \begin{claimproof}
        Divide $T$ into $\Oh(n/m)$ contiguous blocks of length $m/3-k = \Omega(m)$ (with
        the last block potentially being shorter).
        For the $i$-th block starting at position $b_i \in \fragmentco{0}{|T|}$, define
        $B_i = \fragment{b_i}{b_i + 4/3 \cdot m} \cap \fragment{0}{|T|}$ and $S_i = \{\mX
        : P \onto T\fragmentco{t}{t'} \mid t,t' \in B_i, \edal{\mX}(P,
    T\fragmentco{t}{t'}) = \ed(P, T\fragmentco{t}{t'}) \leq k\}$.

        We show that there exists such a protocol using $\Oh(k\log(m|\Sigma|/k))$ bits for
        the case where instead of $T$ we consider an arbitrary $B_i$.

        Therefore, let $i$ be arbitrary.
        If $S_i = \emptyset$, then Alice does not need to send anything.
        On the other hand, if $|S_i| \geq 1$, Alice uses the protocol from the assumption
        of the claim on $S_i$ and $T\fragmentco{\ell_i}{r_i}$, where $\ell_i = \min \{t
        \mid \mX : P \onto T\fragmentco{t}{t'} \in S_i\}$ and $r_i = \max \{t' \mid \mX :
        P \onto T\fragmentco{t}{t'} \in S_i\}$.
        This is indeed possible, because $|B_i| \leq 4/3 \cdot m$ and there is a $k$-error
        occurrence appearing as a prefix and suffix.

        Notice that this division scheme does not lose any alignment $\mX \in P \onto
        T\fragmentco{t}{t'}$, or in other words, $S' = \bigcup_{i} S_i$. Indeed, letting
        $b_i$ be such that $t$ is contained in the $i$-th block, we have that $t' \leq
        t+m+k \leq b_i+m+k+m/3-k \leq b_i + 4/3 \cdot m$ and $\fragmentco{t}{t'} \subseteq
        B_i$.
    \end{claimproof}

    Henceforth, $n \leq 4/3 \cdot m$ and there are $k$-error occurrences of $P$ appearing
    as both a prefix and a suffix of $T$.
    Since $k \leq m/200$, we have $28k \leq 7m/50 < m/6$, and therefore $n \leq 4/3 \cdot
    m \leq 3/2 \cdot m - 28k$.
    Alice now constructs a set $S$.
    We distinguish two cases:
    \begin{description}
        \item[If $|\floor{\OccE_k(P,T)/k}| \leq 963068 \cdot k$ or $k \geq 2\log m$] Alice
            constructs $S$ using \cref{alg:construction_S}.
            For this $S$, the set $\{\sE_{P,T}(\mX) : \mX \in S\}$ together with the
            starting/ending points of all $\mX \in S$ can be encoded in
            $\Oh(k\log(m|\Sigma|/k) + |S|\log m)$ bits.
            We proceed to argue that in this case $|S|\log m = \Oh(k \cdot \log(m/k))$.
            To this end, we further distinguish two cases.
            \begin{itemize}
                \item $k \geq 2 \log m$. Then,
                    \[
                        k \log (m/k) \geq 2\log m \cdot \log(m/(2\log m)) = 2\log^2 m -
                        2\log m \cdot \log (2\log m) \geq \Omega(\log^2 m),
                    \]
                    where we use the monotonicity of $x\log(m/x)$ for $0 < x < m/e$.
                    Using $|S| = \Oh(\log m)$ from \cref{alg:construction_S}, we get $|S|
                    \log m = \Oh(\log^2 m) \le \Oh(k \log (m/k))$.
                \item $k < 2 \log m$ and $|\floor{\OccE_k(P,T)/k}| \leq 963068 \cdot k$.
                    From \cref{alg:construction_S} we get $|S| =
                    \Oh(|\floor{\OccE_k(P,T)/k}|) \le \Oh(k)$. This means that $|S| \log m
                    = \Oh(k \log m)$.
                    Further note that $k < 2 \log m$ implies that $\log(m/k) \ge \log m -
                    \log (2\log m)= \Omega(\log m)$, so $|S| \log m = \Oh(k \log m) \le
                    \Oh(k \log (m/k))$.
            \end{itemize}
        \item[If $|\floor{\OccE_k(P,T)/k}| > 963068 \cdot k$] then the second case of
            \cite[Theorem 5.1]{CKW20}%
            \footnote{\cite[Theorem 5.1]{CKW20} states that given
                a pattern $P$ of length $m$, a text $T$ of length $n$, and a positive integer
                $k \leq m$, then at least one of the two following holds: either the $k$-error
                occurrences of $P$ in $T$ satisfy $|\floor{\OccE_k(P,T)/k}| \leq 642045 \cdot n/m
                \cdot k$; or there is a primitive string $Q$ of length $|Q| \leq m/128k$ that satisfies
                $\edp{P}{Q} < 2k$.}
            must hold.
            Thus, there is a primitive string $Q$ of length $|Q| \leq m/128k$ that
            satisfies $\edp{P}{Q} < 2k$.
            This allows Alice to construct a set $S$ using \cref{lem:constructSper}.
            By \cref{lem:constructSper}, the set $\{\sE_{P,T}(\mX) : \mX \in S\}$ together
            with the starting/ending points of all $\mX \in S$ can again be encoded in
            $\Oh(k\log(m|\Sigma|/k))$ bits.
    \end{description}
    Overall, we obtain a set $S$ of $K$-edit alignments, where $K=k$ in the first case and
    $K=14k$ in the second case, which succinctly encloses $T$ and satisfies $C_S =
    \fragmentco{0}{\bc(\bG_S)}$ or captures all $k$-error occurrences of $P$ in $T$, where
    the notion of \emph{capturing} is interpreted with threshold~$K$.

    Now, if $\bc(\bG_{S}) = 0$, then the information that Alice sends to Bob is
    $\{\sE_{P,T}(\mX) : \mX \in S\}$ together with the starting/ending points of all $\mX
    \in S$.
    As already argued before, this allows Bob to fully reconstruct $P$ and $T$.

    Otherwise, if $\bc(\bG_{S}) > 0$, then Alice constructs the function $\w_S$ from
    \cref{alg:construction_w}, applied with threshold $K$, of total weight $w =
    \Oh(\cost(S)) \le \Oh(k)$, and additionally sends the encoding of $\{(c,
    T\position{\tau_0^c}):c\in C_S\}$ to Bob, where $C_S$ is the black cover from
    \cref{prp:encode_simple_funcover}, also applied with threshold $K$.
    This allows Bob to construct the strings $P^\#$ and $T^\#$ as in \cref{prp:subhash}.
    If $C_S = \fragmentco{0}{\bc(\bG_S)}$, then $P^\#=P$ and $T^\# = T$ and we are set.

    Otherwise, by \cref{prp:subhash}\eqref{it:ed_subhash:i}, we have $\ed(P,
    T\fragmentco{t}{t'}) \leq \ed(P^\#, T^\#\fragmentco{t}{t'})$ for all $t,t' \in
    \fragment{0}{n}$. Moreover, every $k$-error occurrence is captured by $S$, so
    \cref{prp:subhash}\eqref{it:ed_subhash:ii}, again applied with threshold $K$, implies
    that the edit distance and edit information of all optimal alignments of cost at most
    $k$ are preserved. Therefore, computing the answer over $P^\#$ and $T^\#$ instead of
    $P$ and $T$ still yields the required information.

    Note that, by \cref{prp:encode_simple_funcover}, sending $\{(c,
    T\position{\tau_0^c}):c\in C_S\}$ requires only $\Oh(k \log (1+m|\Sigma|/k)) \le \Oh(k
    \log (m|\Sigma|/k))$ additional bits, where the last equality follows from $k \leq
    m/200$. This concludes the proof.
\end{proof}

\end{document}